\DeclareMathOperator*{\argmin}{arg\,min}
\newtheorem{lemma}{Lemma}
\newtheorem{remark}{Remark}
\newtheorem{theorem}{Theorem}
\newtheorem{corollary}{Corollary}
\newtheorem{definition}{Definition}
\newcommand{\Ebb}{\mathbb{E}}
\newcommand{\Rbb}{\mathbb{R}}
\newcommand{\Ncal}{\mathcal{N}}
\newcommand{\Hcal}{\mathcal{H}}
\newcommand{\Dcal}{\mathcal{D}}
\newcommand{\Xcal}{\mathcal{X}}
\newcommand{\Fcal}{\mathcal{F}}
\newcommand{\Vcal}{\mathcal{V}}
\newcommand{\Ccal}{\mathcal{C}}
\newcommand{\vbf}{\mathbf{v}}
\newcommand{\xbf}{\mathbf{x}}
\newcommand{\Xbf}{\mathbf{X}}
\newcommand{\hbf}{\mathbf{h}}
\newcommand{\Hbf}{\mathbf{H}}
\newcommand{\zbf}{\mathbf{z}}
\newcommand{\Zbf}{\mathbf{Z}}
\newcommand{\ybf}{\mathbf{y}}
\newcommand{\Ybf}{\mathbf{Y}}
\newcommand{\Wbf}{\mathbf{W}}
\newcommand{\ebf}{\mathbf{e}}
\newcommand{\dbf}{\mathbf{d}}
\newcommand{\Mbf}{\mathbf{M}}
\newcommand{\Ical}{\mathcal{I}}
\newcommand{\Lcal}{\mathcal{L}}
\newcommand{\Ucal}{\mathcal{U}}
\newcommand{\Ocal}{\mathcal{O}}
\newcommand{\Scal}{\mathcal{S}}
\newcommand{\tgamma}{\tilde{\gamma}}
\newcommand{\Pbb}{\mathbb{P}}
\newcommand{\wbf}{\mathbf{w}}
\newcommand{\Ubf}{\mathbf{U}}
\newcommand{\Vbf}{\mathbf{V}}
\newcommand{\Sigmabf}{\boldsymbol \Sigma}
\newcommand{\gammabf}{\boldsymbol \gamma}
\newcommand{\alphabf}{\boldsymbol \alpha}
\newcommand{\thetabf}{\boldsymbol \theta}
\newcommand{\Thetabf}{\boldsymbol \Theta}
\newcommand{\epsilonbf}{\boldsymbol \epsilon}
\newcommand{\etabf}{\boldsymbol \eta}
\newcommand{\thetabft}{\thetabf^{(t)}}
\begin{document}

\twocolumn[
\icmltitle{Rethinking Neural vs. Matrix-Factorization Collaborative Filtering: the Theoretical Perspectives}



\icmlsetsymbol{equal}{*}

\begin{icmlauthorlist}
\icmlauthor{Da Xu}{wmt}
\icmlauthor{Chuanwei Ruan}{insta}
\icmlauthor{Evren Korpeoglu}{wmt}
\icmlauthor{Sushant Kumar}{wmt}
\icmlauthor{Kannan Achan}{wmt}

\end{icmlauthorlist}

\icmlaffiliation{wmt}{Walmart Labs, Sunnyvale, California, USA.}
\icmlaffiliation{insta}{Instacart, San Francisco, California, USA. The work was done when the author was with Walmart Labs.}

\icmlcorrespondingauthor{Da Xu}{daxu5180@gmail.com}

\icmlkeywords{Neural collaborative filtering, Matrix factorization, Transducive machine learning, Inductive machine learning, Inverse propensity weighting, Data missing-not-at-random}
\vskip 0.3in
]



\printAffiliationsAndNotice{}  

\begin{abstract}
The recent work by \citet{rendle2020neural}, based on empirical observations, argues that matrix-factorization collaborative filtering (MCF) compares favorably to neural collaborative filtering (NCF), and conjectures the dot product's superiority over the feed-forward neural network as similarity function. In this paper, we address the comparison rigorously by answering the following questions: 1. what is the limiting expressivity of each model; 2. under the practical gradient descent, to which solution does each optimization path converge; 3. how would the models generalize under the inductive and transductive learning setting. Our results highlight the similar expressivity for the overparameterized NCF and MCF as kernelized predictors, and reveal the relation between their optimization paths. We further show their different generalization behaviors, where MCF and NCF experience specific tradeoff and comparison in the transductive and inductive collaborative filtering setting. Lastly, by showing a novel generalization result, we reveal the critical role of correcting exposure bias for model evaluation in the inductive setting. Our results explain some of the previously observed conflicts, and we provide synthetic and real-data experiments to shed further insights to this topic.
\end{abstract}

\section{Introduction}
\label{sec:introduction}
Neural collaborative filtering (NCF) and matrix factorization collaborative filtering (MCF) are the two major approaches for learning the user-item embeddings and combining them into similarity scores. From the early stage of collaborative filtering (CF), the factorization methods are actively explored thanks to their interpretability and computation advantage. \cite{koren2009matrix,koren2008factorization,rendle2010factorization}. In recent years, the success of deep learning in various other domains has motivated the adaptation of universal model architectures, such as the feedforward neural network (FFN), to CF tasks. A prevalent practice is to use FFN to combine the user and item embeddings, which is referred to as the NCF \cite{he2017neural}.   
As a consequence, the comparison between MCF and NCF was led by whether the similarity measurement should be computed by the dot product, or learnt by the FFN in a data-adaptive fashion. \citet{he2017neural} claims that NCF is superior because it is capable of expressing MCF under a particular parameterization. Since then, a significant proportion of new models use FFN as the default similarity function, and they are often reported with more competitive performances than MCF \cite{zhang2019deep}. Nevertheless, a recent paper argues based on empirical evidence that after training, FFN is unable to reconstruct the dot-product computation, and NCF can be outperformed by the MCF \cite{rendle2020neural}. 

It is not of surprise that the controversy arises since our understandings of both methods, specially NCF, are still somewhat limited from the theoretical standpoint. In particular, \citet{he2017neural} does not consider how training affects the parameters' dynamics, as well as how it alters the model expressivity in comparison with the dot-product formulation. Likewise, the conjectures in \citet{rendle2020neural} regarding the trained FFN are based on empirical observations so the conclusions may not generalize to other settings. On the other hand, while MCF has been studied intensively under the matrix completion framework with many theoretical guarantees \cite{candes2009exact, srebro2004maximum,srebro2005rank,candes2009exact}, it is nevertheless unclear how the inductive bias of using gradient descent training can affect the performance.

In addition to the choice of model, which is the main focus of the previous literature, we find that the testing performance is closely connected to how the tasks are designed, including the learning setting (inductive or transductive) as well as the relation between the empirical and exposure distribution. Towards that end, we summarize several major questions that prompt the ongoing debate: 

\textbf{1.} what is the expressive power of (what functions can be efficiently expressed by) each model, both at initialization and during training;

\textbf{2.} among the potentially many global optimum (due to the overparameterization), which solutions will be located by the practical gradient descent optimization; 

\textbf{3.} how would the training performances generalize to the unseen data which can be either obtained as random holdouts of a fixed finite population (the transductive setting) or i.i.d samples according to some underlying distribution (the inductive setting).


We point out that the third question is particularly complicated for CF. When viewing CF as a matrix completion problem, the training and testing data are constructed via random splits of the finite matrix index set, and are thus fixed before training. On the other hand, when treating the testing data as new i.i.d samples, the exposure bias \cite{chen2020bias} causes the discrepancy between: how the training data are sampled in practice (often according to the empirical data distribution), and which testing distribution to generalize (users are more likely to interact with items that are exposed to them). As a consequence, the same model can show various generalization behaviors in different settings.  

Answering each question requires in-depth understandings for the underlying modelling, optimization and generalization mechanisms. This paper also aims to explain how the different factors contribute to the controversies in the previous literature. The recent developments in the neural tangent kernel (NTK) provide powerful tools to study the limiting behavior of overparameterized models \cite{jacot2018neural}. We leverage them to first study the expressivity of NCF and MCF and reveal their relations via the lens of kernelized predictors. By studying the local updates in the space of parameters, we then provide a unified analysis on the inductive bias of gradient descent \cite{soudry2018implicit} for both NCF and MCF. We find that both optimization paths converge in direction to some norm-constraint max-margin solutions. While our result is a restatement for the FFN \cite{gunasekar2018implicit}, it is novel for MCF and connects the two cases.
By further deriving the (provably tight) norm-constraint capacities of NCF and MCF, we show their different generalization behaviors under the transductive and inductive learning setting. We also reveal a novel generalization result that accounts for the discrepancy between the empirical data and exposure distribution. We conclude our contributions as follow.

\textbf{1.} We reveal the limiting expressivity of NCF and MCF as kernelized predictors under a specific type of kernel that conforms to the collaborative filtering principle.

\textbf{2.} We provide a unified analysis that shows for NCF and MCF the gradient descent converge to the corresponding max-margin solutions.

\textbf{3.} We study and compare the transductive and inductive generalization behaviors for NCF and MCF, and provide a novel generalization result for the inductive setting that accounts for the impact of exposure.

\textbf{4.} We provide experiments on synthetic and real-data to examine and illustrate our results in practical applications.

Compared with \citet{rendle2020neural}, we reveal the complex dynamics of model expressivity, optimization and generalization that underlies the comparison between MCF and NCF. Our results rigorously justify the conclusions drawn under each perspective and problem instance.





\section{Related Work}
\label{sec:related-word}
The kernel induced by deep learning models, specially the neural tangent kernel (NTK), has been identified as the key component to study the limiting behavior of overparameterized neural networks \cite{daniely2016toward,jacot2018neural,yang2019scaling,chizat2019note} and connect the gradient flow training to the reproducing kernel Hilbert space (RHKS) \cite{arora2019fine,mei2019mean}. 

Nevertheless, the limiting-width (scale) results may not fully explain the success of training practical models to zero loss and still being able to generalize \cite{zhang2016understanding}. Towards that end, considerable efforts are spent studying the optimization geometry and the inductive bias of gradient descent \cite{soudry2018implicit}, and the convergence results have been shown for various FFN models \cite{lyu2019gradient,gunasekar2018implicit,chizat2020implicit}. However, the analogous results for matrix factorization are mostly established for the the matrix sensing problem under the squared loss \cite{gunasekar2018implicit,arora2019implicit,li2018algorithmic}, and are therefore not application to the implicit recommendation setting\footnote{The users express their interests implicitly, e.g. via click or not click, instead of providing explicit ratings.}, which is the focus of this paper. 

Finally, while the generalization results of FFN and matrix factorization have been studied previously \cite{wei2019regularization,srebro2004maximum, srebro2005rank}, their settings and assumptions are different from the collaborative filtering tasks (see Section \ref{sec:generalization} for detail), and are therefore less informative for analyzing NCF and MCF. On the other hand, despite the increasing recognition of adjusting for the exposure bias when training and evaluating recommenders \cite{xu2020adversarial,schnabel2016recommendations,liang2016modeling,yang2018unbiased,gilotte2018offline}, it is unclear how the common re-weighting approach affects generalization. This line of research also relates to the data missing-not-at-random (MNAR) domain, where the inverse propensity weighting method is heavily investigated in the context of recommender systems \cite{yang2018unbiased,saito2020unbiased}. 

\section{Notations and Background}
\label{sec:background}
To be consistent with the reference work of \citet{rendle2020neural}, we also adopt the implicit feedback setting which is more practical for the modern recommender systems. 

We use $u \in \Ucal$ and $i \in \Ical$ to denote the user and item, and use $\Dcal = \{(u,i) \, | \, u \in \Ucal, i \in \Ical\}$ to denote the collected user-item interactions. The vectors, random variables, and matrices and denoted by the lower-case bold-font, upper-case, and upper-case bold-fond letters. The vector $\ell_2$ norm, matrix Frobenius norm and nuclear norm are given by $\|\cdot\|_2$, $\|\cdot\|_F$, and $\|\cdot\|_*$. The sigmoid function is given by $\sigma(\cdot)$.

We use $y_{u,i} \in \{-1, +1\}$ to denote the implicit user feedback, $o_{u,i} \in \{0,1\}$ to indicate if the item has been exposed to the user, and $P_{O}$ to denote the exposure distribution: $p(O_{u,i}=1)$. The empirical data distribution is denoted by $P_n$. The classification model (predictor) is selected from $\Fcal = \{f(\thetabf;\, \cdot) \, | \, \thetabf \in \Thetabf\}$. We use $F(\thetabf)$ to denote the \emph{decision boundary} associated with $f(\thetabf;\, \cdot)$, and use $\ell(\cdot)$ to denote the loss function. Both MCF and NCF use embeddings to represent user and items\footnote{The user and item embeddings are allowed to have different dimensions in NCF. Here, we assume they have the same dimensions to be comparable to MCF.}, which we denote by $\zbf_u, \zbf_i \in \Rbb^d$.
Throughout this paper, we assume the models are overparameterized, so analytically there exists parameterizations $\thetabf^*$ that perfectly separates the data: $\forall (u,i) \in \Dcal: y_{u,i} f(\thetabf^*;(u,i)) \geq 0$.

\subsection{Dot product and learnt similarity}
\label{sec:NCF-MCF}

The general formulation of MCF is given by\footnote{Some previous work reformulate MCF as: $\beta_u + \beta_i + \langle \zbf_u , \zbf_i \rangle$, which is a special case of the general formulation.}:
\[
f^{\text{MCF}}(u,i) := f\big((\Zbf_U, \Zbf_I);(u,i)\big) = \langle \zbf_u , \zbf_i \rangle,
\]
where $\Zbf_U$ and $\Zbf_I$ are matrix collections of the user and item embeddings. When viewed as the matrix completion problem, MCF relies only on the dot product to compute the similarity and recover the observed user-feedback matrix $\Ybf \in \Rbb^{|\Ucal| \times |\Ical|}$ by minimizing: $\Lcal \big (\Zbf_u, \Zbf_i \big ) = \frac{1}{|\Dcal|}\sum_{(u,i)\in \Dcal} \ell \big( \big\langle \Ybf_{u,i} , \big[\Zbf_U \Zbf_I^{\intercal}\big]_{u,i} \big\rangle \big)$.

The NCF, on the other hand, employs a feed-forward neural network $\phi(\thetabf^{\text{NN}};\, \cdot)$ to learn the similarity of the user and item by first combining their embeddings via addition or concatenation\footnote{The original paper of \citet{he2017neural} suggests using the concatenation, but in our experiments we find adding the embeddings sometimes provide superior performances so we study them both.}: $f^{\text{NCF-a}}(u,i) := \phi\big(\thetabf^{\text{NN}};\zbf_u+\zbf_i\big)$ and $f^{\text{NCF-c}}(u,i) := \phi(\thetabf^{\text{NN}};[\zbf_u, \zbf_i]\big)$. Without loss of generality, we assume the following multi-layer percepton (using the addition $\xbf:=\zbf_u+\zbf_i$ as an example): 
\[\phi(\thetabf^{\text{NN}};\xbf) = \Wbf_1 \tilde{\sigma}(\Wbf_2 \xbf), \, \Wbf_1 \in \Rbb^{1\times d_1}, \, \Wbf_2 \in \Rbb^{d_1\times d},\] 
where $\tilde{\sigma}(\cdot)$ is the ReLU activation. The risk is then given by: $\Lcal\big(\theta^{\text{NN}},\Zbf_u, \Zbf_i\big) = \frac{1}{|\Dcal|}\sum_{(u,i)\in \Dcal} \ell\big(y_{u,i}, f^{\text{NCF}}(u,i) \big)$.

\subsection{Transductive and inductive learning for collaborative filtering}
\label{sec:trans-induc}

Transductive and inductive learning differ primarily in whether the learnt patterns shall generalize to a specific testing data dependent on the training data, or the samples with respect to some distribution of the population from which the training data is also drawn \cite{vapnik1982estimation,olivier2006semi}. For CF in practice, the two learning settings may appear in different scenarios, and each setting can be found suitable by particular tasks. 

Specifically for collaborative filtering, transductive learning treats the training and testing $(u,i)$ pairs as obtained in advance via a random split of the finite $\Dcal$, which is equivalent to \emph{sampling without replacement} and thus the dependency between the training and testing data. Compared with the inductive learning whose testing samples are drawn from some unknown distribution over the $(u,i)$ indices, transductive learning aims at a specific subset of $(u,i)$ pairs. The traditional CF tasks, such as movie rating and item-to-item recommendation, are better characterized by transductive learning because their goal is to predict well on a specific subset of entries whose indices are also known in advance.

However, for the negative-sampling-based training and evaluation approach, which often applies to the modern large-scale problems, the data is characterized by an incoming stream of i.i.d samples. As such, inductive learning appears to be the more appropriate setting, yet there still exists a critical gap caused by the exposure bias: the underlying distribution is often shifted by how the items were exposed to users. Inversely weighting each sample by its exposure probability gives the unbiased estimator as if the exposure was \emph{uniformly as random} \cite{schnabel2016recommendations}, i.e.
\begin{equation*}
\begin{split}
& \Ebb_{P_{O}}\Big[\frac{1}{|\Dcal|}\sum_{(u,i)\in\Dcal}\frac{\ell(y_{u,i}, \hat{y}_{u,i})}{p(O_{u,i}=1)}\Big] \\
& = \frac{1}{|\Dcal|}\sum_{(u,i)\in\Dcal} \frac{\ell(y_{u,i}, \hat{y}_{u,i})}{p(O_{u,i}=1)}p(O_{u,i}=1) \\
& = \Ebb_{\hat{P}_{\text{unif}}}[\ell(Y, \hat{Y})]. 
\end{split}
\end{equation*}
The uniform exposure scenario is ideal because the training and testing will not be affected by the previous exposure, thus leading to fair evaluation and comparison. 
Clearly, the inverse weighting approach can effectively correct for the distribution shift caused by the exposure bias if $p(O_{u,i}=1)$ is known in advance. We will discuss in Section \ref{sec:summary} the limitation of this presumption. Another view of the reweighting method is to think of the testing distribution as $P_\text{unif}$, which is uniform on $\Dcal$, and the training distribution as $P(O_{u,i}=1)$, which corresponds to the previous exposure.


\subsection{Experiment Setup}
\label{sec:experiment}
We use the \emph{MovieLens-100K}\footnote{https://grouplens.org/datasets/movielens/100k/} dataset for the real-data and semi-synthetic experiments. Following \citet{rendle2020neural} and \cite{he2017neural}, we first covert the data to implicit feedback by treating the positive ratings as clicks.
The primary reasons for using this relatively smaller dataset (which consists of the 100,000 ratings from 1,000 users on 1,700 movies) is the concerning computation feasibility:

1. to show the limiting expressivity and convergence results, we often need to compute the ground truth by solving convex optimization problems exactly and run the gradient descent optimization for several thousands of epochs;

2. a recent paper points out that ranking metrics computed via subsampling can be misleading in practice \cite{rendle2019evaluation}, so we conduct full scans for metric computation.

The MCF and NCF are implemented with \emph{Tensorflow} using the stochastic gradient descent optimizer and log loss: $\ell(u) = \log(1+\exp(-u))$. All the ranking metrics, i.e. ranking AUC, top-k hitting rate (HR@k) and normalized discounted cumulative gain (NDCG@K), are computed after ranking all the movies for the user. Following \cite{rendle2020neural,he2017neural}, for each user, we use the last interaction for testing and the rest for training and validation, and equip each positive sample with four negative samples unless specified otherwise. The detailed settings for each section are relegated to the appendix. 
All the reported results are computed from ten repetitions.

\section{The Limiting Expressivity and Collaborative Filtering Kernel}
\label{sec:kernel}
The key challenge for directly comparing the expressivity of dot product and learnt similarity is that the parameters are constantly changing during the gradient descent update: $\thetabf^{(t)} = \thetabf^{(t-1)} - \eta \nabla \Lcal(\thetabf^{(t-1)})$. The different training dynamics and the randomness in initialization prohibit the shoulder-to-shoulder comparison, unless we can find some key factors invariant to training and decisive for model expressivity. A feasible direction is toward the infinite-width domain\footnote{The width of each layer tends to infinity, i.e. $d, d_1 \to \infty$ for the models in Section \ref{sec:NCF-MCF}. Here, we assume $d_1=d$, w.l.o.g.}, where the overparameterized models are observed to converge to zero training loss while their parameters hardly vary \cite{allen2019convergence,du2019gradient}.
The invariant factor that is recognized to fully describe the training dynamics is the neural tangent kernel (NTK) \cite{jacot2018neural,arora2019exact}:
\begin{equation*}
    K\big( (u,i), (u',i')  \big) = \big\langle \nabla f\big(\thetabf; (u,i)\big) , \nabla f\big(\thetabf; (u',i')\big) \big\rangle,
\end{equation*}
which is a fixed quantity that depends only on the model architecture as $d\to\infty$. We refer to the model expressivity under the infinite-width NTK as its limiting expressivity.

To see the critical role of a fixed NTK for model expressivity, note that by Taylor expansion, infinite-width MF and FFN behave like a linear function around their \emph{scaled initializations} (i.i.d with $N(0, \alpha/d)$ for some constant $\alpha$):
\begin{equation*}
\begin{split}
f\big(\thetabf;(u,i)\big) & = f\big(\thetabf^{(0)};(u,i)\big) + \\
&  \Big \langle \thetabf - \thetabf^{(0)} , \nabla f\big(\thetabf^{(0)};(u,i)\big) \Big \rangle + \Ocal\big(\sqrt{1/d}\big).
\end{split}
\end{equation*}

Analytically, we can always use reparameterization to discard the intercept term $f\big(\thetabf^{(0)};(u,i)\big)$ by finding $f(\thetabf;\, \cdot) = g(\thetabf_1;\, \cdot) - g(\thetabf_1;\, \cdot)$ and letting $\thetabf_1^{(0)} = \thetabf_2^{(0)}$. Therefore, the limiting expressivity of $f\big(\thetabf; (u,i)\big)$ is well-approximated by $\Big \langle \thetabf - \thetabf^{(0)} , \nabla f\big(\thetabf^{(0)};(u,i)\big) \Big \rangle$ as $d \to \infty$. It is now a straightforward analogy that $\nabla f\big(\thetabf^{(0)};(u,i)\big)$ plays the role of the feature lift that maps all the $(u,i)$ pair to their representations in the reproducing kernel Hilbert space (RHKS) induced by the NTK, i.e. $K: (\Ucal, \Ical) \times (\Ucal, \Ical) \to \Rbb$ \cite{shawe2004kernel}. Consequently, under the infinite-width limit, applying gradient descent in the original function space is equivalent to finding the optimal RHKS predictor.

We point out that the RHKS view of MCF and NCF during optimization is a natural continuation of their expressive power at initialization. At initialization, the dot-product formulation of MCF directly characterizes the Gram matrix for the user-item similarity kernel \cite{paterek2007improving}, and standard FFN is known to concentrate to a Gaussian process characterized by its covariance kernel \cite{neal1996bayesian,yang2019scaling}. In the sequel, we wish to understand the relation between the NTK of MCF and NCF, and find out which solutions are optimal in the corresponding RHKS. Interestingly, both MCF and NCF lead to a specific type of kernel that we believe interprets collaborative filtering in principle. We summarize the results as below.

\begin{theorem}
\label{thm:expressivity}
Let $\Dcal_{\text{train}} \subseteq \Dcal$ be the training data. Under the exponential loss $\ell(u) = \exp(-u)$ or log loss $\ell(u) = \log(1+\exp(-u))$, by applying gradient descent with small learning rate and scaled initializations, the decision boundary for infinite-width MCF and NCF connects to the minimum-norm RHKS predictor via:
\begin{equation*}
\label{eqn:thm1}
\begin{split}
 & \lim_{t\to\infty} \lim_{d\to\infty} F\Big(\frac{\thetabf^{(t)}}{\|\thetabf^{(t)}\|_2}\Big) \overset{\text{stationary points of}}{\to} \\ &   \Big\{\argmin_{f:(\Ucal, \Ical)\to \Rbb} \big\|f\big\|_{K_{\text{CF}}} \, s.t. \, y_{u,i}f(u,i) \geq 1, \, \forall (u,i)\in \Dcal_{\text{train}} \Big\},
\end{split}
\end{equation*}
where $\|\cdot\|_{K}$ is the induced RKHS norm and $K_{\text{CF}}$ is the \textbf{collaborative filtering kernel} parameterized via:
\[
K_{\text{CF}}\big((u,i) , (u',i') \big) = a + b\cdot 1 [i=i'] + c \cdot 1[u=u'],
\]
for some $a, b, c \geq 0$ when $(u,i) \neq (u',i')$. 
\end{theorem}

\begin{figure}[htb]
    \centering
    \includegraphics[width=0.95\linewidth]{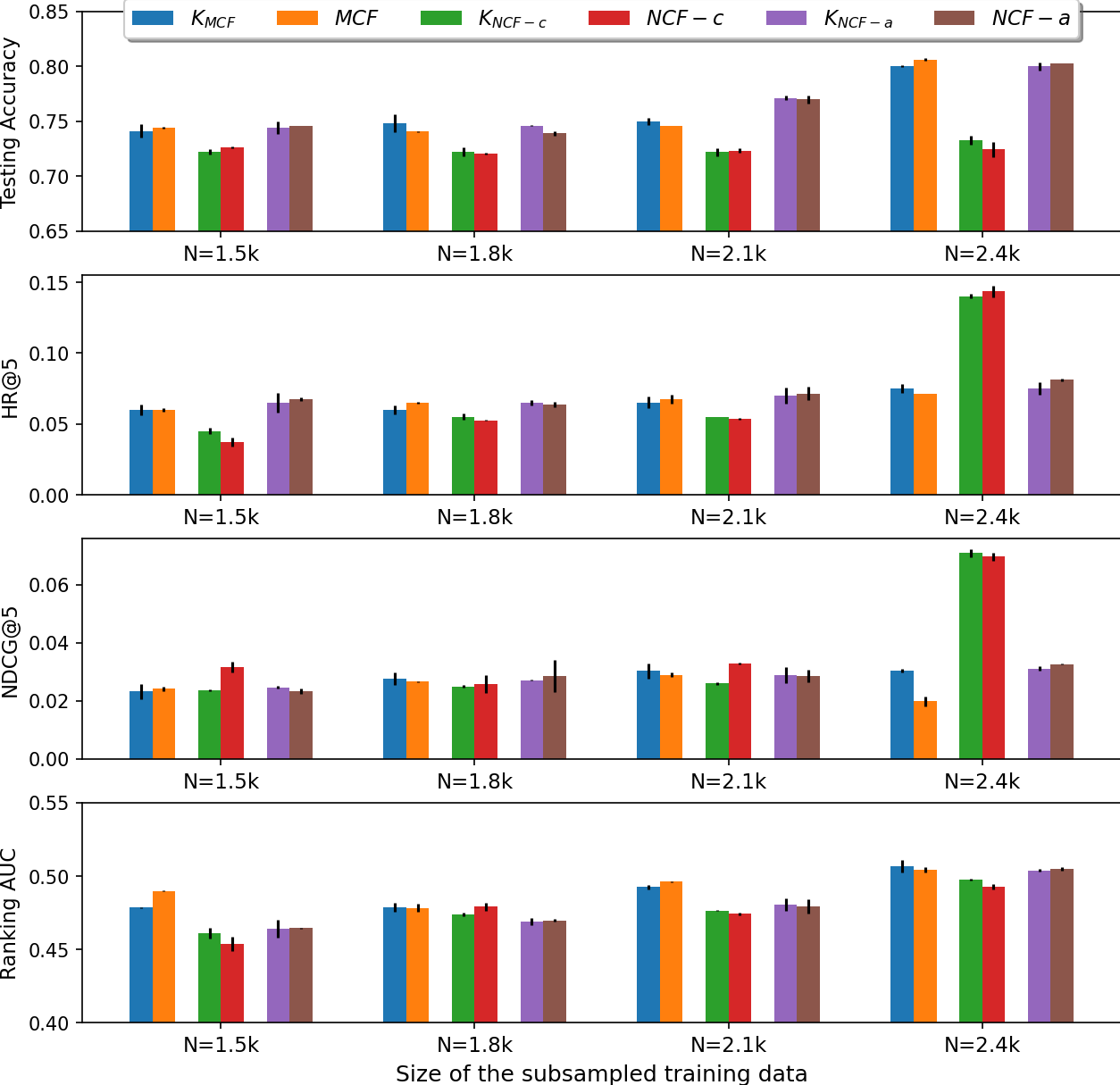}
    \caption{The comparison between the RKHS predictors in Theorem \ref{thm:expressivity} and their corresponding MCF or NCF methods on the testing data. We use \emph{LIBSVM} \cite{chang2011libsvm} for the kernelized support vector machine (SVM). The subsampling is performed by sampling 200 items according to the popularity, and sampling 200 users uniformly at random. For each $(u,i)$ pair with a positive feedback, we sample different number of negative instances without replacement to obtain the desired sample size of $N$. We set $d=128$ and $d_1=64$ for MCF and NCF to approximate the large-width setting as the dataset is very small. To be consistent with the theory, we use the standard scaled normal initialization and train 3,000 epochs under the constant learning rate $0.01$.}
    \label{fig:kernel-cf}
\end{figure}

For instance, under the standard scaled initialization $N(0, 1/d)$, we have $a=0$, $b=c=1$ for $K_{\text{MCF}}$; $a=1/\pi$, $b=c=\frac{1}{2} - \frac{1}{2\pi}$ for $K_{\text{NCF-c}}$, and $a=1/\pi$, $b=c=\frac{1}{2} - \frac{(2-\sqrt{3})}{2\pi}$ for $K_{\text{NCF-a}}$. The details and proofs are relegated to the appendix. 

The significance of Theorem \ref{thm:expressivity} is that we now have a dual charactierzation of model expressivity for the \emph{trained} MCF and NCF via the lens of RKHS predictors. Furthermore, $K_{\text{CF}}$ provide a natural kernel interpretation for the CF principle by emphasizing how predictions are made from the collaborative user or item signal. Recall from the standard SVM arguments that the limiting classifier in Theorem \ref{thm:expressivity} is given by:
\[
f(u',i') = \sum_{(u,i)\in \Dcal_{\text{train}}}\alpha_{u.i} y_{u,i} K_{\text{CF}}\big((u,i) , (u',i') \big),
\]
where $\alpha_{u,i}\geq 0$ are the corresponding dual variables \cite{drucker1997support}.
Specifically for $K_{\text{CF}}$, the value of $a$, which gives the kernel value when $u\neq u'$ and $i\neq i'$, captures the global (population) bias; $b$ and $c$ capture the relative importance\footnote{We illustrate in the appendix on how the relative magnitude of $a,b,c$ can be changed by the initializations.} of neighboring data points who share the same user or item. In Figure \ref{fig:kernel-cf}, we compare the testing results of large-width MCF, NCF and their $K_{\text{CF}}$-induced SVM on the sampled subsets of the MovieLens data. The subsampling is necessary because the kernel method has a space complexity of $\Ocal\big(|\Dcal| \times |\Dcal|\big)$. We observe highly comparable results from the trained CF and their corresponding RHKS predictor on testing data, which suggests their similar limiting performances \footnote{The results does not imply that MCF and NCF can be effectively replaced in practice because the kernel methods are impractical even for moderate recommendation dataset.}.

Therefore, regardless of the distinguished formulations of dot product and learnt similarity, both MCF and NCF are capable of expressing the same type of RKHS predictor in their limit. The only difference lies in their specific kernel parameterization, which immediately implies that each model's limiting performance can depend on the specific data distribution. In particular, \citet{bartlett2002rademacher} show that the inductive generalization performance of the RKHS predictor in Theorem \ref{thm:expressivity} satisfies:
\[ 
\text{Test error} \leq \text{Train error} + \Ocal\bigg(\sqrt{\frac{ \alphabf^{\intercal} K_{\text{CF}} \alphabf \cdot \text{trace}(K_{\text{CF}})}{|\Dcal_{\text{train}}|}} \bigg),
\]
where $\alphabf$ is the vector of dual variables which are data-dependent. Therefore, it is impossible to draw an exclusively comparison between the two methods, because each dataset can cause a particular $\alphabf$ that lead to different comparisons of $\alphabf^{\intercal} K_{\text{CF}} \alphabf$. 
Consequently, our results suggest that the comparison between MCF and NCF should not be concluded sightly just based on their expressivity, at least not in the limiting sense. We still need to understand how the parameterization interacts with gradient descent under the practical model size and initialization, which are the topics of the next section.

\section{Optimization Paths of MCF and NCF under Gradient Descent}
\label{sec:max-margin}
The key observation that motivates this section is that when trained with gradient descent on the MovieLens data without using any explicit regularization (Figure \ref{fig:transductive}), the classification and ranking metric on testing data keep increasing while the training loss decreases to zero. Our observation is consistent with \citet{zhang2016understanding} and the follow-up analysis of \citet{soudry2018implicit,gunasekar2019implicit}, that gradient descent carries the inductive bias of selecting the global optimum that tends to generalize well. Since this phenomenon may hold for any model size and initialization as pointed out by \citet{soudry2018implicit}, the underlying mechanism may provide a more complete understanding of the trained MCF and NCF in addition to the previous section. 

Also, in Figure \ref{fig:transductive}, we show another phenomenon that the norms of parameter matrices diverge as the loss converges to zero. It is not a surprising result since the gradient-descent training is conducted under the following conditions\footnote{The more refined discussions for each condition, as well as their implications, are provided in the appendix.}:

\textbf{C1}. The loss function has the exponential-tail behavior such as the exponential loss and log loss;

\textbf{C2}. Both the MCF and NCF in Section \ref{sec:NCF-MCF} are $L$-homogeneous, i.e. $f(\thetabf;\,\cdot) = \|\thetabf\|_2^L \cdot f\big(\thetabf/\|\thetabf\|_2;\,\cdot\big)$ for some $L>0$, and have some smoothness property;

\textbf{C3}. The data is separable with respect to the overparameterized MCF and NCF (introduced in Section \ref{sec:background}).

For clarity purpose, we consider the exponential loss $\ell(u) = \exp(-u)$. Setting aside the technical details, the heuristic explanation is that with a proper learning rate, gradient descent on smooth objectives converges to the stationary points (gradient vanishes). Since the gradient is given by:
\[ 
\frac{1}{|\Dcal|} \sum_{(u,i)\in \Dcal} -y_{u,i} \exp\big(-y_{u,i} f\big(\thetabf; (u,i)\big)\big) \nabla f\big(\thetabf; (u,i)\big),
\]
disregarding the corner cases where the gradient terms are linear dependent, the necessary condition for a zero gradient is: $\forall (u,i)\in\Dcal: y_{u,i}f\big(\thetabf; (u,i)\big) \to \infty$, which implies that $\|\thetabf\|_2 \to \infty$ since $f$ is homogeneous.
The norm divergence plays an important role here because the loss function is now dominated by: $\min_{(u,i)\in \Dcal} y_{u,i} f\big(\thetabf; (u,i)\big)$, which corresponds to the worst margin, due to the fact that the loss function has an exponential behavior. Therefore, as the parameter norm diverges, the direction of the optimization path resembles that of the hard-margin SVM (shown as below) since $\Dcal$ is separable.
\[
\max_{\thetabf: \|\thetabf\|_2\leq 1} \min_{(u,i)\in \Dcal} y_{u,i} f\big(\thetabf; (u,i)\big),
\] 
The same reasoning holds for MCF under the nuclear norm, since it relates to the $\ell_2$ norm via: $\|\Xbf\|_* = \frac{1}{2}\min_{\Ubf \Vbf^{\intercal}=\Xbf}(\|\Ubf\|_F^2 + \|\Vbf\|_F^2)$. 

Providing the rigorous arguments for the above heuristic is nontrivial since $f(\thetabf;\, \cdot)$ can be non-convex for NCF, so the connection between stationarity and optimality can be tricky. We refer to a similar idea in \citet{lyu2019gradient}, which bridges the discrepancy by showing a specific type of constraint qualification that leads to the KKT points. We summarize our results as below.

\begin{theorem}
\label{thm:optimization}
Under condition C1, C2 and C3, with a small constant learning rate, $\thetabf^{\text{NCF}}:=[\thetabf^{\text{NN}}, \Zbf_U, \Zbf_I]$ converges in direction, i.e. $\lim_{t\to\infty} \thetabf^{(t)} / \|\thetabf^{(t)}\|_2$, to the KKT point of:
\begin{equation*}
    \min \, \|\thetabf\|_2 \quad s.t. \quad y_{u,i}f^{\text{NCF}}\big(\thetabf; (u,i)\big) \geq 1, \, \forall (u,i)\in \Dcal_{\text{train}}.
\end{equation*}
The predictor of MCF: $\Xbf = \Zbf_U \Zbf_I^{\intercal}$ converges in direction, i.e. $\lim_{t\to\infty} \Xbf^{(t)} / \|\Xbf^{(t)}\|_*$, to the stationary point of:
\begin{equation}
\label{eqn:nuc-svm}
    \min \, \|\Xbf\|_* \quad s.t. \quad y_{u,i}\Xbf_{u,i} \geq 1, \, \forall (u,i)\in \Dcal_{\text{train}}.
\end{equation}
\end{theorem}

Theorem \ref{thm:optimization} extends the results in \citet{soudry2018implicit,lyu2019gradient,gunasekar2019implicit} to matrix factorization setting, and complements \citet{gunasekar2018implicit,arora2019implicit} who study matrix sensing under the squared loss. We also provide a unified proof in the appendix. 

\begin{figure}[hbt]
    \centering
    \includegraphics[width=\linewidth]{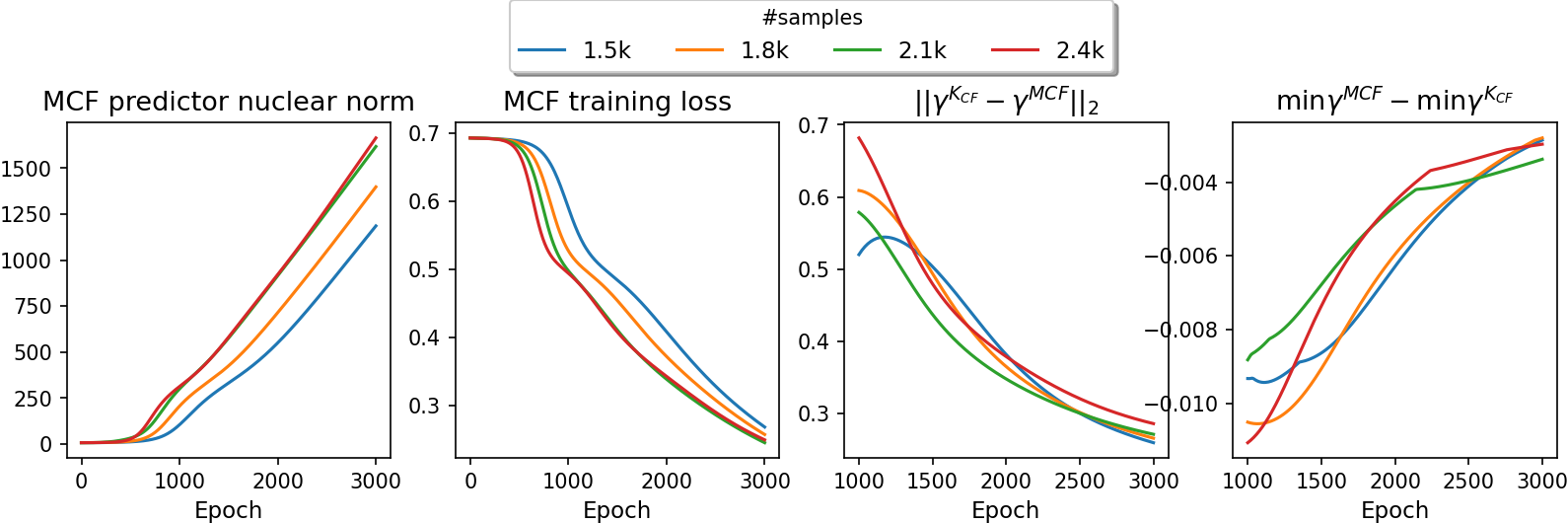}
    \caption{We adopt the sub-sampled dataset described in Figure \ref{fig:kernel-cf}. We first use the \emph{CVX solver} \cite{grant2014cvx} to obtain the exact solutions of the convex programming of e.q. (\ref{eqn:nuc-svm}). In contrast to the optimization setup of Figure \ref{fig:kernel-cf}, here we consider the more practical scenario by using the unscaled $N(0,0.1)$ initialization, the moderate width of $d=32$ and the learning rate of $0.1$. We do not show the error bar in the line charts for visualization clarity.}
    \label{fig:SVM_MCF}
\end{figure}

The optimization paths described in Theorem \ref{thm:optimization} reveal an intrinsic difference between MCF and NCF in terms of which minimum-norm solutions are tended by the gradient descent in a practical setting. Since the corresponding hard-margin SVM problem for MCF (state in (\ref{eqn:nuc-svm})) is convex, we are able to to obtain its exact global minimum, and use it as an oracle to compare with the actual behavior of MCF during training. We particularly focus on the margins of the normalized (by nuclear norm) predictor, which we denote by $\gammabf$. In Figure \ref{fig:SVM_MCF}, we first verify that the nuclear norm divergence indeed accompanies the convergence of training loss. Crucially, both $\|\gammabf^{K_{\text{CF}}} - \gammabf^{\text{MCF}}\|_2$ and $ \min \gammabf^{\text{MCF}} - \min \gammabf^{K_{\text{CF}}}$ converge (by decreasing and increasing) to zero, as anticipated by the conclusions of Theorem \ref{thm:optimization}.

Compared with Section \ref{sec:kernel}, the results here reveal optimization path as a more likely factor that distinguishes MCF and NCF in practice. The nuclear-norm and $\ell_2$-norm constraint optimization paths imply the different capacities of the predictor, whose impact on the generalization performance is among the core investigation in the learning theory. We provide the detailed characterization and their comparisons in the next section.

\section{Generalization for Transdutive and Inductive Collaborative Filtering}
\label{sec:generalization}
Before we state the main results, we point out that our setting covers a broad range of settings for MCF and NCF. Firstly, we consider the neural network in NCF to be any $q-$layer FFN, i.e. parameterized by $[\Wbf_1,\ldots,\Wbf_q, \Zbf_U, \Zbf_I]$, with ReLU activation. Secondly, we allow the explicit $\ell_2$-norm regularizations, which are sometimes found useful to improve model performance under finite-step training. Together with the results from Section \ref{sec:max-margin}, we expect that both MCF and NCF will tend to the norm-constraint solutions, either by training toward convergence without the explicit norm regularization or by training under the explicit norm regularizations. 

As we mentioned in Section \ref{sec:trans-induc}, the transductive and inductive CF requires studying different generalization mechanisms, so MCF and NCF may compare differently in each setting. For inductive CF, the exposure distribution often decides which $(u,i)$ pairs are more likely to appear in the training data, so its discrepancy with the idea uniform-exposure scenario under $P_{\text{unif}}$ is also a critical factor for a fair comparison of MCF and NCF. For transductive CF, the goal is to bound the classification error on the given $\Dcal_{\text{test}}$, i.e.  $\text{Err}_{\Dcal_{\text{test}}}(f):=\sum_{(u,i) \in \Dcal_{\text{test}}} 1\big[ y_{u,i} f(u,i) \leq 0 \big]$, while for the inductive CF we wish to bound $\Ebb_{(u,i) \sim P_{\text{unif}}} 1\big[ y_{u,i} f(u,i) \leq 0 \big]$ under the ideal uniform-exposure scenario that we explained in Section \ref{sec:background}. For many real-world problems, and in the MovieLens dataset that we are experimenting with, the number of items far exceeds the number of users. So we assume $|\Ucal| < |\Ical|$ without loss of generality. 

For the transductive CF, the testing data is often made proportional to the training data so here we assume $|\Dcal_{\text{test}}| = \beta |\Dcal_{\text{train}}|$ for some $0<\beta<1$. Notice that $\Dcal_{\text{test}} \cap \Dcal_{\text{train}} = \emptyset$ due to the sampling without replacement. Also, $\Dcal = \Dcal_{\text{test}}\cup \Dcal_{\text{train}}$. We first state the generalization result for transductive CF in terms of the classification error.

\begin{theorem}[\textbf{Transductive CF}]
\label{thm:transductive}
Let $f^{\text{MCF}},\,f^{\text{NCF}}: \Ucal\times\Ical\to\Rbb$ be the predictor for MCF and NCF.
Suppose that $\max_{(u,i)\in \Dcal}\|\zbf_u\|_2 + \|\zbf_i\|_2 \leq B_{\text{NCF}}$ for NCF with concatenation, $\max_{(u,i)\in \Dcal}\|\zbf_u+\zbf_i\|_2 \leq B_{\text{NCF}}$ for NCF with addition, and $\|\Wbf_i\|_F \leq \lambda_i$ for $i=1,\ldots,q$. Also suppose for MCF that $\big\|\Zbf_U\Zbf_I^{\intercal}\big\|_{*}\leq \lambda_{\text{nuc}}$. 
Then with probability at least $1-\delta$ over the random splits of $\Dcal$, for any $\big\{f^{\text{NCF}}(u,i) \, \big | \, (u,i)\in\Dcal\big\}$ and $\gamma>0$:
\begin{equation*}
\begin{split}
    & \text{Err}_{\Dcal_{\text{test}}}(f^{\text{NCF}}) \leq \sum_{(u,i) \in \Dcal_{\text{train}}} 1\big[ y_{u,i} f^{\text{NCF}}(u,i) \leq \gamma \big] + \\
    & \quad \Ocal\bigg( \frac{(1+\beta)\sqrt{q}B_{\text{NCF}} \prod_{i=1}^q \lambda_i}{\gamma \beta |\Dcal_{\text{train}}|^{1/2}}  \bigg) + \Ocal\bigg( \sqrt{\frac{\log 1/\delta}{|\Dcal_{\text{test}}|}} \bigg),
\end{split}
\end{equation*}
and for any $\big\{f^{\text{MCF}}(u,i) \, \big | \, (u,i)\in\Dcal\big\}$ and $\gamma>0$:
\begin{equation*}
\begin{split}
    & \text{Err}_{\Dcal_{\text{test}}}(f^{\text{MCF}}) \leq \sum_{(u,i) \in \Dcal_{\text{train}}} 1\big[ y_{u,i} f^{\text{MCF}}(u,i) \leq \gamma \big] + \\
    & \quad \Ocal\bigg( \frac{(1+\beta)\sqrt[4]{\log |\Ucal|} \sqrt{|\Ical|}\cdot \lambda_{\text{nuc}}}{\gamma \beta |\Dcal_{\text{train}}|}  \bigg) + \Ocal\bigg( \sqrt{\frac{\log 1/\delta}{|\Dcal_{\text{test}}|}} \bigg).
\end{split}
\end{equation*}
\end{theorem}

Theorem \ref{thm:transductive} gives the first NCF generalization result in the transductive setting, and while the generalization of matrix sensing has been studied in such as \citet{srebro2005rank} and \citet{candes2009exact}, their results are either too loose, or require assuming a particular distribution over the entries. We leverage the more recent results from random matrix theory to derive more informative bounds \cite{bandeira2016sharp,tropp2015introduction}. The proofs are deferred to the appendix. 

The critical implication of Theorem \ref{thm:transductive} is that MCF is the more effective method for transductive CF. Even though the overall asymptotic rate is still controlled by the slack term $\Ocal\Big( \sqrt{\log \frac{1}{\delta} \big / |\Dcal_{\text{test}}|} \Big)$ for both MCF and NCF, we observe a much faster linear decay with respect to $|\Dcal_{\text{train}}|$ in the model complexity term (the second term on RHS of the generalization bounds) of MCF. Since we assume the parameter norms are controlled, it means the advantage of MCF will be more significant as we collect more observations for the same set of users and items. The superiority of MCF for transductive CF is also justified by the experiments, where it outperforms NCF in terms of all the ranking metrics as the gradient descent converges to zero training error (Figure \ref{fig:transductive}). Our result also explains some of the previous success of MCF in such as movie recommendation where the task resembles transductive CF.   

\begin{figure}[hbt]
    \centering
    \includegraphics[width=\linewidth]{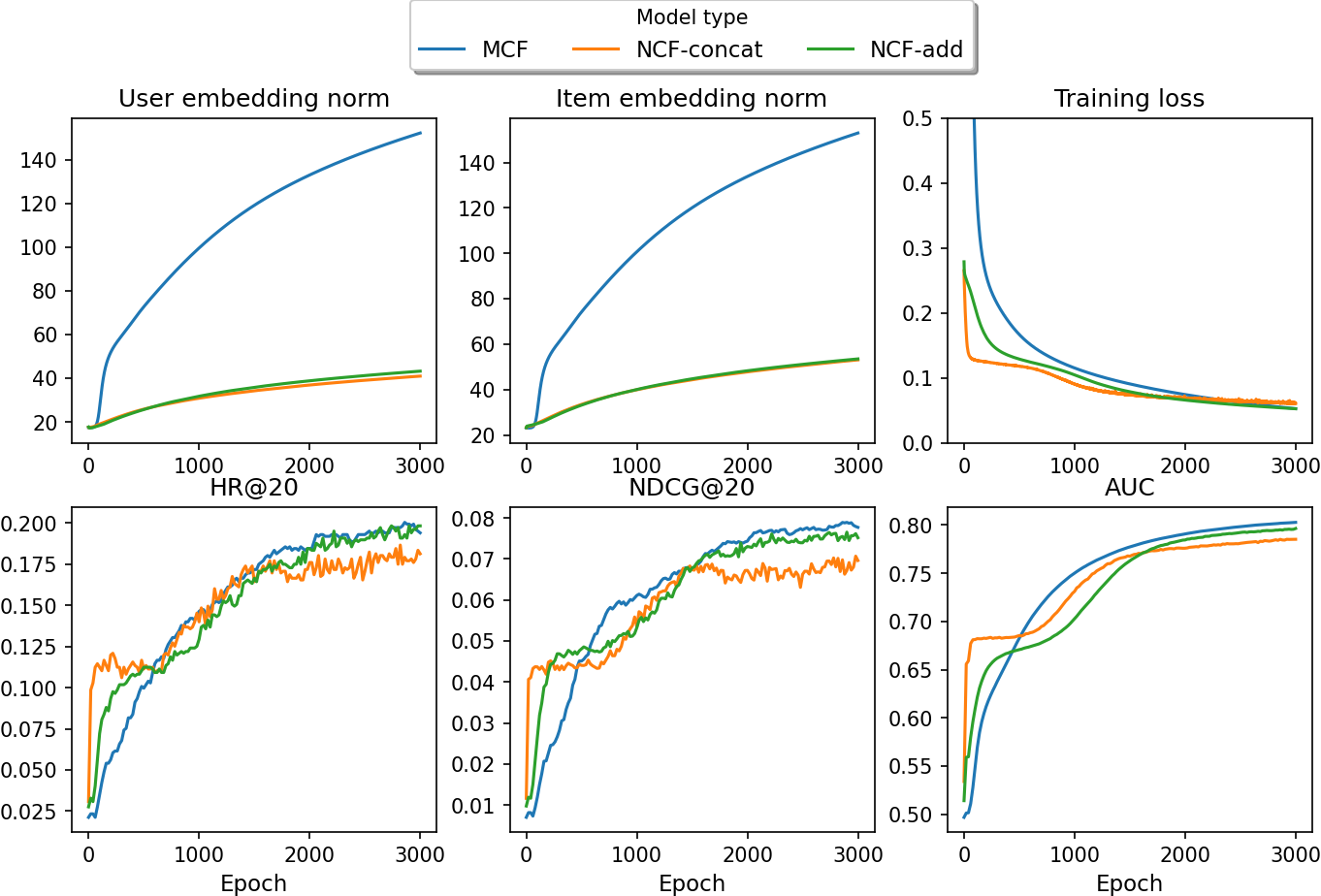}
    \caption{Analysis and model performances for transductive CF using the MovieLens data. We use: $d=32$, $d_1=16$, initializations of $N(0,0.1)$ and learning rate as $0.1$. All the ranking metrics are computed on $\Dcal_{\text{test}}$. The perturbations in the HR and NDCG plots are caused by averaging over ten repetitions. The details are provided in the appendix.}
    \label{fig:transductive}
\end{figure}

The generalization of inductive CF, on the other hand, requires taking account of the discrepancy between the hypothetical $P_\text{unif}$ for the uniform exposure scenario, and the actual exposure distribution $P_O$. As we discussed earlier, using the samples from $P_\text{unif}$ for testing essentially means we want the exposure to be made uniformly at random, so each $(u,i)$ pair has an equal probability of being exposed. It eliminates the exposure bias during testing (evaluation), and is realistic for the modern recommender where the testing samples are often drawn randomly. Therefore, the empirical training risk needs to be adjusted such that it unbiasedly estimate the desired testing risk under $P_{\text{unif}}$. Otherwise, there is no guarantee for generalization. Towards that end, we employ the reweighting method introduced in Section \ref{sec:NCF-MCF} that effectively corrects the exposure bias of the training data. The weight for each $(u,i)$ pair is therefore given by $P_\text{unif}(u,i)/P(O_{u,i}=1) = 1/P(O_{u,i}=1)$, and the testing risk is given by: $\Ebb_{(u,i)\sim P_{\text{unif}}}1\big[ y_{u,i} f(u,i) \leq 0 \big]$, which we denote by the shorthand: $\text{Err}_{P_{\text{unif}}}(f)$. We show the generalization bounds for MCF and NCF in the following two theorems. For brevity, we let $n=|\Dcal_{\text{train}}|$.


\begin{theorem}[\textbf{Inductive NCF}]
\label{thm:inductive-NCF}
Assume the same setting from Theorem \ref{thm:transductive}. Let $\Dcal_{\text{train}}$ be the training data drawn i.i.d according to $P_{\text{unif}}$. Consider the Peasrson $\chi^2$-divergence of two distributions $P$ and $Q$: $D_2(P \| Q) = \int \big(\big(\frac{dP}{dQ}\big)^2 - 1 \big) dQ$. Then with probability at least $1-\delta$, for all $\gamma>0$:
\begin{equation*}
\begin{split}
    & \text{Err}_{P_{\text{unif}}}(f^{\text{NCF}}) \leq \sum_{(u,i) \in \Dcal_{\text{train}}} \frac{1}{P_O(u,i)} 1\big[ y_{u,i} f^{\text{NCF}}(u,i) \leq \gamma \big] \\
    & \quad + \Ocal\bigg( \frac{\sqrt{q}B_{\text{NCF}} \prod_{i=1}^q \lambda_i \cdot \sqrt{D_2(P_{\text{unif}} \,\|\, P_O) + 1}}{\gamma n^{1/2}} \bigg) + \varepsilon,
\end{split}
\end{equation*}
where $\varepsilon = \sqrt{\frac{\log\log_2(4\sqrt{2}/\gamma)}{n}} + \sqrt{\frac{\log(1/\delta)}{n}}$.
\end{theorem}

\begin{theorem}[\textbf{Inductive MCF}]
\label{thm:inductive-MCF}
Following Theorem \ref{thm:inductive-NCF}, we now consider another divergence between $P$ and $Q$: $D_1(P\|Q) =  \int \big(\frac{dP}{dQ} \big) dP$. For all $\gamma>0$, it holds with probability at least $1-\delta$ that:
\begin{equation*}
\begin{split}
    & \text{Err}_{P_{\text{unif}}}(f^{\text{MCF}}) \leq \sum_{(u,i) \in \Dcal_{\text{train}}} \frac{1}{P_O(u,i)} 1\big[ y_{u,i} f^{\text{MCF}}(u,i) \leq \gamma \big] \\
    & \, + \Ocal\bigg( \frac{\sqrt{D_1(P_{\text{unif}} \,\|\, P_O) \big((|\Ucal| + |\Ical|)\lambda_{\text{nuc}} \big)\log 9n}}{\gamma n^{1/2}} \bigg) + \varepsilon,
\end{split}
\end{equation*}
where $\varepsilon = \Ocal\Big(\frac{\log(1/\delta) + (|\Ucal| + |\Ical|)\lambda_{\text{nuc}}\log 9n}{n}\Big)$.
\end{theorem}

The generalization results for correcting distribution shift via reweighting are novel and of interest beyond this paper. The proof of Theorem \ref{thm:inductive-NCF} resembles what we show in our concurrent work of \citet{Xu2021importance}, and we employ the classical "double sampling" trick and a covering number argument to prove Theorem \ref{thm:inductive-MCF}. Note that we use different types of divergence in the bounds of NCF and MCF, which is a matter of technical necessity. Nevertheless, they lead to the same conclusions as we discuss below.

\begin{remark}
The bounds in Theorem \ref{thm:transductive} and \ref{thm:inductive-NCF} are tight up to constants, and the $\log n$ rate in Theorem \ref{thm:inductive-MCF} might be improved, which we discuss in detail in the appendix. The takeaway is that those bounds provide a sound theoretical basis for comparing the models' generalization guarantees. However, we point out that the bounds are not comparable across transductive and inductive setting because the meanings of generalization are different. It is possible to obtain post-hoc generalization bounds, i.e. plug in the trained parameters' norms, using the technique in \citet{koltchinskii2002empirical}, which is beyond the scope of this paper.
\end{remark}

We first observe from Theorem \ref{thm:inductive-NCF} and \ref{thm:inductive-MCF} that in the inductive setting, the rates of generalization are different for NCF and MCF in the asymptotic regime: the NCF has a more favorable $\sqrt{\frac{1}{n}}$ rate. Crucially, in the more interested finite-sample regime, MCF still has an explicit dependency on $|\Ucal|$ and $|\Ical|$ even when $\lambda_{\text{nuc}}$ is controlled. Therefore, regardless of the rate, MCF's model complexity still exceeds NCF by the factor of least $\sqrt{|I|}$, which is significant for modern applications. Since MCF and NCF can achieve a similar empirical loss term (the first term on RHS) as they are both overparameterized, NCF will provide better generalization performance in the inductive setting.

Furthermore, both Theorem \ref{thm:inductive-NCF} and Theorem \ref{thm:inductive-MCF} reveals the divergence between $P_{\text{unif}}$ and $P_O$ as an amplifying factor of the model complexity. In other words, the model with a higher complexity will be more difficult to generalize when the target distribution $P_{\text{unif}}$ is further apart from $P_O$. It also implies that the generalization performance is sensitive to the exposure, emphasising the importance of using the correct $P_O$ for evaluation. If the exposure is misspecified by such as using $P_{n}$, i.e. not correcting for the exposure bias at all, the evaluation results may be misleading. In what follows, we carry out semi-synthetic experiments by controlling the exposure mechanism to illustrate our conclusions.

\begin{figure}[htb]
    \centering
    \includegraphics[width=\columnwidth]{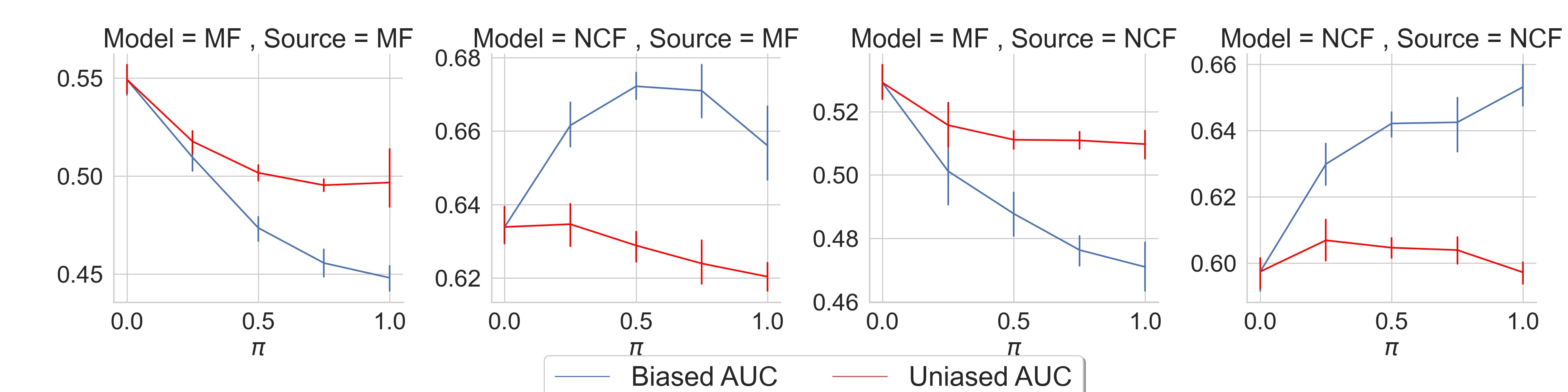}
    \caption{The ranking AUC on the testing data when evaluated using the correct exposure (unbiased evaluation), and treating $P_{\text{unif}}$ as the exposure (biased evaluation). We experiment with $\pi \in \{0,0.25, 0.5, 0.75, 1\}$. In the plot titles, the \emph{source} indicates the model type of $g_{\text{exp}}$ and $g_{\text{rel}}$, and the \emph{model} indicates which CF model we use. The patterns of the unbiased AUC justifies our theoretical argument that it is more difficult to generalize for both NCF and MCF when $\pi$ gets larger, i.e. the discrepancy increases. The experiment details are provided in appendix.}
    \label{fig:inductive}
\end{figure}





To preserve the inductive bias of the MovieLens dataset when generating synthetic data, we use the user and item embeddings learnt from the original dataset for the designed exposure and relevance model. In particular, we generate the click data according to : $p(Y_{u,i}=1) = p(R_{u,i}=1)\cdot p(O_{u,i}=1)$,
where $R_{u,i}$ indicates the relevance. Since we have access to the explicit rating scores from the MovieLens dataset, we first train the relevance model $g_{\text{rel}}(u,i)$ as a regression task. We then train the exposure model $g_{\text{exp}}(u,i)$ by treating each rating event as an exposure, and let the designed exposure model be given by: $P_{\text{design}}: p(O_{u,i}=1)=\sigma\big(g_{\text{exp}}(u,i)\big)$.
Finally, we tune the relevance distribution via: $p(R_{u,i}=1) = \sigma\big(g_{\text{rel}}(u,i) - \mu\big)^{\rho}$, and the goal is to finding the $\mu$ and $\rho$ such that the populational  relevance matches that of the original dataset. We end up with $\mu=3$, $\rho=2$.

For fair comparisons, we experiment using both $f^{\text{MCF}}$ and $f^{\text{NCF}}$ for training $g_{\text{rel}}$ and $g_{\text{exp}}$ so the generating mechanism would not bias toward one of the CF methods. We introduce the hyper-parameter $\pi \in [0,1]$ to control the deviation from $P_O$ to $P_{\text{unif}}$ by using the mixture of: 
\[
P_O = \pi P_{\text{design}} + (1-\pi) P_{\text{unif}}
\]
The idea is to use $\pi$ as a proxy to examine the outcome's sensitivity to $D(P_{\text{unif}}\, \|\, P_O)$ and the degree of misspecification for the exposure model. The various evaluation results are provided in Figure \ref{fig:inductive}. We see that in the biased evaluation where the reweighting is not used, the testing performances drastically change as we increase $\pi$. We point out that this scenario often happens in practice where the exposure bias is not corrected during testing. The relative lift of NCF to MCF, which is shown in Figure \ref{fig:lift}, also suffers from significant perturbations as we increase $\pi$. In contrast, when using the unbiased estimation under the correct reweighting, both the testing metrics and model comparisons are much more consistent, as we show in both Figure \ref{fig:inductive} and \ref{fig:lift}. We conclude from the results that:

\textbf{1.} in the inductive CF setting that we consider, NCF outperforms MCF as suggested by Theorem \ref{thm:inductive-NCF} and \ref{thm:inductive-MCF};

\textbf{2.} when $P_O$ is misspecified by such as$P_{\text{unif}}$, the biased evaluation is very sensitive to the their divergence so the model comparison results can be misleading. 


In this section, we rigorous show for MCF and NCF their generalization behaviors for the transductive and inductive CF. Our results suggest MCF is favorable to transductive CF, while NCF exhibits better guarantees for inductive CF. We further reveal exposure as a critical factor for correctly evaluating and comparing models in the inductive setting. All the conclusions are justified by the numerical results.

\begin{figure}[hbt]
    \centering
    \includegraphics[width=\columnwidth]{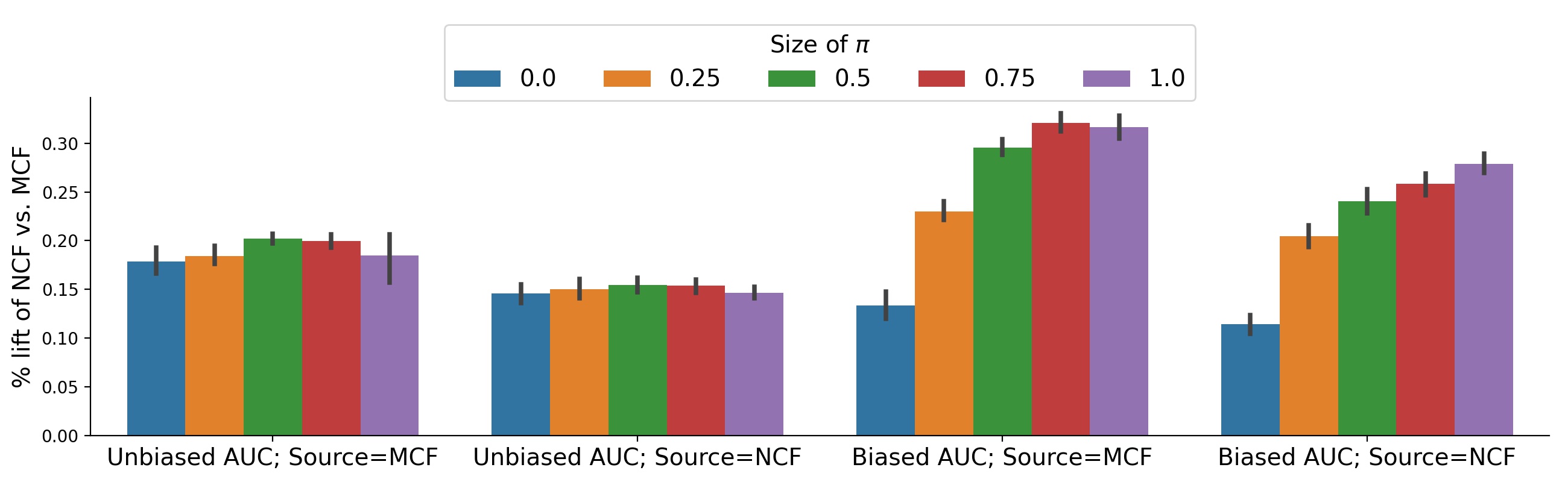}
    \caption{Following Figure \ref{fig:inductive}, we compute the relative percentage lift of the NCF model over MCF, with respect to the unbiased and biased ranking AUC. In the xlabel, we specify the evaluation type, and use \emph{source} to indicate the model type of $g_{\text{exp}}$ and $g_{\text{rel}}$. The lift of NCF w.r.t MCF is much more consistent under the unbiased evaluation.}
    \label{fig:lift}
\end{figure}


\section{Summary, Scope and Limitation}
\label{sec:summary}

We revisit the critical comparison between MCF and NCF by studying the underlying mechanisms of model expressivity, optimization and generalization. We show that the comparisons are better characterized by the optimization paths rather than model expressivity, as they lead to specific norm-constraint solutions that exhibit meaningful generalization patterns under different learning settings. We further emphasize the crucial role of exposure for model evaluation in inductive CF. The fact that they receive less attention in the relevant literature might explain some of the controversial outcomes.

In the concurrent work of \citet{xu2021theoretical}, the authors primarily study the theoretical perspectives of user and item embeddings learnt by the skip-gram negative sampling algorithm, which deviates from the scope of regular CF. On the other hand, some of our analysis is carried out under conditions C1-C3, which may restrict our result to generalize to the more sophisticated NCF architectures. Also, we do not study how the various tricks of training and data manipulation could affect our conclusions.
Finally, while our results strongly advocate for unbiased evaluation in the inductive setting, it is nevertheless very challenging to implement the inverse weighting approach in practice. It is pointed out by \citet{xu2020adversarial} that there often exists identifiability issues, and more importantly, the inverse propensity weighting is not applicable when the exposure is deterministic. We leave to future work to study how to leverage reweighting for recommender systems more effectively.

\bibliography{references}
\bibliographystyle{icml2021}

\onecolumn
\appendix
\setcounter{equation}{0}
\renewcommand{\theequation}{A.\arabic{equation}}
\setcounter{figure}{0}
\renewcommand{\thefigure}{A.\arabic{figure}}
\setcounter{table}{0}
\renewcommand{\thetable}{A.\arabic{figure}}
\setcounter{section}{0}
\renewcommand{\thesection}{A.\arabic{section}}
\setcounter{lemma}{0}
\renewcommand{\thelemma}{A.\arabic{lemma}}
\setcounter{corollary}{0}
\renewcommand{\thecorollary}{A.\arabic{corollary}}

We provide the technical proofs, experiment details as well as the relegated discussions mentioned in the paper. The appendix for Section 4, 5, 6 are provided in \ref{sec:sec4}, \ref{sec:sec5}, \ref{sec:sec6}, respectively. The auxiliary lemmas in our proofs are summarized in \ref{sec:lemma}. The additional experiment details are provided in \ref{sec:add-experiment}.

\section{Material for Section 4}
\label{sec:sec4}

When optimized by the gradient descent: $\thetabf^{(t)} = \thetabf^{(t-1)} - \eta \nabla \Lcal(\thetabf^{(t-1)})$ using an infinitesimal learning rate, the updates in the parameter space can be equivalently described by the \emph{gradient flow}:
\[
\frac{d \thetabf^{(t)}}{dt} = -\nabla_{\thetabf} \Lcal\big(\thetabf^{(t)} \big).
\]

A nice property of gradient flow is that if $\Lcal$ is smooth, then the objective function is non-increasing during the updates since:
\begin{equation}
\label{eqn:grad-flow}
\frac{d\Lcal\big(\thetabf^{(t)} \big)}{dt} = -\Big \langle \nabla \Lcal\big(\thetabf^{(t)} \big), \frac{d \thetabf^{(t)}}{dt} \Big \rangle = -\Big\| \frac{d \thetabf^{(t)}}{dt} \Big\|_2^2,
\end{equation}
which is non-negative. Therefore, it saves the discussion of choosing the proper learning rate to ensure the same property in gradient descent. 

Another preparation work is to reformulated NCF, especially the input (which are essentially the users and items embeddings), into a standard form of FFN: $\Wbf_1 \sigma\big(\Wbf_2 \xbf_{u,i} \big)$ where $\xbf_{u,i}$, are fixed and do not depend on the unknown embeddings. 

We use $\ebf_u \in \Rbb^{|\Ucal|}$ and $\ebf_i\in\Rbb^{|\Ical|}$ to denote the one-hot encoding of the user and item id. Also, we use $\ebf_{u,i} \in \Rbb^{|\Ucal||\Ical|}$ to denote the one-hot encoding of user+item id combined. Therefore, NCF with addition can be efficiently represented as:
\begin{equation}
\label{eqn:ncf-a}
f^{\text{NCF-a}}(u,i) = \Wbf_1\sigma\big(\Wbf_2 \xbf_{u,i}  \big), \text{ with } \Wbf_2 = \big[\Zbf_U^{\intercal}, \Zbf_I^{\intercal}\big]^{\intercal} \text{ and } \xbf_{u,i} = [\ebf_u, \ebf_i]^{\intercal};
\end{equation}
and NCF with concatenation is:
\begin{equation}
\label{eqn:ncf-c}
f^{\text{NCF-c}}(u,i) = \Wbf_1\sigma\big(\Wbf_2 \xbf_{u,i}  \big), \text{ with } \Wbf_2 = \begin{bmatrix}
\zbf_{i_1} & \zbf_{u_1} \\
\zbf_{i_1} & \zbf_{u_2} \\
\dots & \cdots 
\end{bmatrix}^{\intercal} \text{ and } \xbf_{u,i} = \ebf_{u,i}^{\intercal}.
\end{equation}

Recall the linearization from Section 4, where we denote the first-order Taylor approximation of $f(\thetabf;\, \cdot)$ by $\tilde{f}(\thetabf;\, \cdot)$ such that:
\[
\tilde{f}\big(\thetabf; (u,i)\big) := f\big(\thetabf^{(0)};(u,i)\big) + \Big \langle \thetabf - \thetabf^{(0)} , \nabla f\big(\thetabf^{(0)};(u,i)\big) \Big \rangle.
\]
Also, we use $d$, $d_1$ to denote the embedding dimension and the dimension of the first hidden layer in the FFN for NCF, and assume $d_1=d$ w.l.o.g. We still consider the scaled initialization $N(0, \alpha/d)$ where $\alpha$ is a constant.
Under the infinite-width limit, we can show that the NTK converges to a fixed kernel at initialization, which we referred to as the collaborative filtering kernel. 

\begin{lemma}
\label{lemma:cf-kernel}
For the MCF and NCF we described in Section 3.1, the neural tangent kernel $K\big( (u,i), (u',i')  \big) = \big\langle \nabla f\big(\thetabf; (u,i)\big) , \nabla f\big(\thetabf; (u',i')\big) \big\rangle$ have the following convergence result:
\[
\lim_{d\to\infty} K\big( (u,i), (u',i')  \big) = K_{\text{CF}}\big((u,i) , (u',i') \big) := a + b\cdot 1[i=i'] + c \cdot 1[u=u'],
\]
where under the $N(0,1/d)$ initializations, $a=0$, $b=c=1$ for $K_{\text{MCF}}$; $a=1/\pi$, $b=c=\frac{1}{2} - \frac{1}{2\pi}$ for $K_{\text{NCF-c}}$, and $a=1/\pi$, $b=c=\frac{1}{2} - \frac{(2-\sqrt{3})}{2\pi}$ for $K_{\text{NCF-a}}$.
\end{lemma}

\begin{proof}
We first consider NCF. We first reformulate NCF's formulation in (\ref{eqn:ncf-a}) and (\ref{eqn:ncf-c}) as:
\[ 
f(\xbf_{u,i}) = \sqrt{\frac{2}{d}}\Wbf_1\sigma\big(\Wbf_2 \xbf_{u,i} \big), \Wbf_1, \Wbf_2 \sim N(0,1),
\]
where we extract the $1/d$ variance to the front, and add the $\sqrt{2}$ factor for convenience. 

Notice that: $\frac{\partial f(\xbf_{u,i})}{\partial \Wbf_{1,j}} = \sqrt{2/d} \big( \Wbf_{2,j}^{\intercal} \xbf_{u,i}\big)_{+}$, and 
\[
\nabla_{\Wbf_{2,j}} f(\xbf_{u,i}) = \sqrt{2/d} \Wbf_{1,j} \xbf_{u,i} 1\big[ \Wbf_{2,j}^{\intercal} \xbf_{u,i} \geq 0 \big],
\] 
where $\Wbf_{1,j}$ is the $j^{th}$ element of the vector $\Wbf_1$, and $\Wbf_{2,j}$ is the $j^{th}$ column of the matrix $\Wbf_2$. For notation simplicity, we define $v_j = \Wbf_{1,j}$ and $\wbf_j = \Wbf_{2,j}$.

Therefore, the NTK for NCF is given by:
\begin{equation}
\label{eqn:ntk-ncf}
\begin{split}
    &\nabla f(\xbf_{u,i})^{\intercal} \nabla f(\xbf_{u',i'})  = \\ & \quad \frac{2}{d}\sum_{j=1}^d \big(\wbf_j^{\intercal}\xbf_{u,i}\big)_{+} \big(\wbf_j^{\intercal}\xbf_{u',i'}\big)_{+} + \frac{2}{d}\sum_{j=1}^d(v_j\xbf_{u,i})^{\intercal}(v_j\xbf_{u',i'}) 1\big[\wbf_j^{\intercal}\xbf_{u,i}\geq 0 \big] 1\big[\wbf_j^{\intercal}\xbf_{u',i'}\geq 0 \big].
\end{split}
\end{equation}

Using the mean and variance formula of truncated normal distribution, following the setup in (\ref{eqn:ncf-a}) and (\ref{eqn:ncf-c}), for NCF with concatenation we have:
\begin{itemize}
    \item When $u\neq u'$ and $i\neq i'$, we have:
    \begin{equation*}
    \begin{split}
        K\big((u,i), (u',i')\big) &= \frac{2}{d}\sum_{j=1}^d (\wbf_j)_{+} (\wbf^*_j)_{+}, \quad \wbf^*_j \text{ is an i.i.d copy of } \wbf_j \\
        & \overset{d\to\infty}{=} 2\Ebb\big[ (\wbf_j)_{+} (\wbf^*_j)_{+} \big] = \frac{4}{\pi}
    \end{split}
    \end{equation*}
    \item When $u=u'$ or $i=i'$, we have: 
    \begin{equation*}
    \begin{split}
        K\big((u,i), (u',i')\big) &= \frac{1}{d}\sum_{j=1}^{d} (\wbf_j)_{+}^2 + \frac{2}{d}\sum_{j=1}^d (\wbf_j)_{+} (\wbf^*_j)_{+}, \quad \wbf^*_j \text{ is a copy of } \wbf_j  \\
        & \overset{d\to\infty}{=} var\big((\wbf_j)_{+}\big) + 2\Ebb\big[ (\wbf_j)_{+} (\wbf^*_j)_{+} \big] = 2 + \frac{2}{\pi}
    \end{split}
    \end{equation*}
    \item When $u=u'$ and $i=i'$, we leverage the integral formulation of arc-cosine kernel $K_0$ in Lemma \ref{lemma:arc-cos-kernel} such that:
    \begin{equation*}
    \begin{split}
        K\big((u,i), (u',i')\big) & = K_{0}(\xbf_{u,i}, \xbf_{u',i'}) -  \frac{2}{d}\sum_{j=1}^d (\wbf_j)_{+} (\wbf^*_j)_{+} + \frac{2}{d}\sum_{j=1}^{d} (\wbf_j)_{+}^2  \\
        & \overset{d\to\infty}{=} 8 - \frac{16}{\pi}.
    \end{split}
    \end{equation*}
\end{itemize}

The results for NCF under addition is obtained using basically the same computations. 
For MCF, on the other hand, we reformulate the predictor as: $f(u,i) = \frac{1}{d}\big\langle \Zbf_U\Zbf_I, \Xbf_{u,i}  \big \rangle$ where $\Xbf_{u,i} = \ebf_u \ebf_i^{\intercal}$, and the embeddings follow $N(0,1)$ initializations. Then it holds that:
\begin{equation}
\label{eqn:ntk-mcf}
    \nabla f(\Xbf_{u,i})^{\intercal} \nabla f(\Xbf_{u',i'}) = \frac{1}{d} \Big( \langle \zbf_u,\zbf_{u'} \rangle 1[i=i'] + \langle \zbf_i,\zbf_{i'} \rangle 1[u=u'] \Big),
\end{equation}

which directly leads to the stated results of $K_{\text{CF}}$.

\end{proof}

\begin{remark}[The parameterization of $K_{\text{CF}}$ and the intializations of MCF, NCF]
It is evident from the above proof that the relative scale of $a$, $b$ and $c$ in $K_{\text{CF}}$ can depend on the constant term $\alpha$ in the  intializations of $N(0,\alpha/d)$. For instance, if the infinite-width MCF initializes the user embeddings $\Zbf_U$ with $N(0,\alpha_1/d)$ and the item embeddings $\Zbf_I$ with $N(0,\alpha_2/d)$, then by (\ref{eqn:ntk-mcf}) we immediately have $b=\alpha_1$ and $c=\alpha_2$. 

Also, the parameterization of $K_{\text{CF}}$ for NCF is also dependent on the initialization. We observe from (\ref{eqn:ntk-ncf}) that $a$ would not change as long as the initializations are i.i.d., but $b$ and $c$ again depend on the individual $\alpha$. The exact derivations for the NTK of FFN is stuided by \cite{arora2019exact}, and in \cite{yang2019scaling} the author provides the NTK formulation for a broad range of neural networks.
\end{remark}

Other than the convergence to a fixed kernel at initialization, the infinite-width limit also suggests that the parameters varies little during the gradient flow updates, and the linearization of $\tilde{f}$ has a good approximation:
\[
\frac{\big\|\thetabf^{(t)} - \thetabf^{(0)}\big\|_2}{\big\|\thetabf^{(0)}\big\|_2} = \Ocal(\sqrt{1/d}), \, \text{ and } \tilde{f}\big(\thetabf^{(t)}; (u,i)\big) = f\big(\thetabf^{(t)};(u,i)\big) + \Ocal(\sqrt{1/d}).
\]
We formalize the above arguments in the following lemma.

\begin{lemma}
\label{lemma:response-surface}
Let the gradient flow updates under $\tilde{f}$ be denoted by $\tilde{\thetabf}^{(t)}$. Under the exponential loss or log loss, when the predictor $f(\thetabf; \cdot)$ is local Lipschitz and admits chain rule, the corresponding decision boundaries for the two gradient flow trajectories satisfy the following result for any $T>0$:
\begin{equation}
\label{eqn:grad-flow-converge}
\lim_{d\to\infty} \sup_{t\leq T} \big\|F(\thetabf^{(t)}) - \tilde{F}(\tilde{\thetabf}^{(t)})\big\|_2 = 0.
\end{equation}
\end{lemma}

To the best of our knowledge, the similar infinite-width convergence results were studied under the squared loss \cite{arora2019exact,jacot2018neural}, and we extend them to the classification setting.

\begin{proof}
By the chain rule, for any step $T>0$, we have:
\[
\int_0^T \big\|\thetabf^{(t)}\big\|_2 dt = \int_0^T \big\|\nabla\Lcal(\thetabf^{(t)})\big\|_2 dt \leq \sqrt{T} \Big( \int_0^T \big\|\nabla\Lcal(\thetabf^{(t)})\big\|_2^2 dt\Big)^{1/2},
\]
by the H\"{o}lder's inequality. According to (\ref{eqn:grad-flow}), $d\Lcal\big(\thetabf^{(t)} \big) / dt = -\Big\| \nabla \Lcal(\thetabf^{(t)})\Big\|_2^2$, so we have:
\[
\sup_{t\leq T} \big\|\thetabf^{(t)} - \thetabf^{(0)}\big\|_2 \leq \sqrt{T \Lcal\big(\thetabf^{(t)} \big))} \lesssim \sqrt{1/d},
\]
due to the scaled initializations. Denote the risk associated with a predictor by $R\big(f(\thetabf)\big)$. It can then be deduced that $\sup_{t\leq T} \big\|\thetabf^{(t)} - \thetabf^{(0)}\big\|_2 \leq C_1$ and $\sup_{t\leq T} \big\|\nabla R\big(f(\thetabf^{(t)})\big)\big\|_2 \leq C_2$, for some constants $C_1$ and $C_2$.

Define $\delta(t) := \|f(\thetabf^{(t)}) - \tilde{f}(\thetabf^{(t)})\|_2$, and the NTK under a specific $\thetabf$ as $K(\thetabf)$. Since $\|K(\thetabf)^{(t)} - K(\thetabf)^{(0)}\|_2 = \mathcal{O}(\sqrt{1/d})$ according to \cite{jacot2018neural,arora2019exact}, we have:
\begin{equation*}
\begin{split}
\frac{d\delta(t)}{dt} & \leq \Big\| K(\thetabf^{(t)}) \nabla R\big(f(\thetabf^{(t)})\big) - K(\thetabf^{(0)}) R\big(\tilde{f}(\thetabf^{(t)})\big)\Big\|_2  \\ 
& \leq \Big\|\big(K(\thetabf^{(t)})-K(\thetabf^{(0)})\big) \nabla R\big(f(\thetabf^{(t)}) \Big\|_2  + \Big\|K(\thetabf^{(0)})\big(R\big(f(\thetabf^{(t)})-R\big(\tilde{f}(\thetabf^{(t)})\big) \Big\|_2  \\
& \leq C_3/\sqrt{d} + C_4 \delta(t),
\end{split}
\end{equation*}
for some constants $C_3$ and $C_4$. Since $\delta(0)=0$, then $\delta_t$ is a sub-solution to the ordinary differential equation: $\frac{d\delta(t)}{dt}=C_3/\sqrt{d} + C_4 \delta(t)$ with $\delta(0)=0$. It then follows: $\delta(t) \leq \frac{C_3(\exp(C_4 t)-1)}{\sqrt{d} C_4}$, so we conclude that: $\lim_{d \to \infty} \sup_{t\leq T} \|F(\thetabf^{(t)}) - \tilde{F}(\tilde{\thetabf}^{(t)})\|_2 = 0$.
\end{proof}

\textbf{Proof for Theorem 1}
\begin{proof}
According to Corollary \ref{corollary:max-margin} from \ref{sec:sec5}, under the exponential or log loss, the linearization $\tilde{f}(\tilde{\thetabf}; \, \cdot)$ satisfies condition \textbf{C1}, \textbf{C2} and \textbf{C3}, so the gradient flow optimization of $\tilde{\thetabf}^{(t)}$ converges to the stationary points of:
\[
\min \, \|\tilde{\thetabf}\|_2 \, \text{ s.t. } y_{u,i} \tilde{f}(\tilde{\thetabf};(u,i)) \geq 1, \, \forall (u,i)\in\Dcal_{\text{train}}.
\]
Combining the results from Lemma \ref{lemma:cf-kernel} and Lemma \ref{lemma:response-surface}, under the gradient flow optimization, the response surface of MCF and NCF converges to the minimum RKHS norm solution: 
\[
\lim_{t\to\infty} \lim_{d\to\infty} F\Big(\frac{\thetabf^{(t)}}{\|\thetabf^{(t)}\|_2}\Big) \overset{\text{stationary points of}}{\to}  \argmin_{f:(\Ucal, \Ical)\to \Rbb} \big\|f\big\|_{K_{\text{CF}}} \, \text{s.t.} \, y_{u,i}f(u,i) \geq 1, \, \forall (u,i)\in \Dcal_{\text{train}}.
\]
\end{proof}

\section{Material for Section 5}
\label{sec:sec5}

We first discuss the implications of condition \textbf{C1}, \textbf{C2} and \textbf{C3}. Recall that: 

\textbf{C1}. The loss function has the exponential-tail behavior such as the exponential loss and log loss;

\textbf{C2}. Both the MCF and NCF in are $L$-homogeneous, i.e. $f(\thetabf;\,\cdot) = \|\thetabf\|_2^L \cdot f\big(\thetabf/\|\thetabf\|_2;\,\cdot\big)$ for some $L>0$, and have some smoothness property;

\textbf{C3}. The data is separable with respect to the overparameterized MCF and NCF.

The exponential decay on the tail of the loss function is important for the inductive bias of gradient descent as $\ell(u)$ behaves like $\exp(-u)$ when $u\to \infty$. \citet{soudry2018implicit} first propose the notion of \emph{tight exponential tail}, where the negative loss derivative $-\ell'(u)$ behave like:
\[ 
-\ell'(u) \lesssim \big(1+\exp(-c_1 u)\big)e^{-u} \text{ and } -\ell'(u) \gtrsim \big(1-\exp(-c_2 u)\big)e^{-u},
\]
for sufficiently large $u$, where $c_1$ and $c_2$ are positive constants. There is also a smoothness assumption on $\ell(\cdot)$. There is a more general (and perhaps more direct) definition of exponential-tail loss function \cite{lyu2019gradient}, where $\ell(u) = \exp(-f(u))$, such that:
\begin{itemize}
    \item $f$ is smooth and $f'(u) \geq 0, \forall u$;
    \item there exists $c>0$ such that $f'(u)u$ is non-decreasing for $u > c$ and $f'(u)u \to \infty$ as $u \to \infty$.
\end{itemize}
It is easy to verify that the exponential loss, log loss and cross-entropy loss satisfy both requirements.

The predictor of MCF is obviously homogeneous, but for NCF to be homogeneous, the bias terms cannot be used for the hidden layers in the FFN. The requirement on the activation function is relative mild, since ReLU, LeakyReLU and some other common activation functions all preserve the homogeneity of the predictor. 

On the other hand, the smoothness condition, which includes the locally Lipschitz condition and differentiability. 
Notice that Lipschitz condition is rather mild assumption for neural networks, and several recent paper are dedicated to obtaining the Lipschitz constant of deep learning models using activation such as ReLU \citep{fazlyab2019efficient,virmaux2018lipschitz}. 
The differentiablity condition is more technical-driven such that we can analyze the gradients. In practice, neural networks with ReLU activation do not satisfy the condition. We point out that there do exist smooth homogeneous activation functions, such as the quadratic activation $\sigma(x) = x^2$. Nevertheless, the ReLU activation admits the chain rule, so the same analysis using gradients can be carried out by the sub-differentials. Therefore, in our experiments, we use ReLU as the activation function for convenience, and assume differentiablity to provide a more straightforward analysis. 

Finally, by implying the separability, we also assume that there exists $t_0$ such that:
\begin{equation}
\label{eqn:seperable}
\forall t>t_0, \quad \frac{1}{|\Dcal_{\text{train}}|} \sum_{(u,i)\in \Dcal} \exp\Big(-y_{u,i} f\big(\thetabf^{(t)};(u,i)\big)\Big) < 1.
\end{equation}

In the following corollary, we prove a general result for gradient flow converging to the max-margin solution under condition.

\begin{corollary}
\label{corollary:max-margin}
For gradient flow optimization with the exponential loss, under condition  \textbf{C1}, \textbf{C2} and \textbf{C3}, $\lim_{t\to\infty}\thetabf^{(t)}/\|\thetabf^{(t)}\|_2$ converges to the KKT points of:
\begin{equation*}
\min \|\thetabf\|_2 \quad \text{s.t.} \quad y_{u,i}f\big(\thetabf;(u,i)\big) \geq 1, \, \forall (u,i)\in\Dcal_{\text{train}}.
\end{equation*}
\end{corollary}

Compared with the results in \cite{gunasekar2018characterizing}, we do not assume the loss function already converged in direction. Although we use the same type of constraint quality idea as in \cite{lyu2019gradient}, their results focus on the convergence of the normalized margin and our result emphasize the dynamics in the parameter space. The interests in this result is also beyond the content of this paper.

\begin{proof}
First notice that the KKT condition for the original problem (where we add a $\frac{1}{2}$ factor for convenience): $\min \frac{1}{2}\|\thetabf\|_2 \, \text{s.t.} \, y_{u,i}f\big(\thetabf;(u,i)\big) \geq 1, \, \forall (u,i)\in\Dcal_{\text{train}}$ is given by:
\begin{equation}
\label{eqn:KKT}
\exists \lambda_{u,i}\geq 0, (u,i)\in \Dcal_{\text{train}}, \text{ s.t. } \left\{\begin{array}{l}
\thetabf + \sum_{(u,i)\in\Dcal_{\text{train}}}\lambda_{u,i}y_{u,i}\nabla f\big(\thetabf;(u,i)\big) = 0 \\
\lambda_{u,i} \Big(y_{u,i}f\big(\thetabf;(u,i)\big) - 1 \Big) = 0, \, \forall (u,i)\in\Dcal_{\text{train}}.
\end{array}
\right.
\end{equation}
As we mentioned in Section 5, when the constraints are non-convex, the stationarity of local or global optimum does not equal KKT optimality, but when the Guignard constraint qualification (GCQ) is satisfied \cite{gould1971necessary}, those two become exchangeable. GCQ might be the weakest constraint qualification in some sense, but it is very difficult to check in practice. 

On the other hand, the Mangasarian-Fromovitz constraint qualification (MFCQ), though stronger than GCQ (and thus imply GCQ), is easier to examine. Specifically, it states for the stationary points that:
\[
\exists\, \dbf \quad \text{ s.t. } \quad \Big\langle y_{u,i}\nabla f\big(\thetabf;(u,i)\big),\dbf \Big\rangle > 0, \, \text{for all }(u,i)\in\Dcal_{\text{train}}.
\]
Notice that $L$-homogeneous functions all satisfy MFCQ, because according to Lemma \ref{lemma:homogeneous}, for any stationary point $\thetabf^*$ satisfying $y_{u,i}f\big(\thetabf^*;(u,i)\big)=1$, we let $\dbf^* = \thetabf^*$ and it holds that:
\[
\Big\langle y_{u,i}\nabla f\big(\thetabf^*;(u,i)\big),\dbf^* \Big\rangle = L > 0.
\]
As a consequence, the stationary points for the L-homogeneous predictors are indeed the KKT points. Then we show the convergence of the gradient flow optimization path to the KKT points. We first define the quantity:
\[
\tgamma\big(\thetabft\big) := \frac{-\log \sum_{(u,i)} \exp\big(-y_{u,i}f\big(\thetabft;(u,i)\big)\big)}{\|\thetabf^{(t)}\|_2^2} = \|\thetabft\|_2^2 \cdot \log\frac{1}{\Lcal(\thetabft)},
\]
which is smoothed version of the average margin normalized by the $\|\thetabft\|_2^2$.

We show the convergence by the following three steps.

\textbf{S1}. Under the gradient flow optimization, $\tgamma\big(\thetabft\big)$ is non-decreasing for $t\geq t_0$ (\ref{eqn:seperable}), together with $\Lcal(\thetabft) \to 0$ and $\|\thetabft\|_2 \to \infty$.

\textbf{S2}. With a scaling factor $\alpha>0$, it holds that: $\exists \lambda_{u,i}(t)\geq 0, (u,i)\in \Dcal_{\text{train}}, \text{ s.t. }$
\begin{equation}
\label{eqn:S2}
\left\{\begin{array}{ll}
    \Big\|\alpha \thetabft - \sum_{(u,i)\in\Dcal_{\text{train}}}\lambda_{u,i}y_{u,i}\nabla f\big(\alpha\thetabft;(u,i)\big) \Big\|_2 \lesssim \bigg( 1- \Big\langle\frac{\thetabft}{\|\thetabft\|_2}, \frac{d\thetabft/dt}{\|d\thetabft/dt\|_2} \Big\rangle \bigg)\frac{1}{\tgamma\big(\thetabft\big)} & (\textbf{S2.1}) \\
    \lambda_{u,i}(t) \Big(y_{u,i}f\big(\alpha\thetabft;(u,i)\big) - 1 \Big) \lesssim \frac{1}{\tgamma\big(\thetabft\big) \|\thetabft\|_2^2} &  (\textbf{S2.2})
\end{array}
\right.
\end{equation}
We mention that \textbf{S2.2}$\overset{t\to\infty}{\to} 0$ will be a consequence of \textbf{S1}, and to show \textbf{S2.1}$\overset{t\to\infty}{\to} 0$, we need to prove $\Big\langle\frac{\thetabft}{\|\thetabft\|_2}, \frac{d\thetabft/dt}{\|d\thetabft/dt\|_2} \Big\rangle \to 1$.
Once we have \textbf{S2.1}$\overset{t\to\infty}{\to} 0$ + \textbf{S2.2}$\overset{t\to\infty}{\to} 0$ + MFCQ, it holds that $\lim_{t\to\infty}\thetabft / \|\thetabft\|_2$ are proportional to the KKT points. 

\textbf{S3}. Finally, the goal is show that $\Big\langle\frac{\thetabft}{\|\thetabft\|_2}, \frac{d\thetabft/dt}{\|d\thetabft/dt\|_2} \Big\rangle \to 1$. 

We follow our game plan and first prove the results in \textbf{S1}. To show that $\tgamma\big(\thetabft\big)$ is a non-decreasing function of $t$, we derive $\frac{d\log\tgamma\big(\thetabft\big) }{dt}$ and leverage (\ref{eqn:grad-flow}) to show it's non-negative:
\begin{equation}
\label{eqn:margin-differential}
\begin{split}
    \frac{d\log\tgamma\big(\thetabft\big) }{dt} & = \frac{d\log\log \frac{1}{\Lcal(\thetabft)}}{dt} - L\frac{d\log \|\thetabft\|_2}{dt} \\
    & = \frac{1}{\log\frac{1}{\Lcal(\thetabft)}\Lcal(\thetabft)} \bigg(-\frac{d\Lcal(\thetabft)}{dt} \bigg) - L\cdot\frac{d\log \|\thetabft\|_2}{dt}.
\end{split}
\end{equation}
Define $q_{u,i}(t) := y_{u,i}f(\thetabft;(u,i))$ to be the margin of each data point during optimization. Notice that:
\begin{itemize}
\item For all $(u,i)\in \Dcal_{\text{train}}$, $\min_{u,i} q_{u,i}(t) \geq \log\frac{1}{\Lcal(\thetabft)}$, therefore:
\begin{equation}
    \label{eqn:Qt}
    \log\frac{1}{\Lcal(\thetabft)}\Lcal(\thetabft) \leq \sum_{(u,i)}\exp\big(-q_{u,i}(t) \big)q_{u,i}(t),
\end{equation}
and we denote the RHS by $Q(t)$. By the separability assumption, we have $\Lcal(\thetabft) < 1$ for $t>t_0$, which indicates $Q(t)>0$ for $t>t_0$.
\item It holds that:
\begin{equation*}
\begin{split}
    \frac{d\|\thetabft\|_2}{dt} & = 2\Big\langle \thetabft, \frac{d\thetabft}{dt} \Big\rangle \\
    & = 2\Big\langle \thetabft, \nabla \Lcal\big(\thetabft\big) \Big\rangle \quad \text{by (\ref{eqn:grad-flow})} \\
    & = 2\Big\langle \thetabft, \sum_{(u,i)}\exp\big(-y_{u,i}f\big(\thetabft; (u,i)\big)\big)\cdot y_{u,i}\nabla f\big(\thetabft; (u,i)\big) \Big\rangle \\
    & = 2L\sum_{(u,i)} \exp\big(-q_{u,i}(t) \big)q_{u,i}(t) \quad \text{by Lemma \ref{lemma:homogeneous}.}
\end{split}
\end{equation*}
Hence, we have:
\begin{equation}
\label{eqn:dlog-dt}
\begin{split}
    \frac{d\log\tgamma\big(\thetabft\big) }{dt} & = \frac{1}{2\|\thetabft\|_2^2}\frac{d\|\thetabft\|_2^2}{dt} = \frac{L}{\|\thetabft\|_2^2} \exp\big(-q_{u,i}(t) \big)q_{u,i}(t) \\
    & = \frac{\big\langle \thetabft, d\thetabft / dt \big\rangle}{\|\thetabft\|_2^2}.
\end{split}
\end{equation}
\end{itemize}

Combining (\ref{eqn:grad-flow}), (\ref{eqn:margin-differential}), (\ref{eqn:Qt}) and (\ref{eqn:dlog-dt}):
\begin{equation*}
\begin{split}
\frac{d\log\tgamma\big(\thetabft\big) }{dt} & \geq \frac{1}{Q(t)}\bigg(\Big\|\frac{d\thetabft}{dt} - \Big\langle \frac{\thetabft}{\|\thetabft\|_2^2}, \frac{d\thetabft}{dt} \Big\rangle \Big\|_2^2 \bigg) \\
& = \frac{\|\thetabft\|_2^2}{Q(t)}\bigg\|\frac{d\thetabft / \|\thetabft\|_2^2}{dt} \bigg\|_2^2 = L\bigg(\frac{d\log\|\thetabft\|_2}{dt} \bigg)^{-1} \bigg\|\frac{d\tilde{\thetabft}}{dt} \bigg\|_2^2 \geq 0
\end{split}
\end{equation*}
where we defined $\tilde{\thetabft} = \frac{\thetabft}{\|\thetabft\|_2}$.

Then we show $\Lcal(\thetabft) \to 0$ and $\|\thetabft\|_2\to 0$ using the monotonicity of $\tgamma\big(\thetabft\big)$. Note that:
\[
\frac{-d\Lcal(\thetabft)}{dt} = \Big\|\frac{d\thetabft}{dt} \Big\|_2^2 \geq \Big\langle \frac{\thetabft}{\|\thetabft\|_2}, \frac{d\thetabft}{dt} \Big\rangle^2 = L^2 \frac{Q(t)^2}{\|\thetabft\|_2^2} \quad \text{by (\ref{eqn:dlog-dt})},
\]
where the inequality is by applying the Cauchy-Schwartz inequality. By (\ref{eqn:Qt}): $Q(t)\geq \Lcal(\thetabft)\log\frac{1}{\Lcal(\thetabft)}$, and by the definition, we have: $\|\thetabft\|_2 = \Big(\log\frac{1}{\Lcal(\thetabft)} / \tgamma\big(\thetabft\big) \Big)^{1/L}$. As a consequence,
\begin{equation*}
\begin{split}
    \frac{-d\Lcal(\thetabft)}{dt} & \geq L^2 \tgamma\big(\thetabft\big)^{2/L} \log\frac{1}{\Lcal(\thetabft)}^{2-2/L} \Lcal(\thetabft) \\
    & \geq L^2 \tgamma\big(\thetabf^{(t_0)}\big)^{2/L} \log\frac{1}{\Lcal(\thetabft)}^{2-2/L} \Lcal(\thetabft),
\end{split}
\end{equation*}
which indicates that:
\[
\frac{d\big(1/\Lcal(\thetabf) \big)}{dt} \frac{1}{\log\frac{1}{\Lcal(\thetabft)}^{2-2/L}} \geq l^2 \tgamma\big(\thetabf^{(t_0)}\big)^{2/L}.
\]
Taking the intergral on both sides from $t_0$ to $t$, we immediate have:
\[
\int_{1/\Lcal(\thetabf^{(t_0)})}^{1/\Lcal(\thetabft)} \big(\log u\big)^{2/L-2}du \geq L^2\tgamma\big(\thetabf^{(t_0)}\big)^{2/L}(t-t_0) \overset{t\to\infty}{\to} \infty.
\]
Therefore, $1/\Lcal(\thetabft)$ must diverge when $L\geq 1$, which implies $\Lcal(\thetabft) \to 0$. Due to the L-homogeneous property of $f(\thetabf;\cdot)$, we must also have $\|\thetabft\|_2 \to \infty$, which completes \textbf{S1}.

Now we show the results for \textbf{S2}. We design the scaling factor to be: $\alpha(t) = \min_{u,i}q_{u,i}(\thetabft)^{1/L}$. Consequently, $\nabla_{\thetabf}q_{u,i}(\thetabft) / \alpha(t)^{L-1} = \nabla_{\thetabf}q_{u,i}\big(\alpha(t)\thetabft \big)$. 

We then construct the Lagrange multipliers as:
\[
\lambda_{u,i}(t) = \alpha(t)^{L-2}\|\thetabft\|_2 \exp\big( -q_{u,i}(\thetabft)\big) \big/ \|\frac{d\thetabft}{dt}\|_2.
\]
Using the results in \text{S1}, and by straightforward calculations, we obtain:
\begin{equation}
\label{eqn:KKT1}
\Big\|\alpha(t) \thetabft - \sum_{(u,i)\in\Dcal_{\text{train}}}\lambda_{u,i}y_{u,i}\nabla f\big(\alpha(t)\thetabft;(u,i)\big) \Big\|_2^2 \leq \frac{2}{\tgamma\big(\thetabft\big)^{2/L}}\bigg( 1- \Big\langle\frac{\thetabft}{\|\thetabft\|_2}, \frac{d\thetabft/dt}{\|d\thetabft/dt\|_2} \Big\rangle \bigg)
\end{equation}
and 
\begin{equation}
\label{eqn:KKT2}
\lambda_{u,i}(t) \Big(y_{u,i}f\big(\alpha(t)\thetabft;(u,i)\big) - 1 \Big) \lesssim \frac{2e|\Dcal_{\text{train}}|}{L\tgamma\big(\thetabft\big)^{2/L+1} \|\thetabft\|_2^2}.
\end{equation}

According to the game plan, we then need to show $\Big\langle\frac{\thetabft}{\|\thetabft\|_2}, \frac{d\thetabft/dt}{\|d\thetabft/dt\|_2} \Big\rangle \to 1$ to show (\ref{eqn:KKT1})$\to 0$, since (\ref{eqn:KKT2})$\to 0$ is implied by \text{S1}. First notice that $\langle\frac{\thetabft}{\|\thetabft\|_2}, \frac{d\thetabft/dt}{\|d\thetabft/dt\|_2} \Big\rangle \leq 1$ by the Cauchy-Schwartz inequality. We then need to show it is $\geq 1$ as $t\to\infty$.
Using the results in \textbf{S1}, we have:
\begin{equation}
\label{eqn:S3A}
\begin{split}
\frac{d\tgamma\big(\thetabft\big)}{dt} & \geq \frac{\|\thetabft\|_2^2}{Q(t)} \cdot \Big\|\frac{d\tilde{\thetabft}}{dt} \Big\|_2^2 = L\Big\|\frac{\|\thetabft\|_2^2}{LQ(t)} \frac{d\tilde{\thetabft}}{dt}\Big\|_2^2\cdot \frac{LQ(t)}{\|\thetabft\|_2^2} \\
&=L\Big\|\frac{\|\thetabft\|_2^2}{LQ(t)}\cdot \frac{d\tilde{\thetabft}}{dt}\Big\|_2^2\cdot \frac{d\log\|\thetabft\|_2}{dt} .
\end{split}
\end{equation}
Since 
\begin{equation}
\label{eqn:S3B}
\begin{split}
    \Big\|\frac{\|\thetabft\|_2^2}{LQ(t)} \frac{d\tilde{\thetabft}}{dt}\Big\|_2^2 & = \Big\|\frac{\|\thetabft\|_2^2}{LQ(t)} \cdot \frac{1}{\|\thetabft\|_2}\Big(\mathbf{I} - \tilde{\thetabft}\tilde{\thetabft}^{\intercal} \Big)\frac{d\thetabft}{dt} \Big\|_2^2 \\
    & = \frac{\Big\|\frac{d\thetabft}{dt}\Big\|_2^2 - \Big\langle\tilde{\thetabft}, \frac{d\thetabft}{dt} \Big\rangle^2}{\Big\langle\tilde{\thetabft}, \frac{d\thetabft}{dt} \Big\rangle^2} \\
    & = \bigg\langle \frac{\thetabft}{\|\thetabft\|_2}, \frac{d\thetabft/dt}{\|d\thetabft/dt\|_2} \bigg\rangle^{-2}-1,
\end{split}
\end{equation}
combining (\ref{eqn:S3A}) and (\ref{eqn:S3B}), we have:
\[
\bigg\langle \frac{\thetabft}{\|\thetabft\|_2}, \frac{d\thetabft/dt}{\|d\thetabft/dt\|_2} \bigg\rangle \geq \sqrt{1+\frac{d\log \tgamma\big(\thetabft\big)/dt}{L\cdot d\log\|\thetabft\|_2/dt}} \geq \sqrt{1+\frac{\epsilon(t)}{L}}\, \text{ for some } \epsilon(t)\geq 0,
\]
because both $d\log\|\thetabft\|_2/dt \geq 0$ and $d\log \tgamma\big(\thetabft\big)/dt \geq 0$. Hence, by the previous argument, we have $\lim_{t\to \infty} \Big\langle \frac{\thetabft}{\|\thetabft\|_2}, \frac{d\thetabft/dt}{\|d\thetabft/dt\|_2} \Big\rangle = 1$. By showing the resutls in \textbf{S1}, \textbf{S2} and \textbf{S3}, we see that $\alpha \thetabft$ converges in direction to the KKT points, which completes the proof.

\end{proof}

\textbf{Proof for Theorem 2.}
\begin{proof}
The first part of the results for NCF is a direct consequence of Corollary \ref{corollary:max-margin}. To show the second part for MCF, we need to consider the symmetrized setting with:
\[
\Wbf:= \Zbf_U\Zbf_I^{\intercal}, \quad \tilde{\Wbf} = 
\begin{bmatrix}
\Mbf_1 & \Wbf \\
\Wbf & \Mbf_2
\end{bmatrix}, \quad \tilde{\Xbf}_{u,i} = \ebf_u \ebf_i^{\intercal} +  \ebf_i \ebf_u^{\intercal},
\]
where $\Mbf_1$ and $\Mbf_2$ are two p.s.d matrices that are irrelevant for the objective, and the definition of $ \ebf_u,\ebf_i$ are provided in \ref{sec:sec4}. Notice that in the main paper, we use $\Xbf$ to denote the predictor which is now given by $\Wbf$. Hence, the MF parameterization can be considered by: $\tilde{\Zbf}\tilde{\Zbf}^{\intercal} = \tilde{\Wbf}$, and the objective becomes:
\[
\min_{\tilde{\Zbf}} \Lcal(\tilde{\Zbf}) = \sum_{(u,i)\in\Dcal_{\text{train}}} \ell\Big(-y_{u,i} \Big\langle \tilde{\Zbf} \tilde{\Zbf}^{\intercal}, \tilde{\Xbf}_{u,i} \Big\rangle \Big).
\]
It is easy to verify that the symmetrized MCF corresponds exactly to the original problem instance, and satisfy the conditions in Corollary \ref{corollary:max-margin} under exponential or log loss. Since $\lim_{t\to\infty} \frac{\Wbf^{(t)}}{\big\|\Wbf^{(t)} \big\|_{*}} = \lim_{t\to\infty} \frac{\tilde{\Zbf}^{(t)}}{\big\|\tilde{\Zbf}^{(t)} \big\|_F} \frac{\tilde{\Zbf}^{(t)}}{\big\|\tilde{\Zbf}^{(t)} \big\|_F}^{\intercal}$, and the MCF predictor is convex, we conclude that the predictor of MCF converges in direction to the stationary point of:
\begin{equation}
\label{eqn:nuc-svm}
    \min \, \|\Wbf\|_* \quad s.t. \quad y_{u,i}\Wbf_{u,i} \geq 1, \, \forall (u,i)\in \Dcal_{\text{train}}.
\end{equation}

\end{proof}

\section{Material for Section 6}
\label{sec:sec6}

The major tools we use to show the generalization results are the Rademacher complexities. The procedure of bounding the inductive generalization error via the symmetrization technique and Talagrand's contraction inequalities are more often encountered in the literature \cite{bartlett2002rademacher}. The similar ideas can also be applied to bound the transductive generalization error, but specific modifications are required \cite{el2009transductive}. 

The different meaning of generalization decides the distinctive definitions of Rademacher complexities. We use $\mathcal{X}$ to denote the domain (of user and items) for the CF predictors, $n_1$ to denote $|\Dcal_{\text{train}}|$ and $n_2$ to denote $|\Dcal_{\text{test}}|$. We first provide the definitions of  Rademacher complexities, and briefly discuss their different implications for the transductive and inductive learning.

\begin{definition}
\label{def:rademacher}
Recall that $n_1=|\Dcal_{\text{train}}|$ and $n_2=|\Dcal_{\text{test}}|$.

\begin{itemize}
    \item \textbf{Transductive Rademacher complexity}. Let $\mathcal{V} \in \Rbb^{n_1+n_2}$ and $p\in[0,1/2]$, and $\epsilon_i(p)$ be i.i.d random variables such that:
    \[
    \epsilon_i(p) = \left\{ \begin{array}{lll}
        1 & \text{with probability} & p \\
        -1 & \text{with probability} & p \\
        0 & \text{with probability} & 1-2p, \\
    \end{array}
    \right.
    \]
    then the trasductive Rademacher complexity of $\mathcal{V}$ is:
    \begin{equation}
        \label{eqn:trans-rc}
        R_{n_1+n_2}(\mathcal{V},p) = \Big( \frac{1}{n_1} + \frac{1}{n_2} \Big)\Ebb\Big\{\sup_{\vbf \in \mathcal{V}} \epsilonbf(p)^{\intercal}\vbf \Big\},
    \end{equation}
    where $\epsilonbf(p)=\big[\epsilon_1(p), \ldots, \epsilon_{n_1+n_2}(p)\big]^{\intercal}$. 
    
    \item \textbf{Inductive Rademacher complexity}. Let $\mathcal{F}$ be a function class with domain $\mathcal{X}$, and $\{X_{i}\}$ be a set of samples generated by a distribution $P_{\mathcal{X}}$ on $\mathcal{X}$. Let $\epsilon_i$ be shorthand of the same i.i.d random variables as above, with $p=1/2$. Then the empirical Rademacher complexity of $\mathcal{F}$ is:
    \[
    \hat{R}_n(\mathcal{F}) = \Ebb\Big\{\sup_{f\in\mathcal{F}} \Big|\frac{1}{n}\sum_{i=1}^n\epsilon_i f(X_i) \Big| | X_1,\ldots,X_n  \Big\},
    \]
    and the Rademacher complexity is given by: $R_n(\Fcal) = \Ebb_{P_{\mathcal{X}}}\hat{R}_n(\mathcal{F})$.
\end{itemize}
\end{definition}

A important difference between the two settings is that the transductive complexity is an empirical quantity that does not depend on any underlying distributions, and it depends on both the training and testing data. The other difference is reflected in the specific formulations are:

1. transductive Rademacher complexity depends on both $n_1$ and $n_2$ because of the need to bound the test error: $\Dcal_{\text{test}}$, i.e.  $\text{Err}_{\Dcal_{\text{test}}}(f):=\sum_{(u,i) \in \Dcal_{\text{test}}} 1\big[ y_{u,i} f(u,i) \leq 0 \big]$;

2. it depends only on the outcomes' vector space rather than the underlying function space that produces the outcomes.

The different definitions of Rademacher complexity induces the two versions of contraction inequalities, which we provide in Lemma \ref{lemma:inductive-contraction} and Lemma \ref{lemma:transductive-contraction}. We first prove the results in Theorem 3 for the transductive setting. Using the idea of symmetrization, the bound on the testing error for the transductive learning can be stated as in the following Corollary. 

\begin{corollary}[Adapted from \citet{el2009transductive}]
\label{corollary:transductive}
Let $\Vcal$ be a set of real-valued vectors in $[-B, B]^{n_1+n_2}$, where $n_1>n_2$ by our assumption. Define $Q=(1/n_1 + 1/n_2)$, $S=\frac{n_1+n_2}{(n_1+n_2-1/2)(1-n_1/2)}$. Then for all $\vbf\in\Vcal$, with probability of at least $1-\delta$ over the random permutation of $\vbf$, which we deonote by $\tilde{\vbf}$, we have:
\[
\sum_{j=n_1+1}^{n_1+n_2} \tilde{\vbf}_j \leq \sum_{j=1}^{n_1} \tilde{\vbf}_j + R_{n_1+n_2}(\Vcal, p_0) + Bc_0Q\sqrt{n_2} + B\sqrt{\frac{S}{2}Q\log\frac{1}{\delta}},
\]
where $c_0 = \sqrt{32\log(4e)/3}$ and $p_0=n_1n_2/(n_1+n_2)^2$.
\end{corollary}

By defining the $\Vcal$ in the above corollary by the scores of the predictor, and using Lemma \ref{lemma:transductive-contraction} for contraction, we are able to show the results in Theorem 3.

\textbf{Proof for Theorem 3.}
\begin{proof}
Define $\hbf \in \Hcal_{out} \in \Rbb^{n_1+n_2}$ as the output scores of the predictor, and consider $\vbf$ in Corollary \ref{corollary:transductive} as $\ell\big(y_{u,i} f(\thetabf; (u,i))\big)$ where $\ell(u) = 1[u < 0]$. Define $\ell_{\gamma}(y_{u,i} f(\thetabf, \xbf_{u,i}))$ to be the margin loss: $\ell_{\gamma}(u) = \min\{1, 1-u/\gamma\}$. Note that the margin loss is an upper bound on the classification error.

Therefore, using the results in Corollary \ref{corollary:transductive} and Lemma \ref{lemma:transductive-contraction}, for any fixed $\gamma>0$ and $\hbf \in \Hcal_{out}$, with probability of at least $1-\delta$ over the random splits of $\Dcal$:
\begin{equation*}
\begin{split}
    \frac{1}{n_2}\sum_{(u,i)\in\Dcal_{\text{test}}}1\big[y_{u,i} f\big(\thetabf, (u,i)\big) < 0 \big] & \leq \frac{1}{n_1}\sum_{(u,i)\in\Dcal_{\text{train}}}1\big[y_{u,i} f\big(\thetabf, (u,i)\big) < \gamma \big] \\
    & + \frac{R_{n_1+n_2}(\Hcal_{out},p_0)}{\gamma} +c_0Q\sqrt{n_2} + \sqrt{\frac{S}{2}Q\log\frac{1}{\delta}} .
\end{split}
\end{equation*}

We first show the bound for the transductive Rademacher complexity for NCF. 
Recall that $\Hcal_{out}$ for NCF is given by the form of: $\Wbf_1 \sigma(\Wbf_2 \sigma(\ldots \sigma(\Wbf_q \sigma(\Wbf_{q+1} \Xbf_{u,i}))))$, where $\Wbf_{q+1}$ is given by (\ref{eqn:ncf-a}) or (\ref{eqn:ncf-c}), with $\max_{(u,i)\in \Dcal}\|\zbf_u+\zbf_i\|_2 \leq B_{\text{NCF}}$ for NCF with addition, and $\|\Wbf_i\|_F \leq \lambda_i$ for $i=1,\ldots,q$. We denote the output of the $k^{th}$ layer by $\Hbf_{out}^k \in \Hcal_{out}^k$. It holds that:
\begin{equation}
\label{eqn:transductive-ncf}
\begin{split}
& R_{n_1+n_2}(\Hcal_{out}, p_0)  = \big(\frac{1}{n_1} + \frac{1}{n_2}\big) \Ebb\Big\{\sup_{\|\Wbf_1\|_F \leq \lambda_1} \sum_{(u,i)\in\Dcal_{\text{train}}} \epsilon_{u,i} \big[\Wbf_1 \Hbf_{out}^{(q-1)}\big]_{u,i} \Big\} \\
& \leq \lambda_1 \big(\frac{1}{n_1} + \frac{1}{n_2}\big) \Ebb\Big\{\sup_{\|\Wbf_2\|_F \leq \lambda_2} \sum_{(u,i)\in\Dcal_{\text{train}}} \epsilon_{u,i} \big[\Wbf_2 \Hbf_{out}^{(q-2)}\big]_{u,i} \Big\}\, \text{(applying Lemma \ref{lemma:transductive-contraction} on ReLU)} \\
& \text{recursively apply the peeling argument} \\
& \leq \prod_{i=1}^q \lambda_i \big(\frac{1}{n_1} + \frac{1}{n_2}\big) \Ebb\Big\{ \sup_{\max \|\zbf_u+\zbf_i\|_2 \leq B_{\text{NCF}}} \big\langle \zbf_u+\zbf_i, \sum_{(u,i)\in\Dcal_{\text{train}}} \epsilon_{u,i}\Xbf_{u,i} \big\rangle \Big\} \text{( under addition, for example)} \\
& \leq B_{\text{NCF}}\prod_{i=1}^q \lambda_i \big(\frac{1}{n_1} + \frac{1}{n_2}\big) \Ebb \Big\|\sum_{(u,i)\in\Dcal_{\text{train}}} \epsilon_{u,i}\Xbf_{u,i} \Big\|_2,
\end{split}
\end{equation}
where we use $\epsilon_{u,i}$ as a shorthand for $\epsilon_{u,i}(p_0)$.
By Jensen's inequality, the last line is upper-bounded by:
\[
B_{\text{NCF}}\prod_{i=1}^q \lambda_i \big(\frac{1}{n_1} + \frac{1}{n_2}\big) \sqrt{\sum_{(u,i)\in\Dcal_{\text{train}}} \Ebb\big[\epsilon_{u,i}(p_0)^2 \big] \|\Xbf_{u,i}\|_2} \leq B_{\text{NCF}}\prod_{i=1}^q \lambda_i\frac{n_1+n_2}{n_1n_2},
\]
where it is easy to compute that: $\Ebb\big[\epsilon_{u,i}(p_0)^2 \big] = \frac{2n_1n_2}{(n_1+n_2)^2}$.
By plugging in the relation between $n_1$ and $n_2$, we obtain the result stated in Theorem 3 for NCF. 

We then show the results for MCF. We use $\Sigmabf(p)$ to denote the matrix of transductive Rademacher random variables such that $\Sigmabf_{u,i}(p) = \epsilon_{u,i}(p)$. For $\Hbf:= \Zbf_U\Zbf_I^{\intercal} \in \Hcal_{out}$ under $\big\|\Zbf_U\Zbf_I^{\intercal}\big\|_{*}\leq \lambda_{\text{nuc}}$, we have:
\begin{equation}
\label{eqn:transductive-mcf}
\begin{split}
    R_{n_1+n_2}(\Hcal_{out},p_0) & = \big(\frac{1}{n_1} + \frac{1}{n_2}\big) \Ebb\Big\{ \sup_{\Hbf: \|\Hbf\|_* \leq \lambda_{\text{nuc}}} \sum_{(u,i)\in\Dcal_{\text{train}}} \Sigmabf_{u,i}\Hbf_{u,i} \Big\} \\
    & \leq \lambda_{\text{nuc}} \big(\frac{1}{n_1} + \frac{1}{n_2}\big) \Ebb \big\| \Sigmabf \big\|_{sp}\, \text{(by H\"{o}lder inequality, where $\|\cdot\|_{sp}$ is the spectral norm)} \\
    & \lesssim \lambda_{\text{nuc}} \frac{(n_1+n_2)\sqrt{|\Ical|} \sqrt[4]{\log |\Ucal|}}{n_1n_2}\, \text{(by Lemma \ref{lemma:mat-ineq}).}
\end{split}
\end{equation}
The second line holds because nuclear norm is the dual of the spectral norm.
Again we plug in the relation between $n_1$ and $n_2$ and obtain the stated result for MCF. 
\end{proof}

We move on to proving the generalization results for the inductive CF. We first state a useful corollary for inductive learning, when the training and testing distribution are different. 

\begin{corollary}
\label{corollary:weighted_kakade2009}
  Consider an arbitrary function class $\Fcal$ such that $\forall f \in \Fcal$ we have $\sum_{\xbf \in \Xcal}|f(\xbf)| \leq C$. Then, with probability at least $1 - \delta$ over the sample, for all margins $\gamma > 0$ and all $f \in \Fcal$ we have,
  \begin{equation}
  \begin{split}
      & P_{\text{test}}\Big(yf(\xbf) \leq  0\Big) \\ 
      & \leq \frac{1}{n}\sum_{i=1}^n \eta(\xbf_{u,i}) 1\big(y_i f(\xbf_{u,i}) < \gamma \big) + 4\frac{R_{n, \etabf}(\Fcal)}{\gamma} + \sqrt{\frac{\log(\log_2\frac{4C}{\gamma})}{n}} + \sqrt{\frac{\log(1/\delta)}{2n}},
\end{split}
\end{equation}
  where $\eta(\xbf_{u,i}) = P_{\text{test}}(\xbf_{u,i}) / P_{\text{train}}(\xbf_{u,i})$ gives the importance weighting, and $R_{n, \etabf}(\Fcal) = \Ebb \Big[ \sup_{f\in \Fcal}\frac{1}{n}\sum_{i = 1}^n \eta(\xbf_{u,i}) f(\xbf_{u,i})\epsilon_i\Big]$ is the weighted Rademacher complexity.
\end{corollary}
\begin{proof}
This corollary is adapted from the more general Theorem 1 of \cite{koltchinskii2002empirical} by considering the deviation of the testing distribution from the training distribution. The stated result is then obtained following the Theorem 5 of \cite{kakade2008complexity}.
\end{proof}

Therefore, the key step for proving the results in Theorem 4 is to bound the weighted Rademacher complexity for NCF and MCF. 

\textbf{Proof for Theorem 4.}
\begin{proof}
We first show the results for NCF, where we denote the predictor family by $\Fcal_{\text{NCF}}$. Here, using the similar setup from Theorem 1 of \cite{golowich2018size}, and combining the same arguments from the proof of Theorem 3, we arrive at
  $$n_1R_{n_1, \etabf}(\Fcal_{\text{NCF}}) \leq \frac{1}{\lambda} \log \Big (2^q \cdot \Ebb_{\epsilonbf} \Big (M\lambda \Vert\sum_{i = 1}^{n_1} \epsilon_i \eta(\xbf_{u,i}) \xbf_{u,i}\Vert \Big) \Big ),$$
  where $M = B_{\text{NCF}}\prod_{h = 1}^q \lambda_i$. Consider $Z := M \cdot \Vert\sum_{i = 1}^{n_1} \epsilon_i \eta(\xbf_{u,i}) \xbf_{u,i}\Vert$ that is a random function of the $n_1$ Rademacher variables. Then
  $$\frac{1}{\lambda} \log \Big \{2^q \Ebb \exp (\lambda Z) \Big \} = \frac{q \log (2)}{\lambda} + \frac{1}{\lambda} \log \{\Ebb \exp \lambda (Z - \Ebb Z)\} + \Ebb Z.$$
  By Jensen's inequality, we have
  $$\Ebb [Z] \leq M \sqrt{\Ebb_{\epsilonbf} \Vert\sum_{i = 1}^{n_1} \epsilon_i \eta(\xbf_{u,i}) \xbf_{u,i} \Vert^2} = M \sqrt{\sum_{i = 1}^{n_1} \eta(\xbf_{u,i})^2 \Vert \xbf_{u,i} \Vert^2}.$$
  In addition, we note that
  $$Z(\epsilon_1, \ldots, \epsilon_i, \ldots, \epsilon_{n_1}) - Z(\epsilon_1, \ldots, -\epsilon_i, \ldots, \epsilon_{n_1}) \leq 2M\eta(\xbf_{u,i}) \Vert \xbf_{u,i} \Vert.$$
  By the bounded-difference condition \citep{boucheron2013concentration}, $Z$ is a sub-Gaussian with variance factor $v = \frac{1}{4} \sum_{i = 1}^{n_1} (2M\eta(\xbf_{u,i}) \Vert \xbf_{u,i}\Vert)^2 = M^2 \sum_{i = 1}^{n_1} \eta(\xbf_{u,i})^2 \Vert \xbf_{u,i} \Vert^2$. So
  $$\frac{1}{\lambda} \{\Ebb \exp \lambda (Z - \Ebb Z)\} \leq \frac{\lambda M^2 \sum_{i = 1}^{n_1} \eta(\xbf_{u,i})^2\Vert \xbf_{u,i} \Vert^2}{2}.$$
  Taking $\lambda = \frac{\sqrt{2\log (2) q}}{M \sqrt{\sum_{i = 1}^{n_1} \eta(\xbf_{u,i})^2 \Vert \xbf_{u,i} \Vert^2}}$, it follows that
  \begin{equation}
  \begin{split}
  & \frac{1}{\lambda} \{2^q \Ebb \exp \lambda Z\} \\ 
  & \leq M (\sqrt{2 \log (2) q} + 1) \sqrt{\sum_{i = 1}^{n_1} \eta(\xbf_{u,i})^2 \Vert \xbf_{u,i} \Vert^2} \leq \sqrt{n_1} CM(\sqrt{2 \log (2)q} + 1) \sqrt{\frac{1}{n_1}\sum_{i = 1}^{n_1} \eta(\xbf_{u,i})^2},
  \end{split}
  \end{equation}
  where $C=1$ for NCF-c and $C=\sqrt{2}$ for NCF-a. 
By law of large number, $\frac{1}{n_1} \sum_{i = 1}^{n_1} \eta(\xbf_{u,i})^2 = D(P_{\text{test}}||P_{\text{train}}) + 1 + o(\frac{1}{\sqrt{n_1}})$. The desired result for NCF follows. 

Then we show the result for MCF. 
Here, we provide a general result for generalization of using importance weighting under distribution shift. We assume the training distribution is $P$, testing distribution is $Q$, and the weight for any $(u,i)$ instance is therefore given by: $w_i = Q(i)/P(i)$. We define $\Ncal(\frac{1}{n},\Fcal,\ell_{2}^{n})$ as the $\frac{1}{n}$-covering number for $\Fcal$ in $\|\cdot\|_{2}$ based on $n$ i.i.d samples from $P$, and $d(P\|Q) = \int_{\Scal_Q} (dP / dQ) dP$ is a divergence measure, where $\Scal$ is used to denote the support of a distribution. We use $\Ebb_{Q}R(f)$ to denote the testing risk, and use $\Ebb_{P_{n,w}}R(f)$ to denote the weighted empirical training risk.

Our proof leverages the classical "double sampling" technique from \citet{anthony2009neural}. We use $\vec{\zbf} = [\zbf_1,\ldots,\zbf_n]$ to denote the observed samples, and $\vec{\zbf}' = [\zbf'_1,\ldots,\zbf'_n]$ to denote an i.i.d copy of $\vec{\zbf}$. We first define by: 
\[
UB_1(f, \vec{\zbf}, t) = \frac{1}{n}\sum_{i=1}^n w_i \ell_f(\zbf_i) + \frac{3Mt}{n} + \sqrt{\frac{2d(P\|Q)t}{n}},
\]
and 
\[
UB_2(f, \vec{\zbf}, t) = \frac{1}{n}\sum_{i=1}^n w_i \ell_f(\zbf_i) + \frac{9Mt}{n} + \sqrt{\frac{18d(P\|Q)t}{n}}.
\]
Given $f\in\Fcal$, let $A := \Ebb_{Q}R(f) + \frac{6Mt}{n} + \sqrt{\frac{8d(P\|Q)t}{n}} $it holds:
\begin{equation*}
\begin{split}
    \Pbb\big( UB_2(f, \vec{\zbf}', t) \leq UB_1(f, \vec{\zbf}, t) \big) & \leq \Pbb\big(UB_2(f, \vec{\zbf}', t) \leq A \big) + \Pbb\big(UB_1(f, \vec{\zbf}, t) \geq A \big) \\
    &\leq 2\Pbb\Big( \big| \Ebb_{Q}R(f) - \frac{1}{n}\sum w_i\ell_f(\zbf_i) \big| \geq \frac{3Mt}{n} + \sqrt{\frac{2d(P\| Q)t}{n}} \Big) \\
    &\leq 4e^{-t},
\end{split}
\end{equation*}
where the last line follows from Lemma \ref{lemma:append-lemma-importance-weighting}. 
Next, we define $\Ccal(\epsilon, \ell\circ\Fcal, \ell_1(P_{n,w}))$ be the $\epsilon$-cover of $\ell\circ\Fcal$ with the empirical $\ell_1$ norm under $P_{n,w}$ such that for any $f\in\ell\circ\Fcal$, there exists $\tilde{f}$ in $\Ccal(\epsilon, \ell\circ\Fcal, \ell^n_1)$: $\big|\frac{1}{n}\sum w_i f(\zbf_i) -  \frac{1}{n}\sum w_i \tilde{f}(\zbf_i) \big| \leq \epsilon$, for $(\zbf_1,\ldots,\zbf_n)$ sampled i.i.d from $P$.
It then holds:
\begin{equation*}
\begin{split}
    &\Pbb\big(\exists f\in\Fcal: \Ebb_{Q}R(f) \geq UB_2(f, \vec{\zbf}, t) + \epsilon \big) \\
    &= \Ebb_{\vec{\zbf}} \sup_{f\in\Fcal} I[\Ebb_Q R(f) \geq UB_2(f, \vec{\zbf}, t) + \epsilon ] \\
    &\overset{(a)}{\leq} \Ebb_{\vec{\zbf}} \sup_{f\in\Fcal} I[\Ebb_Q R(f) \geq UB_2(f, \vec{\zbf}, t) + \epsilon ]\cdot 2\Ebb_{\vec{\zbf}'}I[UB_1(f, \vec{\zbf}', t) \geq \Ebb_{Q}R(f)] \\
    &\leq 2 \Ebb_{\vec{\zbf}, \vec{\zbf}'} \sup_{f\in\Fcal}I[UB_1(f, \vec{\zbf}', t) \geq UB_2(f, \vec{\zbf}, t) + \epsilon] \\
    &\overset{(b)}{\leq} 2 \Pbb_{\sigma(\vec{\zbf}, \vec{\zbf}')}\big(\exists \tilde{f} \in \Ccal(\epsilon, \ell\circ\Fcal, \ell_1(P_{n,w})):\, UB_1(f, \sigma(\vec{\zbf}, \vec{\zbf}'), t) \geq UB_2(f, \sigma(\vec{\zbf}, \vec{\zbf}'), t) \big) \\
    &\leq 8\Ncal(\epsilon, \ell\circ\Fcal,\ell_1(P_{n,w})) \cdot e^{-t},
\end{split}
\end{equation*}
where (a) follows from the fact that $\Ebb_{\vec{\zbf}'}I[UB_1(f, \vec{\zbf}', t) \geq \Ebb_{Q}R(f)]\geq \frac{1}{2}$ as suggested by Lemma \ref{lemma:append-lemma-importance-weighting}, and in step (b) we let $\sigma(\vec{\zbf}, \vec{\zbf}')_i$ takes the value of $\zbf_i,\zbf'_i$ with equal probability, and the inequality follows from the definition of the $\epsilon$ cover. Notice that $\Ncal(\epsilon, \ell\circ\Fcal,\ell_1(P_{n,w})) \leq \Ncal(\epsilon/M, \ell\circ\Fcal, \ell^{n}_{2})$. We take $\epsilon = \frac{1}{n}$, which solves for $t = c\log\frac{1}{\delta} + \log\Ncal(\epsilon/M, \ell\circ\Fcal, \ell^{n}_{2})$ for some constant $c$. We use $\Ncal_{2}(\epsilon, \Fcal\big)$ as a shorthand to denote the covering number under the empirical $\ell_2$ norm.

By rearranging terms, we have that for any $\delta>0$, with probability at least $1-\delta$, it holds:
    \begin{equation}
    \label{eqn:append-covering-weighting}
    \begin{split}
        \Ebb_{Q}R(f) \lesssim \Ebb_{P_{n,w}}R(f) + \frac{M\big(\log\frac{1}{\delta} + \log\Ncal_{2}(\frac{1}{n}, \Fcal\big)}{n} + \sqrt{\frac{Md(P\|Q)\big(\log\frac{1}{\delta} + \log\Ncal_{2}(\frac{1}{n}, \Fcal\big)}{n}}, \\
    \end{split}
    \end{equation}
when the loss function $\ell$ is Lipschitz and we ignore the constants. Hence, the remaining task is to bound the covering number of the matrix factorization class $\Fcal_{\text{MCF}}$ with a bounded nuclear norm. When $\|\Xbf\|_F = 1$, the nunclear norm is strictly a lower bound of the matrix rank. Therefore, we use the covering number of low-rank matrix as a upper bound, which according to Lemma 3.1 of , if $\text{rank}(\Xbf)\leq\lambda_{\text{nuc}}$, then the covering number for $\Fcal_{\text{MCF}}$ under the matrix Frobenius norm obeys:
\[
\Ncal(\epsilon, \Fcal_{\text{MCF}}, \|\cdot\|_F) \leq (9/\epsilon)^{(|\Ucal| + |\Ical| + 1) \lambda_{\text{nuc}}},
\]
which we plug back to (\ref{eqn:append-covering-weighting}) and obtain the desired result.
\end{proof}

\textbf{Discussion: the tightness of the generalization bounds.}

When proving the bounds for both the transductive and inductive CF, we use the standard generalization results based on Rademacher complexity, according to \citet{bartlett2002rademacher} and \citet{el2009transductive}. Their results rely on the following components: 
\begin{itemize}
    \item a symmetrization argument to bound the testing error;
    \item the Mcdiarmid's inequality for bounded difference;
    \item the Rademacher contraction inequalities (Lemma \ref{lemma:inductive-contraction} and Lemma \ref{lemma:transductive-contraction}).
\end{itemize}
All these results are known to be tight, so the question narrows down to the tightness of our bounds on the Rademacher complexities. To see that the provided result for NCF are tight up to a constant factor of $\sqrt{q}$, we simply consider the following construction: $\xbf_{u,i} \mapsto \lambda_1\cdot\lambda_2 \cdots \lambda_q \cdot \sigma(\Wbf_{q+1}\xbf_{u,i})$, which belongs to the general NCF family, and the worst-case scenario for computing Rademacher complexity is obvious given by: 
\[\lambda_1\cdot\lambda_2 \cdots \lambda_q \cdot \sigma\Big(\max_{u,i: \|\zbf_u + \zbf_i\|_2} \zbf_u + \zbf_i\Big),
\]
where we use NCF-a for example. Here, all the training samples are $(u,i) = \arg\max_{u,i: \|\zbf_u + \zbf_i\|_2}$. Consequently, the Rademacher complexity is at least $B_{\text{NCF}} \prod_{i=1}^q \lambda_i$. 

On the MCF side, it is pointed out by \cite{bandeira2016sharp} that for the spectral norm of Rademacher matrix, the dependency on $\sqrt{|\Ical|}$ and $\sqrt[4]{\log|\Ucal|}$ are inevitable, and therefore our result for transdutive MCF is also tight up to constants. As for the inductive setting, we refer to the results in \citet{candes2009exact} that the bound with $\sqrt{(\sqrt{|\Dcal|}+ \sqrt{|\Ucal|})}\big/\sqrt{n_1}$ is not improvable. Notice that they assume a uniform distribution over the matrix indices, where our result is distribution-free. However, despite several minor discrepancies, their setting can be recognized as a special case of our problem, and thus we conjecture that our results for MCF can be further tightened to get rid of the $\log n$ dependency, e.g. by deriving the covering number for nuclear-norm-constraint matrices instead of using the existing result for low-rank matrices.

\section{Auxiliary Lemmas}
\label{sec:lemma}

\begin{lemma}[Adapted from \citet{cho2009kernel}]
\label{lemma:arc-cos-kernel}
Define the shorthand $\varsigma(u) := \frac{1}{2}(1+sign(u))$. For $\xbf, \ybf \in \Rbb^d$, the $n^{th}$ order arc-cosine kernel is defined as:
\[
K_n(\xbf, \ybf) = \frac{1}{\pi}\|\xbf\|_2^n \|\ybf\|_2^n J_n(\theta),
\]
where $J_n(\theta) = (-1)^n(\sin \theta)^{2n+1}\big(\frac{1}{\sin\theta}\frac{d}{d\theta} \big)^n \big(\frac{\pi-\theta}{\sin\theta} \big)$. Then the arc-cosine kernel has an equivalent integral representation:
\[
K_n(\xbf, \ybf) = 2\int d\wbf \frac{\exp(-\frac{\|\wbf\|_2^2}{2})}{(2\pi)^{d/2}}\varsigma(\wbf^{\intercal}\xbf) \varsigma(\wbf^{\intercal}\ybf) (\wbf^{\intercal}\xbf)^n (\wbf^{\intercal}\ybf)^n.
\]
\end{lemma}

For instance, when $n=0$, $K_0(\xbf, \ybf) = 1- \frac{1}{\pi}\cos^{-1}\frac{\xbf^{\intercal}\ybf}{\|\xbf\|_2 \|\ybf\|_2}$.

\begin{lemma}[Euler's Theorem for homogeneous functions]
\label{lemma:homogeneous}
If $f(\thetabf,\cdot)$ is L-homogeneous, then:
\begin{itemize}
    \item $\nabla f(\alpha \thetabf,\cdot) = \alpha^{L-1}\nabla f(\thetabf,\cdot)$,
    \item $\big\langle \thetabf,  \nabla f(\thetabf,\cdot) \big\rangle = L\cdot f(\thetabf,\cdot),$
\end{itemize}
if $f(\thetabf,\cdot)$ is differentiable. 
\end{lemma}
The proof for the Lemma is relatively standard, so we do not repeat it here.

\begin{lemma}[\citet{ledoux1991probability}]
\label{lemma:inductive-contraction}
Let $f:\Rbb_+ \to :\Rbb_+$ be convex and increasing. Let $\phi_i: \Rbb \to \Rbb$ satisfy $\phi_i(0)=0$ and is Lipschitz with constant L. Then for any $\Vcal \in \Rbb^n$:
\[
\Ebb f\Big(\frac{1}{2}\sup_{\vbf \in \Vcal} \Big|\sum_{i=1}^n\epsilon_i\phi_i(\vbf_i) \Big| \Big) \leq \Ebb f\Big(L\cdot\frac{1}{2}\sup_{\vbf \in \Vcal} \Big|\sum_{i=1}^n\epsilon_i\vbf_i \Big|\Big),
\]
where $\epsilon_i$ are the standard Rademacher random variables.
\end{lemma}
The similar contraction result in the transductive setting is given as below.

\begin{lemma}[Lemma 5 of \citet{el2009transductive}]
\label{lemma:transductive-contraction}
Consider $\Vcal \in \Rbb^{n_1+n_2}$. Let $f, g: \Rbb \to \Rbb$ be such that for all $1\leq i \leq n_1+n_2$ and $\vbf,\vbf' \in \Vcal$, $\big|f(\vbf_1) - f(\vbf'_1)\big| \leq L\big|g(\vbf_1) - g(\vbf'_1)\big|$, then:
\[
\Ebb\Big\{\sup_{\vbf \in \Vcal}\sum_{i=1}^{n_1+n_2}\epsilon_i(p)f(\vbf_i)\Big\} \leq \Ebb\Big\{L\cdot \sup_{\vbf \in \Vcal}\sum_{i=1}^{n_1+n_2}\epsilon_i(p)g(\vbf_i)\Big\},
\]
for any $p\in[0,1/2]$.
\end{lemma}

\begin{lemma}[Concentration of random matrices.]
\label{lemma:mat-ineq}
Let $\Xbf$ be a $m\times n$ matrix with $m>n$. 
\begin{itemize}
    \item By \citet{bandeira2016sharp}, if $\Xbf$ is composed of independent Rademacher random variables, then:
    \[
    \Ebb \|\Xbf\|_{sp} \lesssim \sqrt[4]{\log n}\sqrt{m}.
    \]
    \item By \citet{tropp2015introduction}, if $\Xbf$ is composed of independnet zero-mean random variables, then:
    \[
    \Ebb \|\Xbf\|_{sp} \lesssim \max_i\sqrt{\sum_{j}\Ebb \Xbf_{i,j}^2} + \max_j\sqrt{\sum_{i}\Ebb \Xbf_{i,j}^2} + \sqrt[4]{\sum_{i,j}\Ebb\Xbf_{i,j}^4}
    \]
\end{itemize}
\end{lemma}

\begin{lemma}
\label{lemma:append-lemma-importance-weighting}
Let $P$ and $Q$ be the training and target distribution supported on $\Scal_P, \Scal_Q \subseteq \Dcal$, and $w(u,i) = \frac{Q(u,i)}{P(u,i)}$, for $(u,i)\in \Scal_P \cap \Scal_Q$. $\Dcal_n$ consists of training instances sampled i.i.d from $P$.
Given a single hypothesis $f\in\Fcal$, suppose $w_{ui}\in (0,1)$, for any $\delta>0$, it holds with probability at least $1-\delta$ that:
\begin{equation*}
\Ebb_{Q}R(f) \leq \Ebb_{P_{n,w}}R(f) + \frac{2\log\frac{1}{\delta}}{3n} + \sqrt{\frac{2d(P\|Q)\log\frac{1}{\delta}}{n}},
\end{equation*}
where $d(P\|Q) = \int_{\Scal_Q} (dP / dQ) dP$.
\end{lemma}

The above result for importance weighting of a single hypothesis is stated in the Theorem 1 of \cite{cortes2010learning}.

\section{Experiment details}
\label{sec:add-experiment}

Both MCF and NCF are implemented in Tensorflow 2.3, and the computation infrastructure involves a Nvidia Tesla V100 GPU with 32 Gb memory. We provide a kernel SVM implementation using the python Scikit-learn package, and a CVX implementation using the Python CVXOPT API. The code is also provided as a part of the supplementary material. 

As we mentioned in the main paper, we use the log loss $\ell(u) = \log(1+\exp(-u))$ for all our experiments. The metrics we consider, i.e. the ranking AUC, top-k hitting rate and NDCG are computed from a scan over all the possible candidates. Since there is only one relevant item (the last interacted item) for each user, then according to \citet{rendle2019evaluation}, the metric computations are simplified to:
\begin{itemize}
    \item Suppose the ranking of the relevant item, among the whole set of candidate items: $\tilde{\Ical}(u) = \Ical  - \{i\, |\, (u,i) \in \Dcal_{\text{train}}\}$, is given by $r$ for user $u$. Then the ranking AUC for user $u$ is given by: AUC(u)$= \big(|\tilde{\Ical}(u)| - r\big) / \big(|\tilde{\Ical}(u)| - 1\big)$;
    \item The top-k hitting rate for user $u$ is: HR@k(u)$= 1[r \leq k]$;
    \item the top-k NDCG for user $u$ is: NDCG@k(u)$= 1[r \leq k]\frac{1}{\log_2(r+1)}$.
\end{itemize}
Then the overall metric is computed by taking the population average.  

Since a large proportion of our discussion surrounds the gradient descent, we use the SGD optimizer unless otherwise specified. 

\textbf{Experiment for Figure 1.}

We point out that the data is relatively small after the subsampling so that the performance can vary significantly across different sampled datasets. Therefore, we do not repeat the experiments on different sampled datasets, but on the different splits (of generating negative samples). We point out that sampling the movies by popularity is necessary, because otherwise, the obtained data will be very sparse and thus not representative of the original dataset. After sampling 200 movies and 200 users, we end up with approximately 30,000 records for training when setting the number of negative samples to 4. Notice that this already requires a 30,000$\times$30,000 matrix for the kernel method.

The experiment setting follows that of the transductive CF, where the negative samples are constructed via sampling without replacement, and the random splits are conducted before training. As for the positive label, we adopt the standard setting where the last user-item interaction is used for testing, and the rest are used for training. 
Notice that we do not need validation data in this case, because there are no tuning parameters as we fixed the dimensions and learning rate, and do not use regularizations of any kind.   

\textbf{Experiment for Figure 2.}

We first use CVXOPT to obtain the exact solution of the convex nuclear-norm max-margin problem in Theorem 2. The optimality is reached with the duality gap $\leq 1e^{-19}$. We use the same dataset generated for the experiments in Figure 1. As we stated before, we consider the unscaled $N(0,0.1)$ initialization, set the moderate width of $d=32$ and the learning rate of $0.1$. Again, the repetitions are over the random splits of the negative samples. The normalized margin for the nuclear-norm max-margin problem is obtained via: $\gamma^{\text{SVM}}_{u,i} = y_{u,i}{\Xbf}_{u,i} / \|{\Xbf}\|_*$, and the normalized margin for MCF is obtained via: $\gamma^{\text{MCF}}_{u,i} = y_{u,i}\langle \zbf_u, \zbf_i \rangle \big/ \|\Zbf_U \Zbf_I^{\intercal}\|_*$.

\textbf{Experiment for Figure 3.}

We use all the data for the inductive CF task, where the last user-item interaction is used for testing, the second-to-last is used for validation, and the rest are used for training. The negative samples are also obtained via sampling without replacement, where we fix the number of negative samples to 4 for each positive interaction. Due to the sampling without replacement, setting the number of negative samples per positive to a high value may not increase the total number of negative samples proportionally (e.g. a user may have watched 300 out of the 1,000 movies). Therefore, we do not tune the number of negative samples per positive.  

We select $d$ from $\{16, 32, 48\}$ for MCF, and $d \in\{16, 32, 48\}, d1 \in \{8, 16, 24\}$ for NCF (since we study the two-layer setting). We experiment with a learning rate of $\{0.01, 0.05, 0.1, 0.2\}$, and do not find a significant difference since we study the converged behavior after several thousand epochs. For illustration purpose, we use $0.1$ as the learning rate. We make the hyper-parameter selection over one run and fix it during the rest repetitions. We find $d=32$ and $d=32, d_1=16$ gives the best performance for MCF and NCF, as we reported in Figure 3. The results reported in Figure 1 are the average over 10 random splits of the negative samples (and random initializations).

\textbf{Experiment for Figure 4 and 5.}

The learning of the relevance mechanism, exposure mechanism and the final data generating mechanism are stated in Section 6. When learning the relevance and exposure mechanism, we do not conduct the train/test split, since this step aims to construct the mechanisms according to the data, rather than examining how the models fit the data. When the $g_{\text{rel}}$ and $g_{\text{expo}}$ are given by the MCF, we use $d=32$; and when they are given by the NCF, we use $d=32, d_1=8$. We do not tune these hyperparameters due to the same reason stated above. We use the mean squared-root error (MSE) and the binary cross-entropy loss when training the relevance and exposure models. Unlike training for the CF tasks, we use the Adam optimizer with a learning rate of 0.001, which we find to work well with the MSE. 

After we settle down with the learnt relevance and exposure mechanism, we generate the observed data according to the click model. Before that, we tune the $\mu$ and $\rho$ in the relevance model to ensure the generated data has about the same sparsity as the original data. Since neither MCF nor NCF leverage the sequential information, the order by which we generate the interacted items for a specific user is not important. After we generate the click data for all the user-item pairs, we sample from the positive and negative parts with replacement to construct the training, validation and testing data, according to the empirical data distribution (which is a uniform distribution over the indices). 

All the results reported in Figure 4 and 5 about NCF are from the concatenation. We observe that NCF with addition has very similar patterns in the inductive CF experiments, so we do not report its results to avoid repetition. For MCF, we select $d\in\{16,32,48\}$, and for NCF we select $d\in\{16,32,48\}, d_1\in\{8,16,24\}$. We also do not experiment on using regularizations here. We repeat the generation, training, evaluation process for ten times. Each time, we tune the hyperparameters according to the validation performance. The evaluation metric we report is the biased and unbiased ranking AUC, where the biased AUC is computed in the regular fashion, and the unbiased AUC is computed via: unbiased-AUC(u,i)$= \big(|\tilde{\Ical}(u)| - r(i)\big) / \big(|\tilde{\Ical}(u)| - 1\big)\cdot \frac{1}{p(O_{u,i}=1)}$, where $r(i)$ is the ranking of item $i$ in $\tilde{\Ical}(u)$.

\end{document}


\maketitle

We provide the technical proofs, experiment details as well as the relegated discussions mentioned in the paper. The appendix for Section 4, 5, 6 are provided in \ref{sec:sec4}, \ref{sec:sec5}, \ref{sec:sec6}, respectively. The auxiliary lemmas in our proofs are summarized in \ref{sec:lemma}. The additional experiment details are provided in \ref{sec:add-experiment}.

\section{Material for Section 4}
\label{sec:sec4}

When optimized by the gradient descent: $\thetabf^{(t)} = \thetabf^{(t-1)} - \eta \nabla \Lcal(\thetabf^{(t-1)})$ using an infinitesimal learning rate, the updates in the parameter space can be equivalently described by the \emph{gradient flow}:
\[
\frac{d \thetabf^{(t)}}{dt} = -\nabla_{\thetabf} \Lcal\big(\thetabf^{(t)} \big).
\]

A nice property of gradient flow is that if $\Lcal$ is smooth, then the objective function is non-increasing during the updates since:
\begin{equation}
\label{eqn:grad-flow}
\frac{d\Lcal\big(\thetabf^{(t)} \big)}{dt} = -\Big \langle \nabla \Lcal\big(\thetabf^{(t)} \big), \frac{d \thetabf^{(t)}}{dt} \Big \rangle = -\Big\| \frac{d \thetabf^{(t)}}{dt} \Big\|_2^2,
\end{equation}
which is non-negative. Therefore, it saves the discussion of choosing the proper learning rate to ensure the same property in gradient descent. 

Another preparation work is to reformulated NCF, especially the input (which are essentially the users and items embeddings), into a standard form of FFN: $\Wbf_1 \sigma\big(\Wbf_2 \xbf_{u,i} \big)$ where $\xbf_{u,i}$, are fixed and do not depend on the unknown embeddings. 

We use $\ebf_u \in \Rbb^{|\Ucal|}$ and $\ebf_i\in\Rbb^{|\Ical|}$ to denote the one-hot encoding of the user and item id. Also, we use $\ebf_{u,i} \in \Rbb^{|\Ucal||\Ical|}$ to denote the one-hot encoding of user+item id combined. Therefore, NCF with addition can be efficiently represented as:
\begin{equation}
\label{eqn:ncf-a}
f^{\text{NCF-a}}(u,i) = \Wbf_1\sigma\big(\Wbf_2 \xbf_{u,i}  \big), \text{ with } \Wbf_2 = \big[\Zbf_U^{\intercal}, \Zbf_I^{\intercal}\big]^{\intercal} \text{ and } \xbf_{u,i} = [\ebf_u, \ebf_i]^{\intercal};
\end{equation}
and NCF with concatenation is:
\begin{equation}
\label{eqn:ncf-c}
f^{\text{NCF-c}}(u,i) = \Wbf_1\sigma\big(\Wbf_2 \xbf_{u,i}  \big), \text{ with } \Wbf_2 = \begin{bmatrix}
\zbf_{i_1} & \zbf_{u_1} \\
\zbf_{i_1} & \zbf_{u_2} \\
\dots & \cdots 
\end{bmatrix}^{\intercal} \text{ and } \xbf_{u,i} = \ebf_{u,i}^{\intercal}.
\end{equation}

Recall the linearization from Section 4, where we denote the first-order Taylor approximation of $f(\thetabf;\, \cdot)$ by $\tilde{f}(\thetabf;\, \cdot)$ such that:
\[
\tilde{f}\big(\thetabf; (u,i)\big) := f\big(\thetabf^{(0)};(u,i)\big) + \Big \langle \thetabf - \thetabf^{(0)} , \nabla f\big(\thetabf^{(0)};(u,i)\big) \Big \rangle.
\]
Also, we use $d$, $d_1$ to denote the embedding dimension and the dimension of the first hidden layer in the FFN for NCF, and assume $d_1=d$ w.l.o.g. We still consider the scaled initialization $N(0, \alpha/d)$ where $\alpha$ is a constant.
Under the infinite-width limit, we can show that the NTK converges to a fixed kernel at initialization, which we referred to as the collaborative filtering kernel. 

\begin{lemma}
\label{lemma:cf-kernel}
For the MCF and NCF we described in Section 3.1, the neural tangent kernel $K\big( (u,i), (u',i')  \big) = \big\langle \nabla f\big(\thetabf; (u,i)\big) , \nabla f\big(\thetabf; (u',i')\big) \big\rangle$ have the following convergence result:
\[
\lim_{d\to\infty} K\big( (u,i), (u',i')  \big) = K_{\text{CF}}\big((u,i) , (u',i') \big) := a + b\cdot \mathbbm{1}[i=i'] + c \cdot \mathbbm{1}[u=u'],
\]
where under the $N(0,1/d)$ initializations, $a=0$, $b=c=1$ for $K_{\text{MCF}}$; $a=1/\pi$, $b=c=\frac{1}{2} - \frac{1}{2\pi}$ for $K_{\text{NCF-c}}$, and $a=1/\pi$, $b=c=\frac{1}{2} - \frac{(2-\sqrt{3})}{2\pi}$ for $K_{\text{NCF-a}}$.
\end{lemma}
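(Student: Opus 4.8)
The plan is to compute the NTK as an explicit sum of per-parameter gradient inner products, pass to the infinite-width limit term by term, and reduce everything to a handful of cases indexed by whether the two queries share a user and/or an item. For each model $f(\thetabf;(u,i))$ I would split $\thetabf$ into the embedding block and the read-out weights $\Wbf_1$ and write $K = \sum_{\text{blocks}}\big\langle\nabla_{\text{block}}f(u,i),\nabla_{\text{block}}f(u',i')\big\rangle$, noting that only blocks excited by both queries survive. Since the inputs are one-hot, a user-embedding block survives only when $u=u'$ and an item-embedding block only when $i=i'$, which is the structural origin of the $\ind[u=u']$ and $\ind[i=i']$ terms; the read-out weights, shared across all queries, produce the query-independent constant $a$.

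For MCF the score is (bi)linear in the embeddings, so every surviving term is an inner product of two embedding vectors such as $\langle\zbf_i,\zbf_{i'}\rangle=\sum_{r}z_{i,r}z_{i',r}$. Under the $N(0,\alpha/d)$ initialization the law of large numbers gives $\langle\zbf_i,\zbf_{i'}\rangle\to\alpha\,\ind[i=i']$ almost surely, and assembling the user- and item-side contributions yields the linear kernel; the absence of any nonlinearity is what forces $a=0$.

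For NCF the content is in the ReLU. Each pre-activation coordinate is a linear form in the Gaussian embeddings, so in the limit the relevant pairs of pre-activations are jointly Gaussian and, by a law-of-large-numbers over the $d$ hidden coordinates, every surviving term converges to a dual-activation expectation. I would use the two arc-cosine integrals $\Ebb[\sigma(g)\sigma(g')]$ and $\Ebb[\sigma'(g)\sigma'(g')]$ for a unit bivariate Gaussian $(g,g')$ of correlation $\rho=\cos\theta$, namely $\tfrac{1}{2\pi}\big(\sin\theta+(\pi-\theta)\cos\theta\big)$ and $\tfrac{\pi-\theta}{2\pi}$. The decisive step is reading off $\rho$ from the sharing pattern: for NCF-a the pre-activation is $\zbf_u+\zbf_i$, so the covariance-to-variance ratio gives $\rho=\tfrac12\big(\ind[u=u']+\ind[i=i']\big)\in\{0,\tfrac12,1\}$, while for NCF-c the concatenated pre-activations split into a user part and an item part, each collapsing to a single-embedding kernel and making the limit exactly additive. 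Evaluating the integrals at $\theta\in\{\tfrac{\pi}{2},\tfrac{\pi}{3},0\}$ produces the $1/\pi$ constant and, for NCF-a, the $\sqrt{3}$ entering through $\sin(\pi/3)=\sqrt3/2$; matching the three resulting values to the template $a+b\,\ind[i=i']+c\,\ind[u=u']$ fixes the constants.

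I expect two main obstacles. The first is scaling bookkeeping: one must check that each gradient block contributes at the same order as $d\to\infty$, so that no term spuriously vanishes or blows up, and justify replacing the finite-width sums by their Gaussian dual-activation limits through concentration (the auxiliary lemmas). The second and more delicate is the case $\rho=\tfrac12$, where the queries share exactly one of the user and item: here the two pre-activations are genuinely correlated rather than independent or identical, the off-diagonal arc-cosine evaluation at angle $\pi/3$ must be done carefully, and it is exactly this case that distinguishes NCF-a, where addition couples the two embeddings, from NCF-c, where concatenation keeps them orthogonal.
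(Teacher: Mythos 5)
Your proposal is correct and follows essentially the same route as the paper's proof: decompose the NTK into the readout-weight and embedding/first-layer blocks, let the one-hot structure kill all cross terms except on shared users/items, pass to Gaussian expectations by a law of large numbers over the $d$ hidden coordinates, and evaluate the resulting arc-cosine (dual-activation) integrals case by case, with MCF handled directly via $\frac{1}{d}\big(\langle\zbf_u,\zbf_{u'}\rangle\ind[i=i'] + \langle\zbf_i,\zbf_{i'}\rangle\ind[u=u']\big)$. If anything, your parameterization by the correlation $\rho\in\{0,\tfrac12,1\}$ is slightly more systematic than the paper's truncated-normal case computations, and it makes explicit the $\theta=\pi/3$ evaluation (the source of the $\sqrt{3}$ in NCF-a) that the paper dismisses as ``basically the same computations.''
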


\begin{proof}
We first consider NCF. We first reformulate NCF's formulation in (\ref{eqn:ncf-a}) and (\ref{eqn:ncf-c}) as:
\[ 
f(\xbf_{u,i}) = \sqrt{\frac{2}{d}}\Wbf_1\sigma\big(\Wbf_2 \xbf_{u,i} \big), \Wbf_1, \Wbf_2 \sim N(0,1),
\]
where we extract the $1/d$ variance to the front, and add the $\sqrt{2}$ factor for convenience. 

Notice that: $\frac{\partial f(\xbf_{u,i})}{\partial \Wbf_{1,j}} = \sqrt{2/d} \big( \Wbf_{2,j}^{\intercal} \xbf_{u,i}\big)_{+}$, and 
\[
\nabla_{\Wbf_{2,j}} f(\xbf_{u,i}) = \sqrt{2/d} \Wbf_{1,j} \xbf_{u,i} \ind\big[ \Wbf_{2,j}^{\intercal} \xbf_{u,i} \geq0 \big],
\] 
where $\Wbf_{1,j}$ is the $j^{th}$ element of the vector $\Wbf_1$, and $\Wbf_{2,j}$ is the $j^{th}$ column of the matrix $\Wbf_2$. For notation simplicity, we define $v_j = \Wbf_{1,j}$ and $\wbf_j = \Wbf_{2,j}$.

Therefore, the NTK for NCF is given by:
\begin{equation}
\label{eqn:ntk-ncf}
\begin{split}
    &\nabla f(\xbf_{u,i})^{\intercal} \nabla f(\xbf_{u',i'})  = \\ & \quad \frac{2}{d}\sum_{j=1}^d \big(\wbf_j^{\intercal}\xbf_{u,i}\big)_{+} \big(\wbf_j^{\intercal}\xbf_{u',i'}\big)_{+} + \frac{2}{d}\sum_{j=1}^d(v_j\xbf_{u,i})^{\intercal}(v_j\xbf_{u',i'}) \ind\big[\wbf_j^{\intercal}\xbf_{u,i}\geq 0 \big] \ind\big[\wbf_j^{\intercal}\xbf_{u',i'}\geq 0 \big].
\end{split}
\end{equation}

Using the mean and variance formula of truncated normal distribution, following the setup in (\ref{eqn:ncf-a}) and (\ref{eqn:ncf-c}), for NCF with concatenation we have:
\begin{itemize}
    \item When $u\neq u'$ and $i\neq i'$, we have:
    \begin{equation*}
    \begin{split}
        K\big((u,i), (u',i')\big) &= \frac{2}{d}\sum_{j=1}^d (\wbf_j)_{+} (\wbf^*_j)_{+}, \quad \wbf^*_j \text{ is an i.i.d copy of } \wbf_j \\
        & \overset{d\to\infty}{=} 2\Ebb\big[ (\wbf_j)_{+} (\wbf^*_j)_{+} \big] = \frac{4}{\pi}
    \end{split}
    \end{equation*}
    \item When $u=u'$ or $i=i'$, we have: 
    \begin{equation*}
    \begin{split}
        K\big((u,i), (u',i')\big) &= \frac{1}{d}\sum_{j=1}^{d} (\wbf_j)_{+}^2 + \frac{2}{d}\sum_{j=1}^d (\wbf_j)_{+} (\wbf^*_j)_{+}, \quad \wbf^*_j \text{ is a copy of } \wbf_j  \\
        & \overset{d\to\infty}{=} var\big((\wbf_j)_{+}\big) + 2\Ebb\big[ (\wbf_j)_{+} (\wbf^*_j)_{+} \big] = 2 + \frac{2}{\pi}
    \end{split}
    \end{equation*}
    \item When $u=u'$ and $i=i'$, we leverage the integral formulation of arc-cosine kernel $K_0$ in Lemma \ref{lemma:arc-cos-kernel} such that:
    \begin{equation*}
    \begin{split}
        K\big((u,i), (u',i')\big) & = K_{0}(\xbf_{u,i}, \xbf_{u',i'}) -  \frac{2}{d}\sum_{j=1}^d (\wbf_j)_{+} (\wbf^*_j)_{+} + \frac{2}{d}\sum_{j=1}^{d} (\wbf_j)_{+}^2  \\
        & \overset{d\to\infty}{=} 8 - \frac{16}{\pi}.
    \end{split}
    \end{equation*}
\end{itemize}

The results for NCF under addition is obtained using basically the same computations. 
For MCF, on the other hand, we reformulate the predictor as: $f(u,i) = \frac{1}{d}\big\langle \Zbf_U\Zbf_I, \Xbf_{u,i}  \big \rangle$ where $\Xbf_{u,i} = \ebf_u \ebf_i^{\intercal}$, and the embeddings follow $N(0,1)$ initializations. Then it holds that:
\begin{equation}
\label{eqn:ntk-mcf}
    \nabla f(\Xbf_{u,i})^{\intercal} \nabla f(\Xbf_{u',i'}) = \frac{1}{d} \Big( \langle \zbf_u,\zbf_{u'} \rangle \ind[i=i'] + \langle \zbf_i,\zbf_{i'} \rangle \ind[u=u'] \Big),
\end{equation}

which directly leads to the stated results of $K_{\text{CF}}$.

\end{proof}

\begin{remark}[The parameterization of $K_{\text{CF}}$ and the intializations of MCF, NCF]
It is evident from the above proof that the relative scale of $a$, $b$ and $c$ in $K_{\text{CF}}$ can depend on the constant term $\alpha$ in the  intializations of $N(0,\alpha/d)$. For instance, if the infinite-width MCF initializes the user embeddings $\Zbf_U$ with $N(0,\alpha_1/d)$ and the item embeddings $\Zbf_I$ with $N(0,\alpha_2/d)$, then by (\ref{eqn:ntk-mcf}) we immediately have $b=\alpha_1$ and $c=\alpha_2$. 

Also, the parameterization of $K_{\text{CF}}$ for NCF is also dependent on the initialization. We observe from (\ref{eqn:ntk-ncf}) that $a$ would not change as long as the initializations are i.i.d., but $b$ and $c$ again depend on the individual $\alpha$. The exact derivations for the NTK of FFN is stuided by \cite{arora2019exact}, and in \cite{yang2019scaling} the author provides the NTK formulation for a broad range of neural networks.
\end{remark}

Other than the convergence to a fixed kernel at initialization, the infinite-width limit also suggests that the parameters varies little during the gradient flow updates, and the linearization of $\tilde{f}$ has a good approximation:
\[
\frac{\big\|\thetabf^{(t)} - \thetabf^{(0)}\big\|_2}{\big\|\thetabf^{(0)}\big\|_2} = \Ocal(\sqrt{1/d}), \, \text{ and } \tilde{f}\big(\thetabf^{(t)}; (u,i)\big) = f\big(\thetabf^{(t)};(u,i)\big) + \Ocal(\sqrt{1/d}).
\]
We formalize the above arguments in the following lemma.

\begin{lemma}
\label{lemma:response-surface}
Let the gradient flow updates under $\tilde{f}$ be denoted by $\tilde{\thetabf}^{(t)}$. Under the exponential loss or log loss, when the predictor $f(\thetabf; \cdot)$ is local Lipschitz and admits chain rule, the corresponding decision boundaries for the two gradient flow trajectories satisfy the following result for any $T>0$:
\begin{equation}
\label{eqn:grad-flow-converge}
\lim_{d\to\infty} \sup_{t\leq T} \big\|F(\thetabf^{(t)}) - \tilde{F}(\tilde{\thetabf}^{(t)})\big\|_2 = 0.
\end{equation}
\end{lemma}

To the best of our knowledge, the similar infinite-width convergence results were studied under the squared loss \cite{arora2019exact,jacot2018neural}, and we extend them to the classification setting.

\begin{proof}
By the chain rule, for any step $T>0$, we have:
\[
\int_0^T \big\|\thetabf^{(t)}\big\|_2 dt = \int_0^T \big\|\nabla\Lcal(\thetabf^{(t)})\big\|_2 dt \leq \sqrt{T} \Big( \int_0^T \big\|\nabla\Lcal(\thetabf^{(t)})\big\|_2^2 dt\Big)^{1/2},
\]
by the H\"{o}lder's inequality. According to (\ref{eqn:grad-flow}), $d\Lcal\big(\thetabf^{(t)} \big) / dt = -\Big\| \nabla \Lcal(\thetabf^{(t)})\Big\|_2^2$, so we have:
\[
\sup_{t\leq T} \big\|\thetabf^{(t)} - \thetabf^{(0)}\big\|_2 \leq \sqrt{T \Lcal\big(\thetabf^{(t)} \big))} \lesssim \sqrt{1/d},
\]
due to the scaled initializations. Denote the risk associated with a predictor by $R\big(f(\thetabf)\big)$. It can then be deduced that $\sup_{t\leq T} \big\|\thetabf^{(t)} - \thetabf^{(0)}\big\|_2 \leq C_1$ and $\sup_{t\leq T} \big\|\nabla R\big(f(\thetabf^{(t)})\big)\big\|_2 \leq C_2$, for some constants $C_1$ and $C_2$.

Define $\delta(t) := \|f(\thetabf^{(t)}) - \tilde{f}(\thetabf^{(t)})\|_2$, and the NTK under a specific $\thetabf$ as $K(\thetabf)$. Since $\|K(\thetabf)^{(t)} - K(\thetabf)^{(0)}\|_2 = \mathcal{O}(\sqrt{1/d})$ according to \cite{jacot2018neural,arora2019exact}, we have:
\begin{equation*}
\begin{split}
\frac{d\delta(t)}{dt} & \leq \Big\| K(\thetabf^{(t)}) \nabla R\big(f(\thetabf^{(t)})\big) - K(\thetabf^{(0)}) R\big(\tilde{f}(\thetabf^{(t)})\big)\Big\|_2  \\ 
& \leq \Big\|\big(K(\thetabf^{(t)})-K(\thetabf^{(0)})\big) \nabla R\big(f(\thetabf^{(t)}) \Big\|_2  + \Big\|K(\thetabf^{(0)})\big(R\big(f(\thetabf^{(t)})-R\big(\tilde{f}(\thetabf^{(t)})\big) \Big\|_2  \\
& \leq C_3/\sqrt{d} + C_4 \delta(t),
\end{split}
\end{equation*}
for some constants $C_3$ and $C_4$. Since $\delta(0)=0$, then $\delta_t$ is a sub-solution to the ordinary differential equation: $\frac{d\delta(t)}{dt}=C_3/\sqrt{d} + C_4 \delta(t)$ with $\delta(0)=0$. It then follows: $\delta(t) \leq \frac{C_3(\exp(C_4 t)-1)}{\sqrt{d} C_4}$, so we conclude that: $\lim_{d \to \infty} \sup_{t\leq T} \|F(\thetabf^{(t)}) - \tilde{F}(\tilde{\thetabf}^{(t)})\|_2 = 0$.
\end{proof}

\textbf{Proof for Theorem 1}
\begin{proof}
According to Corollary \ref{corollary:max-margin} from \ref{sec:sec5}, under the exponential or log loss, the linearization $\tilde{f}(\tilde{\thetabf}; \, \cdot)$ satisfies condition \textbf{C1}, \textbf{C2} and \textbf{C3}, so the gradient flow optimization of $\tilde{\thetabf}^{(t)}$ converges to the stationary points of:
\[
\min \, \|\tilde{\thetabf}\|_2 \, \text{ s.t. } y_{u,i} \tilde{f}(\tilde{\thetabf};(u,i)) \geq 1, \, \forall (u,i)\in\Dcal_{\text{train}}.
\]
Combining the results from Lemma \ref{lemma:cf-kernel} and Lemma \ref{lemma:response-surface}, under the gradient flow optimization, the response surface of MCF and NCF converges to the minimum RKHS norm solution: 
\[
\lim_{t\to\infty} \lim_{d\to\infty} F\Big(\frac{\thetabf^{(t)}}{\|\thetabf^{(t)}\|_2}\Big) \overset{\text{stationary points of}}{\to}  \argmin_{f:(\Ucal, \Ical)\to \Rbb} \big\|f\big\|_{K_{\text{CF}}} \, \text{s.t.} \, y_{u,i}f(u,i) \geq 1, \, \forall (u,i)\in \Dcal_{\text{train}}.
\]
\end{proof}

\section{Material for Section 5}
\label{sec:sec5}

We first discuss the implications of condition \textbf{C1}, \textbf{C2} and \textbf{C3}. Recall that: 

\textbf{C1}. The loss function has the exponential-tail behavior such as the exponential loss and log loss;

\textbf{C2}. Both the MCF and NCF in are $L$-homogeneous, i.e. $f(\thetabf;\,\cdot) = \|\thetabf\|_2^L \cdot f\big(\thetabf/\|\thetabf\|_2;\,\cdot\big)$ for some $L>0$, and have some smoothness property;

\textbf{C3}. The data is separable with respect to the overparameterized MCF and NCF.

The exponential decay on the tail of the loss function is important for the inductive bias of gradient descent as $\ell(u)$ behaves like $\exp(-u)$ when $u\to \infty$. \citet{soudry2018implicit} first propose the notion of \emph{tight exponential tail}, where the negative loss derivative $-\ell'(u)$ behave like:
\[ 
-\ell'(u) \lesssim \big(1+\exp(-c_1 u)\big)e^{-u} \text{ and } -\ell'(u) \gtrsim \big(1-\exp(-c_2 u)\big)e^{-u},
\]
for sufficiently large $u$, where $c_1$ and $c_2$ are positive constants. There is also a smoothness assumption on $\ell(\cdot)$. There is a more general (and perhaps more direct) definition of exponential-tail loss function \cite{lyu2019gradient}, where $\ell(u) = \exp(-f(u))$, such that:
\begin{itemize}
    \item $f$ is smooth and $f'(u) \geq 0, \forall u$;
    \item there exists $c>0$ such that $f'(u)u$ is non-decreasing for $u > c$ and $f'(u)u \to \infty$ as $u \to \infty$.
\end{itemize}
It is easy to verify that the exponential loss, log loss and cross-entropy loss satisfy both requirements.

The predictor of MCF is obviously homogeneous, but for NCF to be homogeneous, the bias terms cannot be used for the hidden layers in the FFN. The requirement on the activation function is relative mild, since ReLU, LeakyReLU and some other common activation functions all preserve the homogeneity of the predictor. 

On the other hand, the smoothness condition, which includes the locally Lipschitz condition and differentiability. 
Notice that Lipschitz condition is rather mild assumption for neural networks, and several recent paper are dedicated to obtaining the Lipschitz constant of deep learning models using activation such as ReLU \citep{fazlyab2019efficient,virmaux2018lipschitz}. 
The differentiablity condition is more technical-driven such that we can analyze the gradients. In practice, neural networks with ReLU activation do not satisfy the condition. We point out that there do exist smooth homogeneous activation functions, such as the quadratic activation $\sigma(x) = x^2$. Nevertheless, the ReLU activation admits the chain rule, so the same analysis using gradients can be carried out by the sub-differentials. Therefore, in our experiments, we use ReLU as the activation function for convenience, and assume differentiablity to provide a more straightforward analysis. 

Finally, by implying the separability, we also assume that there exists $t_0$ such that:
\begin{equation}
\label{eqn:seperable}
\forall t>t_0, \quad \frac{1}{|\Dcal_{\text{train}}|} \sum_{(u,i)\in \Dcal} \exp\Big(-y_{u,i} f\big(\thetabf^{(t)};(u,i)\big)\Big) < 1.
\end{equation}

In the following corollary, we prove a general result for gradient flow converging to the max-margin solution under condition.

\begin{corollary}
\label{corollary:max-margin}
For gradient flow optimization with the exponential loss, under condition  \textbf{C1}, \textbf{C2} and \textbf{C3}, $\lim_{t\to\infty}\thetabf^{(t)}/\|\thetabf^{(t)}\|_2$ converges to the KKT points of:
\begin{equation*}
\min \|\thetabf\|_2 \quad \text{s.t.} \quad y_{u,i}f\big(\thetabf;(u,i)\big) \geq 1, \, \forall (u,i)\in\Dcal_{\text{train}}.
\end{equation*}
\end{corollary}

Compared with the results in \cite{gunasekar2018characterizing}, we do not assume the loss function already converged in direction. Although we use the same type of constraint quality idea as in \cite{lyu2019gradient}, their results focus on the convergence of the normalized margin and our result emphasize the dynamics in the parameter space. The interests in this result is also beyond the content of this paper.

\begin{proof}
First notice that the KKT condition for the original problem (where we add a $\frac{1}{2}$ factor for convenience): $\min \frac{1}{2}\|\thetabf\|_2 \, \text{s.t.} \, y_{u,i}f\big(\thetabf;(u,i)\big) \geq 1, \, \forall (u,i)\in\Dcal_{\text{train}}$ is given by:
\begin{equation}
\label{eqn:KKT}
\exists \lambda_{u,i}\geq 0, (u,i)\in \Dcal_{\text{train}}, \text{ s.t. } \left\{\begin{array}{l}
\thetabf + \sum_{(u,i)\in\Dcal_{\text{train}}}\lambda_{u,i}y_{u,i}\nabla f\big(\thetabf;(u,i)\big) = 0 \\
\lambda_{u,i} \Big(y_{u,i}f\big(\thetabf;(u,i)\big) - 1 \Big) = 0, \, \forall (u,i)\in\Dcal_{\text{train}}.
\end{array}
\right.
\end{equation}
As we mentioned in Section 5, when the constraints are non-convex, the stationarity of local or global optimum does not equal KKT optimality, but when the Guignard constraint qualification (GCQ) is satisfied \cite{gould1971necessary}, those two become exchangeable. GCQ might be the weakest constraint qualification in some sense, but it is very difficult to check in practice. 

On the other hand, the Mangasarian-Fromovitz constraint qualification (MFCQ), though stronger than GCQ (and thus imply GCQ), is easier to examine. Specifically, it states for the stationary points that:
\[
\exists\, \dbf \quad \text{ s.t. } \quad \Big\langle y_{u,i}\nabla f\big(\thetabf;(u,i)\big),\dbf \Big\rangle > 0, \, \text{for all }(u,i)\in\Dcal_{\text{train}}.
\]
Notice that $L$-homogeneous functions all satisfy MFCQ, because according to Lemma \ref{lemma:homogeneous}, for any stationary point $\thetabf^*$ satisfying $y_{u,i}f\big(\thetabf^*;(u,i)\big)=1$, we let $\dbf^* = \thetabf^*$ and it holds that:
\[
\Big\langle y_{u,i}\nabla f\big(\thetabf^*;(u,i)\big),\dbf^* \Big\rangle = L > 0.
\]
As a consequence, the stationary points for the L-homogeneous predictors are indeed the KKT points. Then we show the convergence of the gradient flow optimization path to the KKT points. We first define the quantity:
\[
\tgamma\big(\thetabft\big) := \frac{-\log \sum_{(u,i)} \exp\big(-y_{u,i}f\big(\thetabft;(u,i)\big)\big)}{\|\thetabf^{(t)}\|_2^2} = \|\thetabft\|_2^2 \cdot \log\frac{1}{\Lcal(\thetabft)},
\]
which is smoothed version of the average margin normalized by the $\|\thetabft\|_2^2$.

We show the convergence by the following three steps.

\textbf{S1}. Under the gradient flow optimization, $\tgamma\big(\thetabft\big)$ is non-decreasing for $t\geq t_0$ (\ref{eqn:seperable}), together with $\Lcal(\thetabft) \to 0$ and $\|\thetabft\|_2 \to \infty$.

\textbf{S2}. With a scaling factor $\alpha>0$, it holds that: $\exists \lambda_{u,i}(t)\geq 0, (u,i)\in \Dcal_{\text{train}}, \text{ s.t. }$
\begin{equation}
\label{eqn:S2}
\left\{\begin{array}{ll}
    \Big\|\alpha \thetabft - \sum_{(u,i)\in\Dcal_{\text{train}}}\lambda_{u,i}y_{u,i}\nabla f\big(\alpha\thetabft;(u,i)\big) \Big\|_2 \lesssim \bigg( 1- \Big\langle\frac{\thetabft}{\|\thetabft\|_2}, \frac{d\thetabft/dt}{\|d\thetabft/dt\|_2} \Big\rangle \bigg)\frac{1}{\tgamma\big(\thetabft\big)} & (\textbf{S2.1}) \\
    \lambda_{u,i}(t) \Big(y_{u,i}f\big(\alpha\thetabft;(u,i)\big) - 1 \Big) \lesssim \frac{1}{\tgamma\big(\thetabft\big) \|\thetabft\|_2^2} &  (\textbf{S2.2})
\end{array}
\right.
\end{equation}
We mention that \textbf{S2.2}$\overset{t\to\infty}{\to} 0$ will be a consequence of \textbf{S1}, and to show \textbf{S2.1}$\overset{t\to\infty}{\to} 0$, we need to prove $\Big\langle\frac{\thetabft}{\|\thetabft\|_2}, \frac{d\thetabft/dt}{\|d\thetabft/dt\|_2} \Big\rangle \to 1$.
Once we have \textbf{S2.1}$\overset{t\to\infty}{\to} 0$ + \textbf{S2.2}$\overset{t\to\infty}{\to} 0$ + MFCQ, it holds that $\lim_{t\to\infty}\thetabft / \|\thetabft\|_2$ are proportional to the KKT points. 

\textbf{S3}. Finally, the goal is show that $\Big\langle\frac{\thetabft}{\|\thetabft\|_2}, \frac{d\thetabft/dt}{\|d\thetabft/dt\|_2} \Big\rangle \to 1$. 

We follow our game plan and first prove the results in \textbf{S1}. To show that $\tgamma\big(\thetabft\big)$ is a non-decreasing function of $t$, we derive $\frac{d\log\tgamma\big(\thetabft\big) }{dt}$ and leverage (\ref{eqn:grad-flow}) to show it's non-negative:
\begin{equation}
\label{eqn:margin-differential}
\begin{split}
    \frac{d\log\tgamma\big(\thetabft\big) }{dt} & = \frac{d\log\log \frac{1}{\Lcal(\thetabft)}}{dt} - L\frac{d\log \|\thetabft\|_2}{dt} \\
    & = \frac{1}{\log\frac{1}{\Lcal(\thetabft)}\Lcal(\thetabft)} \bigg(-\frac{d\Lcal(\thetabft)}{dt} \bigg) - L\cdot\frac{d\log \|\thetabft\|_2}{dt}.
\end{split}
\end{equation}
Define $q_{u,i}(t) := y_{u,i}f(\thetabft;(u,i))$ to be the margin of each data point during optimization. Notice that:
\begin{itemize}
\item For all $(u,i)\in \Dcal_{\text{train}}$, $\min_{u,i} q_{u,i}(t) \geq \log\frac{1}{\Lcal(\thetabft)}$, therefore:
\begin{equation}
    \label{eqn:Qt}
    \log\frac{1}{\Lcal(\thetabft)}\Lcal(\thetabft) \leq \sum_{(u,i)}\exp\big(-q_{u,i}(t) \big)q_{u,i}(t),
\end{equation}
and we denote the RHS by $Q(t)$. By the separability assumption, we have $\Lcal(\thetabft) < 1$ for $t>t_0$, which indicates $Q(t)>0$ for $t>t_0$.
\item It holds that:
\begin{equation*}
\begin{split}
    \frac{d\|\thetabft\|_2}{dt} & = 2\Big\langle \thetabft, \frac{d\thetabft}{dt} \Big\rangle \\
    & = 2\Big\langle \thetabft, \nabla \Lcal\big(\thetabft\big) \Big\rangle \quad \text{by (\ref{eqn:grad-flow})} \\
    & = 2\Big\langle \thetabft, \sum_{(u,i)}\exp\big(-y_{u,i}f\big(\thetabft; (u,i)\big)\big)\cdot y_{u,i}\nabla f\big(\thetabft; (u,i)\big) \Big\rangle \\
    & = 2L\sum_{(u,i)} \exp\big(-q_{u,i}(t) \big)q_{u,i}(t) \quad \text{by Lemma \ref{lemma:homogeneous}.}
\end{split}
\end{equation*}
Hence, we have:
\begin{equation}
\label{eqn:dlog-dt}
\begin{split}
    \frac{d\log\tgamma\big(\thetabft\big) }{dt} & = \frac{1}{2\|\thetabft\|_2^2}\frac{d\|\thetabft\|_2^2}{dt} = \frac{L}{\|\thetabft\|_2^2} \exp\big(-q_{u,i}(t) \big)q_{u,i}(t) \\
    & = \frac{\big\langle \thetabft, d\thetabft / dt \big\rangle}{\|\thetabft\|_2^2}.
\end{split}
\end{equation}
\end{itemize}

Combining (\ref{eqn:grad-flow}), (\ref{eqn:margin-differential}), (\ref{eqn:Qt}) and (\ref{eqn:dlog-dt}):
\begin{equation*}
\begin{split}
\frac{d\log\tgamma\big(\thetabft\big) }{dt} & \geq \frac{1}{Q(t)}\bigg(\Big\|\frac{d\thetabft}{dt} - \Big\langle \frac{\thetabft}{\|\thetabft\|_2^2}, \frac{d\thetabft}{dt} \Big\rangle \Big\|_2^2 \bigg) \\
& = \frac{\|\thetabft\|_2^2}{Q(t)}\bigg\|\frac{d\thetabft / \|\thetabft\|_2^2}{dt} \bigg\|_2^2 = L\bigg(\frac{d\log\|\thetabft\|_2}{dt} \bigg)^{-1} \bigg\|\frac{d\tilde{\thetabft}}{dt} \bigg\|_2^2 \geq 0
\end{split}
\end{equation*}
where we defined $\tilde{\thetabft} = \frac{\thetabft}{\|\thetabft\|_2}$.

Then we show $\Lcal(\thetabft) \to 0$ and $\|\thetabft\|_2\to 0$ using the monotonicity of $\tgamma\big(\thetabft\big)$. Note that:
\[
\frac{-d\Lcal(\thetabft)}{dt} = \Big\|\frac{d\thetabft}{dt} \Big\|_2^2 \geq \Big\langle \frac{\thetabft}{\|\thetabft\|_2}, \frac{d\thetabft}{dt} \Big\rangle^2 = L^2 \frac{Q(t)^2}{\|\thetabft\|_2^2} \quad \text{by (\ref{eqn:dlog-dt})},
\]
where the inequality is by applying the Cauchy-Schwartz inequality. By (\ref{eqn:Qt}): $Q(t)\geq \Lcal(\thetabft)\log\frac{1}{\Lcal(\thetabft)}$, and by the definition, we have: $\|\thetabft\|_2 = \Big(\log\frac{1}{\Lcal(\thetabft)} / \tgamma\big(\thetabft\big) \Big)^{1/L}$. As a consequence,
\begin{equation*}
\begin{split}
    \frac{-d\Lcal(\thetabft)}{dt} & \geq L^2 \tgamma\big(\thetabft\big)^{2/L} \log\frac{1}{\Lcal(\thetabft)}^{2-2/L} \Lcal(\thetabft) \\
    & \geq L^2 \tgamma\big(\thetabf^{(t_0)}\big)^{2/L} \log\frac{1}{\Lcal(\thetabft)}^{2-2/L} \Lcal(\thetabft),
\end{split}
\end{equation*}
which indicates that:
\[
\frac{d\big(1/\Lcal(\thetabf) \big)}{dt} \frac{1}{\log\frac{1}{\Lcal(\thetabft)}^{2-2/L}} \geq l^2 \tgamma\big(\thetabf^{(t_0)}\big)^{2/L}.
\]
Taking the intergral on both sides from $t_0$ to $t$, we immediate have:
\[
\int_{1/\Lcal(\thetabf^{(t_0)})}^{1/\Lcal(\thetabft)} \big(\log u\big)^{2/L-2}du \geq L^2\tgamma\big(\thetabf^{(t_0)}\big)^{2/L}(t-t_0) \overset{t\to\infty}{\to} \infty.
\]
Therefore, $1/\Lcal(\thetabft)$ must diverge when $L\geq 1$, which implies $\Lcal(\thetabft) \to 0$. Due to the L-homogeneous property of $f(\thetabf;\cdot)$, we must also have $\|\thetabft\|_2 \to \infty$, which completes \textbf{S1}.

Now we show the results for \textbf{S2}. We design the scaling factor to be: $\alpha(t) = \min_{u,i}q_{u,i}(\thetabft)^{1/L}$. Consequently, $\nabla_{\thetabf}q_{u,i}(\thetabft) / \alpha(t)^{L-1} = \nabla_{\thetabf}q_{u,i}\big(\alpha(t)\thetabft \big)$. 

We then construct the Lagrange multipliers as:
\[
\lambda_{u,i}(t) = \alpha(t)^{L-2}\|\thetabft\|_2 \exp\big( -q_{u,i}(\thetabft)\big) \big/ \|\frac{d\thetabft}{dt}\|_2.
\]
Using the results in \text{S1}, and by straightforward calculations, we obtain:
\begin{equation}
\label{eqn:KKT1}
\Big\|\alpha(t) \thetabft - \sum_{(u,i)\in\Dcal_{\text{train}}}\lambda_{u,i}y_{u,i}\nabla f\big(\alpha(t)\thetabft;(u,i)\big) \Big\|_2^2 \leq \frac{2}{\tgamma\big(\thetabft\big)^{2/L}}\bigg( 1- \Big\langle\frac{\thetabft}{\|\thetabft\|_2}, \frac{d\thetabft/dt}{\|d\thetabft/dt\|_2} \Big\rangle \bigg)
\end{equation}
and 
\begin{equation}
\label{eqn:KKT2}
\lambda_{u,i}(t) \Big(y_{u,i}f\big(\alpha(t)\thetabft;(u,i)\big) - 1 \Big) \lesssim \frac{2e|\Dcal_{\text{train}}|}{L\tgamma\big(\thetabft\big)^{2/L+1} \|\thetabft\|_2^2}.
\end{equation}

According to the game plan, we then need to show $\Big\langle\frac{\thetabft}{\|\thetabft\|_2}, \frac{d\thetabft/dt}{\|d\thetabft/dt\|_2} \Big\rangle \to 1$ to show (\ref{eqn:KKT1})$\to 0$, since (\ref{eqn:KKT2})$\to 0$ is implied by \text{S1}. First notice that $\langle\frac{\thetabft}{\|\thetabft\|_2}, \frac{d\thetabft/dt}{\|d\thetabft/dt\|_2} \Big\rangle \leq 1$ by the Cauchy-Schwartz inequality. We then need to show it is $\geq 1$ as $t\to\infty$.
Using the results in \textbf{S1}, we have:
\begin{equation}
\label{eqn:S3A}
\begin{split}
\frac{d\tgamma\big(\thetabft\big)}{dt} & \geq \frac{\|\thetabft\|_2^2}{Q(t)} \cdot \Big\|\frac{d\tilde{\thetabft}}{dt} \Big\|_2^2 = L\Big\|\frac{\|\thetabft\|_2^2}{LQ(t)} \frac{d\tilde{\thetabft}}{dt}\Big\|_2^2\cdot \frac{LQ(t)}{\|\thetabft\|_2^2} \\
&=L\Big\|\frac{\|\thetabft\|_2^2}{LQ(t)}\cdot \frac{d\tilde{\thetabft}}{dt}\Big\|_2^2\cdot \frac{d\log\|\thetabft\|_2}{dt} .
\end{split}
\end{equation}
Since 
\begin{equation}
\label{eqn:S3B}
\begin{split}
    \Big\|\frac{\|\thetabft\|_2^2}{LQ(t)} \frac{d\tilde{\thetabft}}{dt}\Big\|_2^2 & = \Big\|\frac{\|\thetabft\|_2^2}{LQ(t)} \cdot \frac{1}{\|\thetabft\|_2}\Big(\mathbf{I} - \tilde{\thetabft}\tilde{\thetabft}^{\intercal} \Big)\frac{d\thetabft}{dt} \Big\|_2^2 \\
    & = \frac{\Big\|\frac{d\thetabft}{dt}\Big\|_2^2 - \Big\langle\tilde{\thetabft}, \frac{d\thetabft}{dt} \Big\rangle^2}{\Big\langle\tilde{\thetabft}, \frac{d\thetabft}{dt} \Big\rangle^2} \\
    & = \bigg\langle \frac{\thetabft}{\|\thetabft\|_2}, \frac{d\thetabft/dt}{\|d\thetabft/dt\|_2} \bigg\rangle^{-2}-1,
\end{split}
\end{equation}
combining (\ref{eqn:S3A}) and (\ref{eqn:S3B}), we have:
\[
\bigg\langle \frac{\thetabft}{\|\thetabft\|_2}, \frac{d\thetabft/dt}{\|d\thetabft/dt\|_2} \bigg\rangle \geq \sqrt{1+\frac{d\log \tgamma\big(\thetabft\big)/dt}{L\cdot d\log\|\thetabft\|_2/dt}} \geq \sqrt{1+\frac{\epsilon(t)}{L}}\, \text{ for some } \epsilon(t)\geq 0,
\]
because both $d\log\|\thetabft\|_2/dt \geq 0$ and $d\log \tgamma\big(\thetabft\big)/dt \geq 0$. Hence, by the previous argument, we have $\lim_{t\to \infty} \Big\langle \frac{\thetabft}{\|\thetabft\|_2}, \frac{d\thetabft/dt}{\|d\thetabft/dt\|_2} \Big\rangle = 1$. By showing the resutls in \textbf{S1}, \textbf{S2} and \textbf{S3}, we see that $\alpha \thetabft$ converges in direction to the KKT points, which completes the proof.

\end{proof}

\textbf{Proof for Theorem 2.}
\begin{proof}
The first part of the results for NCF is a direct consequence of Corollary \ref{corollary:max-margin}. To show the second part for MCF, we need to consider the symmetrized setting with:
\[
\Wbf:= \Zbf_U\Zbf_I^{\intercal}, \quad \tilde{\Wbf} = 
\begin{bmatrix}
\Mbf_1 & \Wbf \\
\Wbf & \Mbf_2
\end{bmatrix}, \quad \tilde{\Xbf}_{u,i} = \ebf_u \ebf_i^{\intercal} +  \ebf_i \ebf_u^{\intercal},
\]
where $\Mbf_1$ and $\Mbf_2$ are two p.s.d matrices that are irrelevant for the objective, and the definition of $ \ebf_u,\ebf_i$ are provided in \ref{sec:sec4}. Notice that in the main paper, we use $\Xbf$ to denote the predictor which is now given by $\Wbf$. Hence, the MF parameterization can be considered by: $\tilde{\Zbf}\tilde{\Zbf}^{\intercal} = \tilde{\Wbf}$, and the objective becomes:
\[
\min_{\tilde{\Zbf}} \Lcal(\tilde{\Zbf}) = \sum_{(u,i)\in\Dcal_{\text{train}}} \ell\Big(-y_{u,i} \Big\langle \tilde{\Zbf} \tilde{\Zbf}^{\intercal}, \tilde{\Xbf}_{u,i} \Big\rangle \Big).
\]
It is easy to verify that the symmetrized MCF corresponds exactly to the original problem instance, and satisfy the conditions in Corollary \ref{corollary:max-margin} under exponential or log loss. Since $\lim_{t\to\infty} \frac{\Wbf^{(t)}}{\big\|\Wbf^{(t)} \big\|_{*}} = \lim_{t\to\infty} \frac{\tilde{\Zbf}^{(t)}}{\big\|\tilde{\Zbf}^{(t)} \big\|_F} \frac{\tilde{\Zbf}^{(t)}}{\big\|\tilde{\Zbf}^{(t)} \big\|_F}^{\intercal}$, and the MCF predictor is convex, we conclude that the predictor of MCF converges in direction to the stationary point of:
\begin{equation}
\label{eqn:nuc-svm}
    \min \, \|\Wbf\|_* \quad s.t. \quad y_{u,i}\Wbf_{u,i} \geq 1, \, \forall (u,i)\in \Dcal_{\text{train}}.
\end{equation}

\end{proof}

\section{Material for Section 6}
\label{sec:sec6}

The major tools we use to show the generalization results are the Rademacher complexities. The procedure of bounding the inductive generalization error via the symmetrization technique and Talagrand's contraction inequalities are more often encountered in the literature \cite{bartlett2002rademacher}. The similar ideas can also be applied to bound the transductive generalization error, but specific modifications are required \cite{el2009transductive}. 

The different meaning of generalization decides the distinctive definitions of Rademacher complexities. We use $\mathcal{X}$ to denote the domain (of user and items) for the CF predictors, $n_1$ to denote $|\Dcal_{\text{train}}|$ and $n_2$ to denote $|\Dcal_{\text{test}}|$. We first provide the definitions of  Rademacher complexities, and briefly discuss their different implications for the transductive and inductive learning.

\begin{definition}
\label{def:rademacher}
Recall that $n_1=|\Dcal_{\text{train}}|$ and $n_2=|\Dcal_{\text{test}}|$.

\begin{itemize}
    \item \textbf{Transductive Rademacher complexity}. Let $\mathcal{V} \in \Rbb^{n_1+n_2}$ and $p\in[0,1/2]$, and $\epsilon_i(p)$ be i.i.d random variables such that:
    \[
    \epsilon_i(p) = \left\{ \begin{array}{lll}
        1 & \text{with probability} & p \\
        -1 & \text{with probability} & p \\
        0 & \text{with probability} & 1-2p, \\
    \end{array}
    \right.
    \]
    then the trasductive Rademacher complexity of $\mathcal{V}$ is:
    \begin{equation}
        \label{eqn:trans-rc}
        R_{n_1+n_2}(\mathcal{V},p) = \Big( \frac{1}{n_1} + \frac{1}{n_2} \Big)\Ebb\Big\{\sup_{\vbf \in \mathcal{V}} \epsilonbf(p)^{\intercal}\vbf \Big\},
    \end{equation}
    where $\epsilonbf(p)=\big[\epsilon_1(p), \ldots, \epsilon_{n_1+n_2}(p)\big]^{\intercal}$. 
    
    \item \textbf{Inductive Rademacher complexity}. Let $\mathcal{F}$ be a function class with domain $\mathcal{X}$, and $\{X_{i}\}$ be a set of samples generated by a distribution $P_{\mathcal{X}}$ on $\mathcal{X}$. Let $\epsilon_i$ be shorthand of the same i.i.d random variables as above, with $p=1/2$. Then the empirical Rademacher complexity of $\mathcal{F}$ is:
    \[
    \hat{R}_n(\mathcal{F}) = \Ebb\Big\{\sup_{f\in\mathcal{F}} \Big|\frac{1}{n}\sum_{i=1}^n\epsilon_i f(X_i) \Big| | X_1,\ldots,X_n  \Big\},
    \]
    and the Rademacher complexity is given by: $R_n(\Fcal) = \Ebb_{P_{\mathcal{X}}}\hat{R}_n(\mathcal{F})$.
\end{itemize}
\end{definition}

A important difference between the two settings is that the transductive complexity is an empirical quantity that does not depend on any underlying distributions, and it depends on both the training and testing data. The other difference is reflected in the specific formulations are:

1. transductive Rademacher complexity depends on both $n_1$ and $n_2$ because of the need to bound the test error: $\Dcal_{\text{test}}$, i.e.  $\text{Err}_{\Dcal_{\text{test}}}(f):=\sum_{(u,i) \in \Dcal_{\text{test}}} \mathbbm{1}\big[ y_{u,i} f(u,i) \leq 0 \big]$;

2. it depends only on the outcomes' vector space rather than the underlying function space that produces the outcomes.

The different definitions of Rademacher complexity induces the two versions of contraction inequalities, which we provide in Lemma \ref{lemma:inductive-contraction} and Lemma \ref{lemma:transductive-contraction}. We first prove the results in Theorem 3 for the transductive setting. Using the idea of symmetrization, the bound on the testing error for the transductive learning can be stated as in the following Corollary. 

\begin{corollary}[Adapted from \citet{el2009transductive}]
\label{corollary:transductive}
Let $\Vcal$ be a set of real-valued vectors in $[-B, B]^{n_1+n_2}$, where $n_1>n_2$ by our assumption. Define $Q=(1/n_1 + 1/n_2)$, $S=\frac{n_1+n_2}{(n_1+n_2-1/2)(1-n_1/2)}$. Then for all $\vbf\in\Vcal$, with probability of at least $1-\delta$ over the random permutation of $\vbf$, which we deonote by $\tilde{\vbf}$, we have:
\[
\sum_{j=n_1+1}^{n_1+n_2} \tilde{\vbf}_j \leq \sum_{j=1}^{n_1} \tilde{\vbf}_j + R_{n_1+n_2}(\Vcal, p_0) + Bc_0Q\sqrt{n_2} + B\sqrt{\frac{S}{2}Q\log\frac{1}{\delta}},
\]
where $c_0 = \sqrt{32\log(4e)/3}$ and $p_0=n_1n_2/(n_1+n_2)^2$.
\end{corollary}

By defining the $\Vcal$ in the above corollary by the scores of the predictor, and using Lemma \ref{lemma:transductive-contraction} for contraction, we are able to show the results in Theorem 3.

\textbf{Proof for Theorem 3.}
\begin{proof}
Define $\hbf \in \Hcal_{out} \in \Rbb^{n_1+n_2}$ as the output scores of the predictor, and consider $\vbf$ in Corollary \ref{corollary:transductive} as $\ell\big(y_{u,i} f(\thetabf; (u,i))\big)$ where $\ell(u) = \ind[u < 0]$. Define $\ell_{\gamma}(y_{u,i} f(\thetabf, \xbf_{u,i}))$ to be the margin loss: $\ell_{\gamma}(u) = \min\{1, 1-u/\gamma\}$. Note that the margin loss is an upper bound on the classification error.

Therefore, using the results in Corollary \ref{corollary:transductive} and Lemma \ref{lemma:transductive-contraction}, for any fixed $\gamma>0$ and $\hbf \in \Hcal_{out}$, with probability of at least $1-\delta$ over the random splits of $\Dcal$:
\begin{equation*}
\begin{split}
    \frac{1}{n_2}\sum_{(u,i)\in\Dcal_{\text{test}}}\ind\big[y_{u,i} f\big(\thetabf, (u,i)\big) < 0 \big] & \leq \frac{1}{n_1}\sum_{(u,i)\in\Dcal_{\text{train}}}\ind\big[y_{u,i} f\big(\thetabf, (u,i)\big) < \gamma \big] \\
    & + \frac{R_{n_1+n_2}(\Hcal_{out},p_0)}{\gamma} +c_0Q\sqrt{n_2} + \sqrt{\frac{S}{2}Q\log\frac{1}{\delta}} .
\end{split}
\end{equation*}

We first show the bound for the transductive Rademacher complexity for NCF. 
Recall that $\Hcal_{out}$ for NCF is given by the form of: $\Wbf_1 \sigma(\Wbf_2 \sigma(\ldots \sigma(\Wbf_q \sigma(\Wbf_{q+1} \Xbf_{u,i}))))$, where $\Wbf_{q+1}$ is given by (\ref{eqn:ncf-a}) or (\ref{eqn:ncf-c}), with $\max_{(u,i)\in \Dcal}\|\zbf_u+\zbf_i\|_2 \leq B_{\text{NCF}}$ for NCF with addition, and $\|\Wbf_i\|_F \leq \lambda_i$ for $i=1,\ldots,q$. We denote the output of the $k^{th}$ layer by $\Hbf_{out}^k \in \Hcal_{out}^k$. It holds that:
\begin{equation}
\label{eqn:transductive-ncf}
\begin{split}
& R_{n_1+n_2}(\Hcal_{out}, p_0)  = \big(\frac{1}{n_1} + \frac{1}{n_2}\big) \Ebb\Big\{\sup_{\|\Wbf_1\|_F \leq \lambda_1} \sum_{(u,i)\in\Dcal_{\text{train}}} \epsilon_{u,i} \big[\Wbf_1 \Hbf_{out}^{(q-1)}\big]_{u,i} \Big\} \\
& \leq \lambda_1 \big(\frac{1}{n_1} + \frac{1}{n_2}\big) \Ebb\Big\{\sup_{\|\Wbf_2\|_F \leq \lambda_2} \sum_{(u,i)\in\Dcal_{\text{train}}} \epsilon_{u,i} \big[\Wbf_2 \Hbf_{out}^{(q-2)}\big]_{u,i} \Big\}\, \text{(applying Lemma \ref{lemma:transductive-contraction} on ReLU)} \\
& \text{recursively apply the peeling argument} \\
& \leq \prod_{i=1}^q \lambda_i \big(\frac{1}{n_1} + \frac{1}{n_2}\big) \Ebb\Big\{ \sup_{\max \|\zbf_u+\zbf_i\|_2 \leq B_{\text{NCF}}} \big\langle \zbf_u+\zbf_i, \sum_{(u,i)\in\Dcal_{\text{train}}} \epsilon_{u,i}\Xbf_{u,i} \big\rangle \Big\} \text{( under addition, for example)} \\
& \leq B_{\text{NCF}}\prod_{i=1}^q \lambda_i \big(\frac{1}{n_1} + \frac{1}{n_2}\big) \Ebb \Big\|\sum_{(u,i)\in\Dcal_{\text{train}}} \epsilon_{u,i}\Xbf_{u,i} \Big\|_2,
\end{split}
\end{equation}
where we use $\epsilon_{u,i}$ as a shorthand for $\epsilon_{u,i}(p_0)$.
By Jensen's inequality, the last line is upper-bounded by:
\[
B_{\text{NCF}}\prod_{i=1}^q \lambda_i \big(\frac{1}{n_1} + \frac{1}{n_2}\big) \sqrt{\sum_{(u,i)\in\Dcal_{\text{train}}} \Ebb\big[\epsilon_{u,i}(p_0)^2 \big] \|\Xbf_{u,i}\|_2} \leq B_{\text{NCF}}\prod_{i=1}^q \lambda_i\frac{n_1+n_2}{n_1n_2},
\]
where it is easy to compute that: $\Ebb\big[\epsilon_{u,i}(p_0)^2 \big] = \frac{2n_1n_2}{(n_1+n_2)^2}$.
By plugging in the relation between $n_1$ and $n_2$, we obtain the result stated in Theorem 3 for NCF. 

We then show the results for MCF. We use $\Sigmabf(p)$ to denote the matrix of transductive Rademacher random variables such that $\Sigmabf_{u,i}(p) = \epsilon_{u,i}(p)$. For $\Hbf:= \Zbf_U\Zbf_I^{\intercal} \in \Hcal_{out}$ under $\big\|\Zbf_U\Zbf_I^{\intercal}\big\|_{*}\leq \lambda_{\text{nuc}}$, we have:
\begin{equation}
\label{eqn:transductive-mcf}
\begin{split}
    R_{n_1+n_2}(\Hcal_{out},p_0) & = \big(\frac{1}{n_1} + \frac{1}{n_2}\big) \Ebb\Big\{ \sup_{\Hbf: \|\Hbf\|_* \leq \lambda_{\text{nuc}}} \sum_{(u,i)\in\Dcal_{\text{train}}} \Sigmabf_{u,i}\Hbf_{u,i} \Big\} \\
    & \leq \lambda_{\text{nuc}} \big(\frac{1}{n_1} + \frac{1}{n_2}\big) \Ebb \big\| \Sigmabf \big\|_{sp}\, \text{(by H\"{o}lder inequality, where $\|\cdot\|_{sp}$ is the spectral norm)} \\
    & \lesssim \lambda_{\text{nuc}} \frac{(n_1+n_2)\sqrt{|\Ical|} \sqrt[4]{\log |\Ucal|}}{n_1n_2}\, \text{(by Lemma \ref{lemma:mat-ineq}).}
\end{split}
\end{equation}
The second line holds because nuclear norm is the dual of the spectral norm.
Again we plug in the relation between $n_1$ and $n_2$ and obtain the stated result for MCF. 
\end{proof}

We move on to proving the generalization results for the inductive CF. We first state a useful corollary for inductive learning, when the training and testing distribution are different. 

\begin{corollary}
\label{corollary:weighted_kakade2009}
  Consider an arbitrary function class $\Fcal$ such that $\forall f \in \Fcal$ we have $\sum_{\xbf \in \Xcal}|f(\xbf)| \leq C$. Then, with probability at least $1 - \delta$ over the sample, for all margins $\gamma > 0$ and all $f \in \Fcal$ we have,
  \begin{equation}
  \begin{split}
      & P_{\text{test}}\Big(yf(\xbf) \leq  0\Big) \\ 
      & \leq \frac{1}{n}\sum_{i=1}^n \eta(\xbf_{u,i}) \ind\big(y_i f(\xbf_{u,i}) < \gamma \big) + 4\frac{R_{n, \etabf}(\Fcal)}{\gamma} + \sqrt{\frac{\log(\log_2\frac{4C}{\gamma})}{n}} + \sqrt{\frac{\log(1/\delta)}{2n}},
\end{split}
\end{equation}
  where $\eta(\xbf_{u,i}) = P_{\text{test}}(\xbf_{u,i}) / P_{\text{train}}(\xbf_{u,i})$ gives the importance weighting, and $R_{n, \etabf}(\Fcal) = \Ebb \Big[ \sup_{f\in \Fcal}\frac{1}{n}\sum_{i = 1}^n \eta(\xbf_{u,i}) f(\xbf_{u,i})\epsilon_i\Big]$ is the weighted Rademacher complexity.
\end{corollary}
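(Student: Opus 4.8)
The plan is to reduce the covariate-shift test risk to a weighted empirical quantity on the training distribution, and then run the standard margin-based Rademacher argument with the importance weights carried through every step. First I would apply the change-of-measure identity
\[
P_{\text{test}}\big(yf(\xbf)\leq 0\big) = \Ebb_{\xbf\sim P_{\text{train}}}\big[\eta(\xbf)\,\ind(yf(\xbf)\leq 0)\big],
\]
which is just importance reweighting by $\eta = P_{\text{test}}/P_{\text{train}}$, so that the target becomes an expectation under the sampling distribution. I would then dominate the indicator by the margin (ramp) surrogate $\ell_\gamma(u)=\min\{1,\max\{0,1-u/\gamma\}\}$, which is bounded in $[0,1]$, is $1/\gamma$-Lipschitz, and satisfies $\ind(u\leq 0)\leq \ell_\gamma(u)\leq \ind(u<\gamma)$. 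Upper-bounding $\ell_\gamma$ by $\ind(\cdot<\gamma)$ on the empirical side recovers the first term of the bound, while the reweighted population value of $\ell_\gamma$ upper-bounds the test error.

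Next, for a fixed $\gamma$ I would apply the uniform-deviation machinery to the weighted loss class $\{(\xbf,y)\mapsto \eta(\xbf)\ell_\gamma(yf(\xbf)):f\in\Fcal\}$. The bounded-differences (McDiarmid) inequality controls the deviation of $\sup_{f}\big(\Ebb_{\text{train}}[\eta\ell_\gamma]-\tfrac1n\sum_i \eta(\xbf_i)\ell_\gamma\big)$ about its mean, producing the $\sqrt{\log(1/\delta)/(2n)}$ term; symmetrization then bounds that mean by twice the weighted Rademacher complexity $R_{n,\etabf}(\ell_\gamma\circ\Fcal)$. I would then invoke the weighted form of the Ledoux--Talagrand contraction inequality to peel off $\ell_\gamma$: since $\ell_\gamma$ is $1/\gamma$-Lipschitz (and its additive offset $\ell_\gamma(0)$ averages out against the mean-zero $\epsilon_i$), one gets $R_{n,\etabf}(\ell_\gamma\circ\Fcal)\leq \tfrac1\gamma R_{n,\etabf}(\Fcal)$, and collecting the symmetrization and contraction constants yields the factor $4$ in front of $R_{n,\etabf}(\Fcal)/\gamma$.

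Finally, to make the statement simultaneous over all margins $\gamma>0$ I would discretize. The hypothesis constraint $\sum_{\xbf\in\Xcal}|f(\xbf)|\leq C$ gives $|f(\xbf)|\leq C$ pointwise, so margins above $2C$ are vacuous and it suffices to union-bound the fixed-$\gamma$ statement over a dyadic grid $\gamma_k=C2^{-k}$. Distributing the failure probability geometrically over the $\mathcal{O}\big(\log_2(4C/\gamma)\big)$ relevant grid points, and using that $\ell_{\gamma_k}$ dominates $\ell_\gamma$ for the smallest $\gamma_k\geq\gamma$, gives the extra $\sqrt{\log(\log_2(4C/\gamma))/n}$ term while only inflating constants. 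The main obstacle I anticipate is the bookkeeping around the importance weights: both the contraction lemma and the bounded-differences step must be stated for the $\eta$-weighted complexity, which requires the weights to be controlled (so that a single-coordinate change moves $\tfrac1n\sum_i\eta(\xbf_i)\ell_\gamma(\cdot)$ by $O(1/n)$ after normalization) and requires checking that contraction commutes with the fixed per-example weights $\eta(\xbf_i)$. Once $R_{n,\etabf}(\Fcal)$ is treated as the primitive complexity object, the remainder is the standard Kakade--Sridharan--Tewari margin argument.
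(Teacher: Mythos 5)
Your proposal is correct and takes essentially the same route as the paper: the paper's proof is a two-line citation to Theorem 1 of Koltchinskii--Panchenko and Theorem 5 of Kakade--Sridharan--Tewari, and your argument is a faithful unpacking of exactly those results with the importance weights carried through each step (change of measure, ramp surrogate, McDiarmid, symmetrization, per-coordinate contraction with the offset $\ell_\gamma(0)$ cancelling against the mean-zero $\epsilon_i$, and the dyadic union bound over $\gamma$ using $|f|\leq C$). The one caveat you rightly flag---that the weight-free term $\sqrt{\log(1/\delta)/(2n)}$ requires the weights $\eta$ to be bounded for the bounded-differences step---is indeed implicitly assumed by the paper (compare the hypothesis $w_{ui}\in(0,1)$ in Lemma \ref{lemma:append-lemma-importance-weighting}).
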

\begin{proof}
This corollary is adapted from the more general Theorem 1 of \cite{koltchinskii2002empirical} by considering the deviation of the testing distribution from the training distribution. The stated result is then obtained following the Theorem 5 of \cite{kakade2008complexity}.
\end{proof}

Therefore, the key step for proving the results in Theorem 4 is to bound the weighted Rademacher complexity for NCF and MCF. 

\textbf{Proof for Theorem 4.}
\begin{proof}
We first show the results for NCF, where we denote the predictor family by $\Fcal_{\text{NCF}}$. Here, using the similar setup from Theorem 1 of \cite{golowich2018size}, and combining the same arguments from the proof of Theorem 3, we arrive at
  $$n_1R_{n_1, \etabf}(\Fcal_{\text{NCF}}) \leq \frac{1}{\lambda} \log \Big (2^q \cdot \Ebb_{\epsilonbf} \Big (M\lambda \Vert\sum_{i = 1}^{n_1} \epsilon_i \eta(\xbf_{u,i}) \xbf_{u,i}\Vert \Big) \Big ),$$
  where $M = B_{\text{NCF}}\prod_{h = 1}^q \lambda_i$. Consider $Z := M \cdot \Vert\sum_{i = 1}^{n_1} \epsilon_i \eta(\xbf_{u,i}) \xbf_{u,i}\Vert$ that is a random function of the $n_1$ Rademacher variables. Then
  $$\frac{1}{\lambda} \log \Big \{2^q \Ebb \exp (\lambda Z) \Big \} = \frac{q \log (2)}{\lambda} + \frac{1}{\lambda} \log \{\Ebb \exp \lambda (Z - \Ebb Z)\} + \Ebb Z.$$
  By Jensen's inequality, we have
  $$\Ebb [Z] \leq M \sqrt{\Ebb_{\epsilonbf} \Vert\sum_{i = 1}^{n_1} \epsilon_i \eta(\xbf_{u,i}) \xbf_{u,i} \Vert^2} = M \sqrt{\sum_{i = 1}^{n_1} \eta(\xbf_{u,i})^2 \Vert \xbf_{u,i} \Vert^2}.$$
  In addition, we note that
  $$Z(\epsilon_1, \ldots, \epsilon_i, \ldots, \epsilon_{n_1}) - Z(\epsilon_1, \ldots, -\epsilon_i, \ldots, \epsilon_{n_1}) \leq 2M\eta(\xbf_{u,i}) \Vert \xbf_{u,i} \Vert.$$
  By the bounded-difference condition \citep{boucheron2013concentration}, $Z$ is a sub-Gaussian with variance factor $v = \frac{1}{4} \sum_{i = 1}^{n_1} (2M\eta(\xbf_{u,i}) \Vert \xbf_{u,i}\Vert)^2 = M^2 \sum_{i = 1}^{n_1} \eta(\xbf_{u,i})^2 \Vert \xbf_{u,i} \Vert^2$. So
  $$\frac{1}{\lambda} \{\Ebb \exp \lambda (Z - \Ebb Z)\} \leq \frac{\lambda M^2 \sum_{i = 1}^{n_1} \eta(\xbf_{u,i})^2\Vert \xbf_{u,i} \Vert^2}{2}.$$
  Taking $\lambda = \frac{\sqrt{2\log (2) q}}{M \sqrt{\sum_{i = 1}^{n_1} \eta(\xbf_{u,i})^2 \Vert \xbf_{u,i} \Vert^2}}$, it follows that
  \begin{equation}
  \begin{split}
  & \frac{1}{\lambda} \{2^q \Ebb \exp \lambda Z\} \\ 
  & \leq M (\sqrt{2 \log (2) q} + 1) \sqrt{\sum_{i = 1}^{n_1} \eta(\xbf_{u,i})^2 \Vert \xbf_{u,i} \Vert^2} \leq \sqrt{n_1} CM(\sqrt{2 \log (2)q} + 1) \sqrt{\frac{1}{n_1}\sum_{i = 1}^{n_1} \eta(\xbf_{u,i})^2},
  \end{split}
  \end{equation}
  where $C=1$ for NCF-c and $C=\sqrt{2}$ for NCF-a. 
By law of large number, $\frac{1}{n_1} \sum_{i = 1}^{n_1} \eta(\xbf_{u,i})^2 = D(P_{\text{test}}||P_{\text{train}}) + 1 + o(\frac{1}{\sqrt{n_1}})$. The desired result for NCF follows. 

Then we show the result for MCF. 
Here, we provide a general result for generalization of using importance weighting under distribution shift. We assume the training distribution is $P$, testing distribution is $Q$, and the weight for any $(u,i)$ instance is therefore given by: $w_i = Q(i)/P(i)$. We define $\Ncal(\frac{1}{n},\Fcal,\ell_{2}^{n})$ as the $\frac{1}{n}$-covering number for $\Fcal$ in $\|\cdot\|_{2}$ based on $n$ i.i.d samples from $P$, and $d(P\|Q) = \int_{\Scal_Q} (dP / dQ) dP$ is a divergence measure, where $\Scal$ is used to denote the support of a distribution. We use $\Ebb_{Q}R(f)$ to denote the testing risk, and use $\Ebb_{P_{n,w}}R(f)$ to denote the weighted empirical training risk.

Our proof leverages the classical "double sampling" technique from \citet{anthony2009neural}. We use $\vec{\zbf} = [\zbf_1,\ldots,\zbf_n]$ to denote the observed samples, and $\vec{\zbf}' = [\zbf'_1,\ldots,\zbf'_n]$ to denote an i.i.d copy of $\vec{\zbf}$. We first define by: 
\[
UB_1(f, \vec{\zbf}, t) = \frac{1}{n}\sum_{i=1}^n w_i \ell_f(\zbf_i) + \frac{3Mt}{n} + \sqrt{\frac{2d(P\|Q)t}{n}},
\]
and 
\[
UB_2(f, \vec{\zbf}, t) = \frac{1}{n}\sum_{i=1}^n w_i \ell_f(\zbf_i) + \frac{9Mt}{n} + \sqrt{\frac{18d(P\|Q)t}{n}}.
\]
Given $f\in\Fcal$, let $A := \Ebb_{Q}R(f) + \frac{6Mt}{n} + \sqrt{\frac{8d(P\|Q)t}{n}} $it holds:
\begin{equation*}
\begin{split}
    \Pbb\big( UB_2(f, \vec{\zbf}', t) \leq UB_1(f, \vec{\zbf}, t) \big) & \leq \Pbb\big(UB_2(f, \vec{\zbf}', t) \leq A \big) + \Pbb\big(UB_1(f, \vec{\zbf}, t) \geq A \big) \\
    &\leq 2\Pbb\Big( \big| \Ebb_{Q}R(f) - \frac{1}{n}\sum w_i\ell_f(\zbf_i) \big| \geq \frac{3Mt}{n} + \sqrt{\frac{2d(P\| Q)t}{n}} \Big) \\
    &\leq 4e^{-t},
\end{split}
\end{equation*}
where the last line follows from Lemma \ref{lemma:append-lemma-importance-weighting}. 
Next, we define $\Ccal(\epsilon, \ell\circ\Fcal, \ell_1(P_{n,w}))$ be the $\epsilon$-cover of $\ell\circ\Fcal$ with the empirical $\ell_1$ norm under $P_{n,w}$ such that for any $f\in\ell\circ\Fcal$, there exists $\tilde{f}$ in $\Ccal(\epsilon, \ell\circ\Fcal, \ell^n_1)$: $\big|\frac{1}{n}\sum w_i f(\zbf_i) -  \frac{1}{n}\sum w_i \tilde{f}(\zbf_i) \big| \leq \epsilon$, for $(\zbf_1,\ldots,\zbf_n)$ sampled i.i.d from $P$.
It then holds:
\begin{equation*}
\begin{split}
    &\Pbb\big(\exists f\in\Fcal: \Ebb_{Q}R(f) \geq UB_2(f, \vec{\zbf}, t) + \epsilon \big) \\
    &= \Ebb_{\vec{\zbf}} \sup_{f\in\Fcal} I[\Ebb_Q R(f) \geq UB_2(f, \vec{\zbf}, t) + \epsilon ] \\
    &\overset{(a)}{\leq} \Ebb_{\vec{\zbf}} \sup_{f\in\Fcal} I[\Ebb_Q R(f) \geq UB_2(f, \vec{\zbf}, t) + \epsilon ]\cdot 2\Ebb_{\vec{\zbf}'}I[UB_1(f, \vec{\zbf}', t) \geq \Ebb_{Q}R(f)] \\
    &\leq 2 \Ebb_{\vec{\zbf}, \vec{\zbf}'} \sup_{f\in\Fcal}I[UB_1(f, \vec{\zbf}', t) \geq UB_2(f, \vec{\zbf}, t) + \epsilon] \\
    &\overset{(b)}{\leq} 2 \Pbb_{\sigma(\vec{\zbf}, \vec{\zbf}')}\big(\exists \tilde{f} \in \Ccal(\epsilon, \ell\circ\Fcal, \ell_1(P_{n,w})):\, UB_1(f, \sigma(\vec{\zbf}, \vec{\zbf}'), t) \geq UB_2(f, \sigma(\vec{\zbf}, \vec{\zbf}'), t) \big) \\
    &\leq 8\Ncal(\epsilon, \ell\circ\Fcal,\ell_1(P_{n,w})) \cdot e^{-t},
\end{split}
\end{equation*}
where (a) follows from the fact that $\Ebb_{\vec{\zbf}'}I[UB_1(f, \vec{\zbf}', t) \geq \Ebb_{Q}R(f)]\geq \frac{1}{2}$ as suggested by Lemma \ref{lemma:append-lemma-importance-weighting}, and in step (b) we let $\sigma(\vec{\zbf}, \vec{\zbf}')_i$ takes the value of $\zbf_i,\zbf'_i$ with equal probability, and the inequality follows from the definition of the $\epsilon$ cover. Notice that $\Ncal(\epsilon, \ell\circ\Fcal,\ell_1(P_{n,w})) \leq \Ncal(\epsilon/M, \ell\circ\Fcal, \ell^{n}_{2})$. We take $\epsilon = \frac{1}{n}$, which solves for $t = c\log\frac{1}{\delta} + \log\Ncal(\epsilon/M, \ell\circ\Fcal, \ell^{n}_{2})$ for some constant $c$. We use $\Ncal_{2}(\epsilon, \Fcal\big)$ as a shorthand to denote the covering number under the empirical $\ell_2$ norm.

By rearranging terms, we have that for any $\delta>0$, with probability at least $1-\delta$, it holds:
    \begin{equation}
    \label{eqn:append-covering-weighting}
    \begin{split}
        \Ebb_{Q}R(f) \lesssim \Ebb_{P_{n,w}}R(f) + \frac{M\big(\log\frac{1}{\delta} + \log\Ncal_{2}(\frac{1}{n}, \Fcal\big)}{n} + \sqrt{\frac{Md(P\|Q)\big(\log\frac{1}{\delta} + \log\Ncal_{2}(\frac{1}{n}, \Fcal\big)}{n}}, \\
    \end{split}
    \end{equation}
when the loss function $\ell$ is Lipschitz and we ignore the constants. Hence, the remaining task is to bound the covering number of the matrix factorization class $\Fcal_{\text{MCF}}$ with a bounded nuclear norm. When $\|\Xbf\|_F = 1$, the nunclear norm is strictly a lower bound of the matrix rank. Therefore, we use the covering number of low-rank matrix as a upper bound, which according to Lemma 3.1 of , if $\text{rank}(\Xbf)\leq\lambda_{\text{nuc}}$, then the covering number for $\Fcal_{\text{MCF}}$ under the matrix Frobenius norm obeys:
\[
\Ncal(\epsilon, \Fcal_{\text{MCF}}, \|\cdot\|_F) \leq (9/\epsilon)^{(|\Ucal| + |\Ical| + 1) \lambda_{\text{nuc}}},
\]
which we plug back to (\ref{eqn:append-covering-weighting}) and obtain the desired result.
\end{proof}

\textbf{Discussion: the tightness of the generalization bounds.}

When proving the bounds for both the transductive and inductive CF, we use the standard generalization results based on Rademacher complexity, according to \citet{bartlett2002rademacher} and \citet{el2009transductive}. Their results rely on the following components: 
\begin{itemize}
    \item a symmetrization argument to bound the testing error;
    \item the Mcdiarmid's inequality for bounded difference;
    \item the Rademacher contraction inequalities (Lemma \ref{lemma:inductive-contraction} and Lemma \ref{lemma:transductive-contraction}).
\end{itemize}
All these results are known to be tight, so the question narrows down to the tightness of our bounds on the Rademacher complexities. To see that the provided result for NCF are tight up to a constant factor of $\sqrt{q}$, we simply consider the following construction: $\xbf_{u,i} \mapsto \lambda_1\cdot\lambda_2 \cdots \lambda_q \cdot \sigma(\Wbf_{q+1}\xbf_{u,i})$, which belongs to the general NCF family, and the worst-case scenario for computing Rademacher complexity is obvious given by: 
\[\lambda_1\cdot\lambda_2 \cdots \lambda_q \cdot \sigma\Big(\max_{u,i: \|\zbf_u + \zbf_i\|_2} \zbf_u + \zbf_i\Big),
\]
where we use NCF-a for example. Here, all the training samples are $(u,i) = \arg\max_{u,i: \|\zbf_u + \zbf_i\|_2}$. Consequently, the Rademacher complexity is at least $B_{\text{NCF}} \prod_{i=1}^q \lambda_i$. 

On the MCF side, it is pointed out by \cite{bandeira2016sharp} that for the spectral norm of Rademacher matrix, the dependency on $\sqrt{|\Ical|}$ and $\sqrt[4]{\log|\Ucal|}$ are inevitable, and therefore our result for transdutive MCF is also tight up to constants. As for the inductive setting, we refer to the results in \citet{candes2009exact} that the bound with $\sqrt{(\sqrt{|\Dcal|}+ \sqrt{|\Ucal|})}\big/\sqrt{n_1}$ is not improvable. Notice that they assume a uniform distribution over the matrix indices, where our result is distribution-free. However, despite several minor discrepancies, their setting can be recognized as a special case of our problem, and thus we conjecture that our results for MCF can be further tightened to get rid of the $\log n$ dependency, e.g. by deriving the covering number for nuclear-norm-constraint matrices instead of using the existing result for low-rank matrices.

\section{Auxiliary Lemmas}
\label{sec:lemma}

\begin{lemma}[Adapted from \citet{cho2009kernel}]
\label{lemma:arc-cos-kernel}
Define the shorthand $\varsigma(u) := \frac{1}{2}(1+sign(u))$. For $\xbf, \ybf \in \Rbb^d$, the $n^{th}$ order arc-cosine kernel is defined as:
\[
K_n(\xbf, \ybf) = \frac{1}{\pi}\|\xbf\|_2^n \|\ybf\|_2^n J_n(\theta),
\]
where $J_n(\theta) = (-1)^n(\sin \theta)^{2n+1}\big(\frac{1}{\sin\theta}\frac{d}{d\theta} \big)^n \big(\frac{\pi-\theta}{\sin\theta} \big)$. Then the arc-cosine kernel has an equivalent integral representation:
\[
K_n(\xbf, \ybf) = 2\int d\wbf \frac{\exp(-\frac{\|\wbf\|_2^2}{2})}{(2\pi)^{d/2}}\varsigma(\wbf^{\intercal}\xbf) \varsigma(\wbf^{\intercal}\ybf) (\wbf^{\intercal}\xbf)^n (\wbf^{\intercal}\ybf)^n.
\]
\end{lemma}

For instance, when $n=0$, $K_0(\xbf, \ybf) = 1- \frac{1}{\pi}\cos^{-1}\frac{\xbf^{\intercal}\ybf}{\|\xbf\|_2 \|\ybf\|_2}$.

\begin{lemma}[Euler's Theorem for homogeneous functions]
\label{lemma:homogeneous}
If $f(\thetabf,\cdot)$ is L-homogeneous, then:
\begin{itemize}
    \item $\nabla f(\alpha \thetabf,\cdot) = \alpha^{L-1}\nabla f(\thetabf,\cdot)$,
    \item $\big\langle \thetabf,  \nabla f(\thetabf,\cdot) \big\rangle = L\cdot f(\thetabf,\cdot),$
\end{itemize}
if $f(\thetabf,\cdot)$ is differentiable. 
\end{lemma}
The proof for the Lemma is relatively standard, so we do not repeat it here.

\begin{lemma}[\citet{ledoux1991probability}]
\label{lemma:inductive-contraction}
Let $f:\Rbb_+ \to :\Rbb_+$ be convex and increasing. Let $\phi_i: \Rbb \to \Rbb$ satisfy $\phi_i(0)=0$ and is Lipschitz with constant L. Then for any $\Vcal \in \Rbb^n$:
\[
\Ebb f\Big(\frac{1}{2}\sup_{\vbf \in \Vcal} \Big|\sum_{i=1}^n\epsilon_i\phi_i(\vbf_i) \Big| \Big) \leq \Ebb f\Big(L\cdot\frac{1}{2}\sup_{\vbf \in \Vcal} \Big|\sum_{i=1}^n\epsilon_i\vbf_i \Big|\Big),
\]
where $\epsilon_i$ are the standard Rademacher random variables.
\end{lemma}
The similar contraction result in the transductive setting is given as below.

\begin{lemma}[Lemma 5 of \citet{el2009transductive}]
\label{lemma:transductive-contraction}
Consider $\Vcal \in \Rbb^{n_1+n_2}$. Let $f, g: \Rbb \to \Rbb$ be such that for all $1\leq i \leq n_1+n_2$ and $\vbf,\vbf' \in \Vcal$, $\big|f(\vbf_1) - f(\vbf'_1)\big| \leq L\big|g(\vbf_1) - g(\vbf'_1)\big|$, then:
\[
\Ebb\Big\{\sup_{\vbf \in \Vcal}\sum_{i=1}^{n_1+n_2}\epsilon_i(p)f(\vbf_i)\Big\} \leq \Ebb\Big\{L\cdot \sup_{\vbf \in \Vcal}\sum_{i=1}^{n_1+n_2}\epsilon_i(p)g(\vbf_i)\Big\},
\]
for any $p\in[0,1/2]$.
\end{lemma}

\begin{lemma}[Concentration of random matrices.]
\label{lemma:mat-ineq}
Let $\Xbf$ be a $m\times n$ matrix with $m>n$. 
\begin{itemize}
    \item By \citet{bandeira2016sharp}, if $\Xbf$ is composed of independent Rademacher random variables, then:
    \[
    \Ebb \|\Xbf\|_{sp} \lesssim \sqrt[4]{\log n}\sqrt{m}.
    \]
    \item By \citet{tropp2015introduction}, if $\Xbf$ is composed of independnet zero-mean random variables, then:
    \[
    \Ebb \|\Xbf\|_{sp} \lesssim \max_i\sqrt{\sum_{j}\Ebb \Xbf_{i,j}^2} + \max_j\sqrt{\sum_{i}\Ebb \Xbf_{i,j}^2} + \sqrt[4]{\sum_{i,j}\Ebb\Xbf_{i,j}^4}
    \]
\end{itemize}
\end{lemma}

\begin{lemma}
\label{lemma:append-lemma-importance-weighting}
Let $P$ and $Q$ be the training and target distribution supported on $\Scal_P, \Scal_Q \subseteq \Dcal$, and $w(u,i) = \frac{Q(u,i)}{P(u,i)}$, for $(u,i)\in \Scal_P \cap \Scal_Q$. $\Dcal_n$ consists of training instances sampled i.i.d from $P$.
Given a single hypothesis $f\in\Fcal$, suppose $w_{ui}\in (0,1)$, for any $\delta>0$, it holds with probability at least $1-\delta$ that:
\begin{equation*}
\Ebb_{Q}R(f) \leq \Ebb_{P_{n,w}}R(f) + \frac{2\log\frac{1}{\delta}}{3n} + \sqrt{\frac{2d(P\|Q)\log\frac{1}{\delta}}{n}},
\end{equation*}
where $d(P\|Q) = \int_{\Scal_Q} (dP / dQ) dP$.
\end{lemma}

The above result for importance weighting of a single hypothesis is stated in the Theorem 1 of \cite{cortes2010learning}.

\section{Experiment details}
\label{sec:add-experiment}

Both MCF and NCF are implemented in Tensorflow 2.3, and the computation infrastructure involves a Nvidia Tesla V100 GPU with 32 Gb memory. We provide a kernel SVM implementation using the python Scikit-learn package, and a CVX implementation using the Python CVXOPT API. The code is also provided as a part of the supplementary material. 

As we mentioned in the main paper, we use the log loss $\ell(u) = \log(1+\exp(-u))$ for all our experiments. The metrics we consider, i.e. the ranking AUC, top-k hitting rate and NDCG are computed from a scan over all the possible candidates. Since there is only one relevant item (the last interacted item) for each user, then according to \citet{rendle2019evaluation}, the metric computations are simplified to:
\begin{itemize}
    \item Suppose the ranking of the relevant item, among the whole set of candidate items: $\tilde{\Ical}(u) = \Ical  - \{i\, |\, (u,i) \in \Dcal_{\text{train}}\}$, is given by $r$ for user $u$. Then the ranking AUC for user $u$ is given by: AUC(u)$= \big(|\tilde{\Ical}(u)| - r\big) / \big(|\tilde{\Ical}(u)| - 1\big)$;
    \item The top-k hitting rate for user $u$ is: HR@k(u)$= \ind[r \leq k]$;
    \item the top-k NDCG for user $u$ is: NDCG@k(u)$= \ind[r \leq k]\frac{1}{\log_2(r+1)}$.
\end{itemize}
Then the overall metric is computed by taking the population average.  

Since a large proportion of our discussion surrounds the gradient descent, we use the SGD optimizer unless otherwise specified. 

\textbf{Experiment for Figure 1.}

We point out that the data is relatively small after the subsampling so that the performance can vary significantly across different sampled datasets. Therefore, we do not repeat the experiments on different sampled datasets, but on the different splits (of generating negative samples). We point out that sampling the movies by popularity is necessary, because otherwise, the obtained data will be very sparse and thus not representative of the original dataset. After sampling 200 movies and 200 users, we end up with approximately 30,000 records for training when setting the number of negative samples to 4. Notice that this already requires a 30,000$\times$30,000 matrix for the kernel method.

The experiment setting follows that of the transductive CF, where the negative samples are constructed via sampling without replacement, and the random splits are conducted before training. As for the positive label, we adopt the standard setting where the last user-item interaction is used for testing, and the rest are used for training. 
Notice that we do not need validation data in this case, because there are no tuning parameters as we fixed the dimensions and learning rate, and do not use regularizations of any kind.   

\textbf{Experiment for Figure 2.}

We first use CVXOPT to obtain the exact solution of the convex nuclear-norm max-margin problem in Theorem 2. The optimality is reached with the duality gap $\leq 1e^{-19}$. We use the same dataset generated for the experiments in Figure 1. As we stated before, we consider the unscaled $N(0,0.1)$ initialization, set the moderate width of $d=32$ and the learning rate of $0.1$. Again, the repetitions are over the random splits of the negative samples. The normalized margin for the nuclear-norm max-margin problem is obtained via: $\gamma^{\text{SVM}}_{u,i} = y_{u,i}{\Xbf}_{u,i} / \|{\Xbf}\|_*$, and the normalized margin for MCF is obtained via: $\gamma^{\text{MCF}}_{u,i} = y_{u,i}\langle \zbf_u, \zbf_i \rangle \big/ \|\Zbf_U \Zbf_I^{\intercal}\|_*$.

\textbf{Experiment for Figure 3.}

We use all the data for the inductive CF task, where the last user-item interaction is used for testing, the second-to-last is used for validation, and the rest are used for training. The negative samples are also obtained via sampling without replacement, where we fix the number of negative samples to 4 for each positive interaction. Due to the sampling without replacement, setting the number of negative samples per positive to a high value may not increase the total number of negative samples proportionally (e.g. a user may have watched 300 out of the 1,000 movies). Therefore, we do not tune the number of negative samples per positive.  

We select $d$ from $\{16, 32, 48\}$ for MCF, and $d \in\{16, 32, 48\}, d1 \in \{8, 16, 24\}$ for NCF (since we study the two-layer setting). We experiment with a learning rate of $\{0.01, 0.05, 0.1, 0.2\}$, and do not find a significant difference since we study the converged behavior after several thousand epochs. For illustration purpose, we use $0.1$ as the learning rate. We make the hyper-parameter selection over one run and fix it during the rest repetitions. We find $d=32$ and $d=32, d_1=16$ gives the best performance for MCF and NCF, as we reported in Figure 3. The results reported in Figure 1 are the average over 10 random splits of the negative samples (and random initializations).

\textbf{Experiment for Figure 4 and 5.}

The learning of the relevance mechanism, exposure mechanism and the final data generating mechanism are stated in Section 6. When learning the relevance and exposure mechanism, we do not conduct the train/test split, since this step aims to construct the mechanisms according to the data, rather than examining how the models fit the data. When the $g_{\text{rel}}$ and $g_{\text{expo}}$ are given by the MCF, we use $d=32$; and when they are given by the NCF, we use $d=32, d_1=8$. We do not tune these hyperparameters due to the same reason stated above. We use the mean squared-root error (MSE) and the binary cross-entropy loss when training the relevance and exposure models. Unlike training for the CF tasks, we use the Adam optimizer with a learning rate of 0.001, which we find to work well with the MSE. 

After we settle down with the learnt relevance and exposure mechanism, we generate the observed data according to the click model. Before that, we tune the $\mu$ and $\rho$ in the relevance model to ensure the generated data has about the same sparsity as the original data. Since neither MCF nor NCF leverage the sequential information, the order by which we generate the interacted items for a specific user is not important. After we generate the click data for all the user-item pairs, we sample from the positive and negative parts with replacement to construct the training, validation and testing data, according to the empirical data distribution (which is a uniform distribution over the indices). 

All the results reported in Figure 4 and 5 about NCF are from the concatenation. We observe that NCF with addition has very similar patterns in the inductive CF experiments, so we do not report its results to avoid repetition. For MCF, we select $d\in\{16,32,48\}$, and for NCF we select $d\in\{16,32,48\}, d_1\in\{8,16,24\}$. We also do not experiment on using regularizations here. We repeat the generation, training, evaluation process for ten times. Each time, we tune the hyperparameters according to the validation performance. The evaluation metric we report is the biased and unbiased ranking AUC, where the biased AUC is computed in the regular fashion, and the unbiased AUC is computed via: unbiased-AUC(u,i)$= \big(|\tilde{\Ical}(u)| - r(i)\big) / \big(|\tilde{\Ical}(u)| - 1\big)\cdot \frac{1}{p(O_{u,i}=1)}$, where $r(i)$ is the ranking of item $i$ in $\tilde{\Ical}(u)$.

\bibliographystyle{abbrvnat}
\bibliography{references}